\documentclass[12pt]{article}

\usepackage[small]{titlesec}
\usepackage{
    amsmath,
    amssymb,
    amsthm,
    graphicx,
    setspace,
    booktabs,
    subcaption,
    float
}
\usepackage{breakcites}
\usepackage{enumitem}

\usepackage[
    colorlinks=true,
    allcolors=LinkBlue,
    backref=page
]{hyperref}

\usepackage{lmodern}
\usepackage{multirow}
\usepackage[square,sort,comma,numbers]{natbib}
\usepackage{titling}
\usepackage{xcolor}
\usepackage{algpseudocode}

\allowdisplaybreaks

\AtBeginDocument{
    \abovedisplayskip=5pt plus 2pt minus 2pt
    \belowdisplayskip=\abovedisplayskip
    \abovedisplayshortskip=2pt plus 2pt minus 2pt
    \belowdisplayshortskip=\belowdisplayskip
}

\titlespacing*{\section}{0pt}{*2}{*1}
\titlespacing*{\subsection}{0pt}{*2}{*1}

\bibpunct{(}{)}{;}{a}{}{,}
\setlist{noitemsep,topsep=0pt}
\definecolor{LinkBlue}{rgb}{.15,.25,.85}

\binoppenalty=\maxdimen
\relpenalty=\maxdimen

\makeatletter
\renewcommand*{\NAT@spacechar}{~}
\makeatother

\usepackage{algorithm}
\makeatletter

\newfloat{algorithm}{tbp}{loa}
\providecommand{\algorithmname}{Algorithm}
\floatname{algorithm}{\protect\algorithmname}

\makeatother

\newtheorem{theorem}{Theorem}

\newtheorem{lemma}{Lemma}

\newtheorem{prop}{Proposition}

\def \bP {\mathbb{P}}
\def \bE {\mathbb{E}}

\def \bN {\mathbb{N}}

\def \row {\mathbf{r}}
\def \col {\mathbf{c}}
\def \rmax {\row_\mathsf{max}}
\def \rmin {\row_\mathsf{min}}
\def \cmax {\col_\mathsf{max}}
\def \cmin {\col_\mathsf{min}}

\usepackage{xspace}

\newcommand{\TV}{{\sf TV}}
\newcommand{\gap}{\operatorname{gap}}

\newcommand{\Bern}{\text{Bern}}

\newcommand{\Binom}{{\mathsf{Binom}}}

\newcommand{\indc}[1]{{\mathbf{1}_{\left\{{#1}\right\}}}}
\newcommand{\Indc}{\mathbf{1}}

\newcommand{\calO}{{\mathcal{O}}}

\newcommand{\mix} {\textsc{mix~}}

\newcommand{\snakep} {\textsc{Snake}^{+}}
\newcommand{\swap} {\textsc{Swap~}}

\newcommand{\snake} {\textsc{Snake~}}

\newcommand{\dsnake} {D-\textsc{Snake~}}

\newcommand{\tin} {\text{in}}
\newcommand{\out} {\text{out}}
\newcommand{\biparG} {G_{\text{Bi}}}
\newcommand{\bisimpG} {G_{\text{Bi,sim}}}

\begin{document}

\def\spacingset#1{\renewcommand{\baselinestretch}{#1}\small\normalsize} \spacingset{1}

{
  \title{\bf The Snake Algorithm: A Rejection-Free Sampler for Binary Matrices with Fixed Margins}
 \author{
    Zipei Nie\thanks{Department of Mathematics, University of Illinois Urbana-Champaign,
    Urbana, IL 61801, USA.
    Email: \texttt{znie@illinois.edu}}
    \and
    Guanyang Wang\thanks{Department of Statistics, Rutgers University--New Brunswick,
    Piscataway, NJ 08854, USA.
    Email: \texttt{guanyang.wang@rutgers.edu}}
    \and
    Peng Zhang\thanks{Department of Computer Science, Rutgers University--New Brunswick,
    Piscataway, NJ 08854, USA.
    Email: \texttt{pz149@rutgers.edu}}
  }

  \maketitle
}

\bigskip
\begin{abstract}
We study uniform sampling of binary matrices with fixed row and column sums, a recurring problem in ecological null models, Rasch-model testing, network analysis, and combinatorics. We propose the \snake algorithm, a rejection-free Markov chain Monte Carlo sampler that grows an alternating path until its first self-intersection and flips the resulting loop. The chain is reversible and irreducible on the fixed-margin state space, hence has the uniform stationary distribution. We prove that one step flips on the order of $\sqrt n$ entries in sparse and balanced square regimes, give upper bounds on the per-step path length, and show that the resulting work per flipped entry is rate optimal in sparse and balanced regimes and near-optimal up to a polylogarithmic factor under a one-sided half-balanced condition. A Markov-chain comparison, combined with the recently established universal spectral-gap bound for the swap chain, proves that the lazy \snake chain is rapidly mixing for every feasible pair of margins; in the permutation-matrix case, the raw chain has the sharp total-variation mixing time $\Theta(\sqrt n\log n)$. We also describe a directed-graph extension and an equal-margin label-shuffling variant.  Numerical experiments against Swap, Rectangle Loop, Curveball, sequential importance sampling, and a directed edge-swap algorithm show consistent gains in move size, wall-clock convergence, and sampling efficiency.
\end{abstract}

\noindent {\it Keywords:} Binary matrix; fixed margins; Markov chain Monte Carlo; contingency table; degree sequence; Rasch model.
\vfill

\newpage
\spacingset{1.9} \section{Introduction}\label{sec:intro}
Binary matrices are commonly used across multiple scientific domains, such as ecology, psychology, statistics, operations research, and computer science, to represent binary relations between two sets of objects. A key problem in this field is generating binary matrices with fixed row and column sums, which is at the heart of many hypothesis testing problems. 

For example, researchers use a binary matrix to represent the species--habitat combinations when studying species co-occurrence. A one (resp. zero) in the $(i,j)$-th entry corresponds to the presence (resp. absence) of the $i$-th species in the $j$-th habitat (Figure \ref{fig:bipartite-matrix}). Given an observed co-occurrence matrix, ecologists have long debated whether certain observed patterns reveal that the competition between species influences their distribution or they can simply be explained by random chance. To answer the question, they compare metrics in the observed matrix with those in uniformly sampled binary matrices with observed row and column sums (a null model). Fixing the row and column sums accounts for natural variation in species distribution and resource availability in habitats \citep{joppa2010nestedness}. 
\begin{figure}[H]
		\centering
		\includegraphics[scale=0.32]{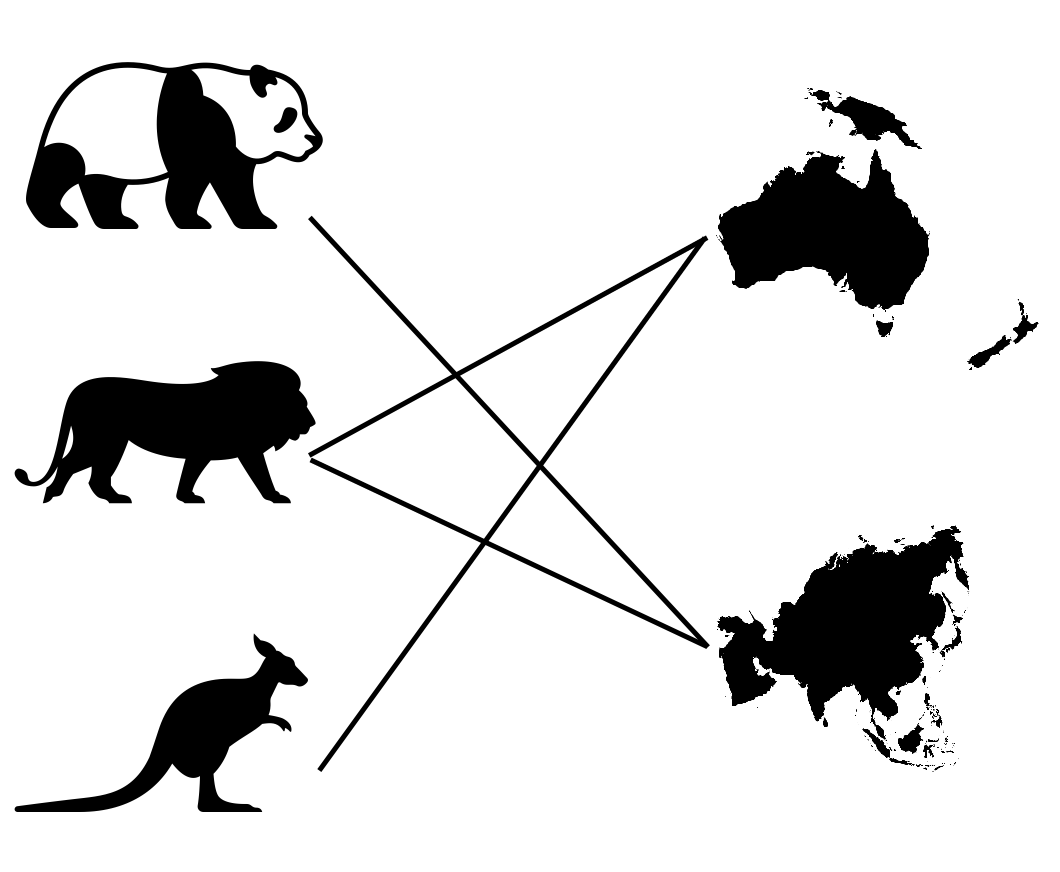}\qquad
		\includegraphics[scale = 0.3]{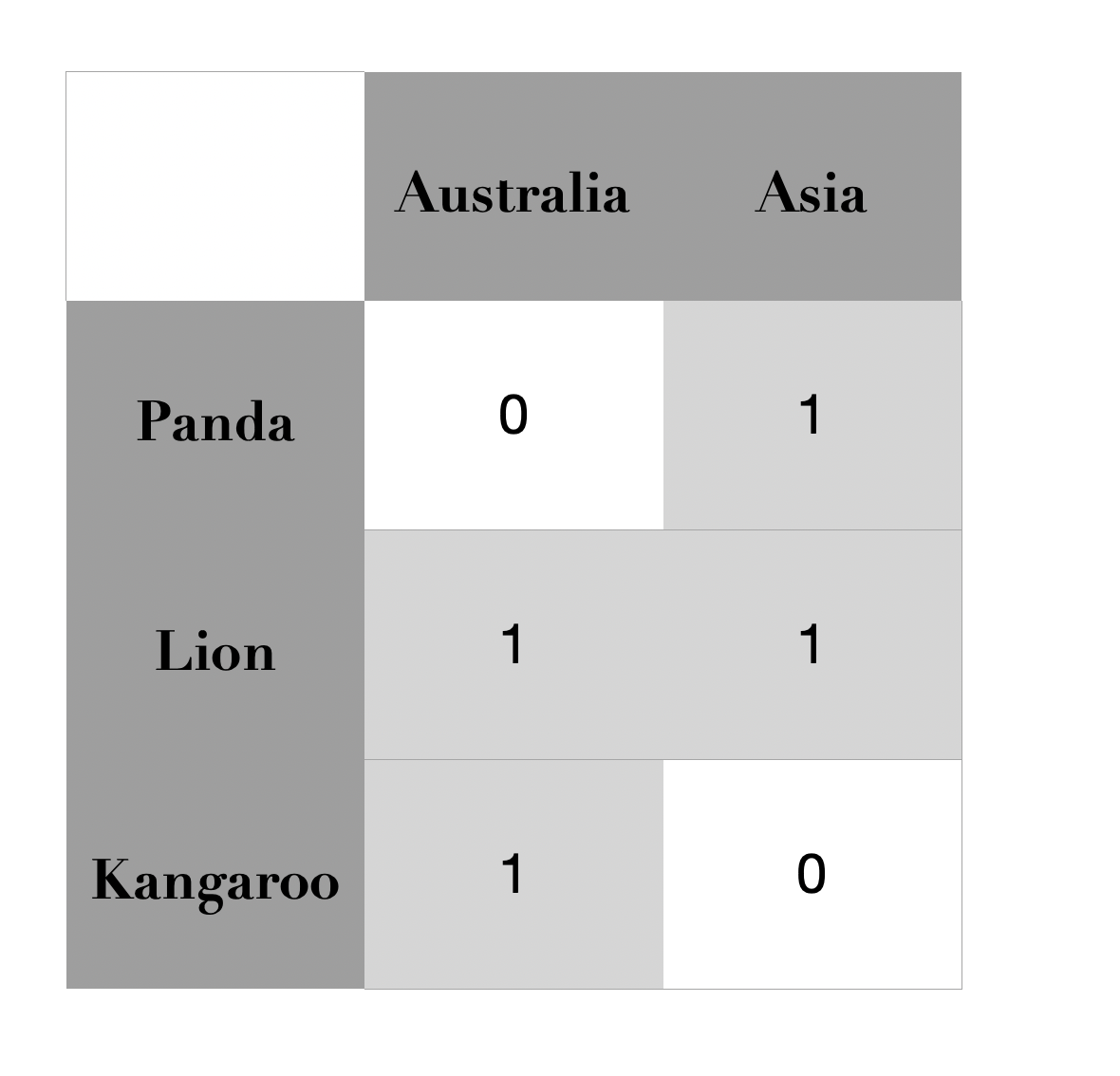}
		\caption{An example of the species-habitat relation and its binary matrix representation. Left: bipartite representation with $m = 3$ species and $n = 2$ habitats. An edge represents the presence of one species in the corresponding habitat. Right: binary matrix representation. Each row corresponds to a species, and each column corresponds to a habitat.}
		\label{fig:bipartite-matrix}
	\end{figure}

Another important application is in testing the Rasch model \citep{rasch1993probabilistic}. This model assumes that the entries in a binary matrix $M$ are mutually independent, and the probability of a cell $(i,j)$ being $1$ is given by 
\begin{align*}
    \bP(M_{i,j} = 1) = \frac{\exp(\theta_i - \beta_j)}{1 + \exp(\theta_i - \beta_j)},
\end{align*}
where $\theta_i$ and $\beta_j$ are model parameters. The Rasch model is widely used in various fields, such as psychometrics, education, agriculture, and healthcare. For instance, in a test with $m$ students and $n$ problems, the $(i,j)$-th entry of $M$ is 1 if the $i$-th student answers the $j$-th problem correctly and 0 otherwise. Here, $\theta_i$ represents the ability of the $i$-th student, and $\beta_j$ denotes the difficulty of the $j$-th problem. Notably, all binary matrices with fixed row and column sums have the same probability under the Rasch model. Therefore, efficient uniform sampling algorithms can be used for reliable goodness-of-fit tests of the model \citep{besag1989generalized, glas1995testing, ponocny2001nonparametric}. 

Sampling binary matrices with fixed row and column sums also has applications in network reconstruction from partial information (e.g., interbank networks) \citep{glasserman2022maximum}, differential privacy (e.g., releasing permuted data while maintaining aggregated statistics) \citep{gong2020congenial}, and theoretical computer science (e.g., transforming undirected sparse graphs to expanders) \citep{allen2016expanders,giakkoupis2022expanders}.

One possible way to sample a binary matrix with fixed row and column sums is to enumerate all matrices that satisfy the margin constraints and sample uniformly from this set. However, the number of binary matrices with fixed row and column sums usually grows exponentially with the matrix's dimensions, making it impractical to calculate the sample space's size \citep{barvinok2010number,barvinok2012matrices}. Therefore, it is typically not feasible to exhaustively enumerate all the matrices that satisfy the margin constraints.

Practically-efficient algorithms  can be broadly classified into two categories. The first group of algorithms generates independent samples which are not necessarily uniformly distributed. Examples include \cite{chen2005sequential, miller2013exact} and the references therein. These non-uniformly distributed samples can be used to estimate many statistics of interest such as the cardinality of the matrices with given margin sums. The second category of algorithms includes Markov Chain Monte Carlo (MCMC) methods with the uniform fixed-margin distribution as their target; after aperiodicity corrections such as laziness, their draws converge asymptotically to that target. In this category, the swap algorithm \citep{besag1989generalized, diaconis1998algebraic} is studied most extensively. Other methods \cite{rao1996markov, verhelst2008efficient,strona2014fast,wang2020fast} have been developed with limited theoretical results. Many classical swap- and rectangle-loop-type methods make local flips on the space of binary matrices with fixed margins, which can become inefficient when the row and column sums are sparse.  For example, the swap algorithm takes $\Omega(mn)$ steps to flip four entries of an $m\times n$ sparse matrix. On the other hand, most large binary matrices appearing in practice are sparse \citep{vasques2020transitivity}. This motivates us to design  efficient algorithms that can handle high-dimensional and sparse matrices.

Unlike local-flip methods, our new algorithm finds a random, swappable loop at every step and flips every entry in the loop. The \snake algorithm has two new features: it is rejection-free, which guarantees successful flipping at every iteration, and the number of flipped entries typically increases with the matrix's size. In the case of sparse matrices, one step of the algorithm can flip on the order of $\sqrt{\min\{m,n\}}$ entries, substantially larger than the $O(1)$ or $O(1/mn)$ entries flipped by swap- and rectangle-loop-type methods. For general fixed margins, our comparison with the classical swap chain, together with the universal swap-chain spectral-gap bound of \citet{fu2026spectral}, proves that the lazy \snake algorithm is rapidly mixing for every feasible pair of margins. We also prove a sharper, problem-specific result in the extreme sparse case $\row=\col=\mathbf{1}_n$, where the state space is the symmetric group $S_n$ and the raw \snake chain mixes in $\Theta(\sqrt n\log n)$ steps. Additionally, we discuss two natural variants:
\begin{itemize}
    \item We evaluate an equal-margin label-shuffling variant, denoted $\snakep$. It interleaves ordinary \snake moves with random permutations of rows and columns within equal-margin classes. A related symmetry strategy has been used in \citep{miller2013exact} for counting matrices with fixed margins.
    \item Since every binary matrix is the adjacency matrix of a bipartite graph (see Figure \ref{fig:bipartite-matrix} for an example), sampling binary matrices with fixed row and column sums is equivalent to sampling bipartite graphs with fixed degree sequences. Similarly, every square matrix with zero diagonal entries is the adjacency matrix of a directed graph. Hence, by making a simple modification to the original algorithm, we propose the Directed-Snake (\dsnake) algorithm for sampling simple directed graphs with fixed in- and out-degree sequences.
\end{itemize}

After a brief discussion of the existing methods and their limitations, we introduce our algorithm and its variants in Section \ref{sec: Snake algorithm}. Section \ref{sec: Snake algorithm} gives the construction and the main implementation details. Section \ref{sec:theoretical-analysis} gives theoretical guarantees for correctness, the number of flipped entries per step, computational cost, and mixing, including the sharp permutation-matrix bound. The full proofs are given in Appendix \ref{app:proofs}. Section \ref{sec:numerical} presents numerical experiments comparing our method with existing MCMC and sequential importance sampling methods.

\section{The Snake Algorithm}\label{sec: Snake algorithm}

To start, we introduce the notations that will be used throughout the paper. We preserve the letters $m$ and $n$ to denote the number of rows and columns of the matrix, respectively. We denote the $i$-th row of the matrix as $R_i$ (upper-case), and its corresponding row sum as $r_i$ (lower-case). Similarly, we use $C_j$ (upper-case) to label the $j$-th column and $c_j$ (lower-case) to denote its sum. Given row sums $\row = (r_1, r_2, \ldots, r_m) \in \bN^m$ and column sums $\col = (c_1, c_2, \ldots, c_n) \in \bN^n$, we use $\Sigma(\row, \col)$ to denote the set of all binary matrices with row sums $\row$ and column sums $\col$. The celebrated Gale-Ryser Theorem \citep{gale1957theorem,ryser1957combinatorial} provides necessary and sufficient conditions for $\Sigma(\row, \col)$ to be non-empty and gives an algorithmic way to generate an instance from $\Sigma(\row, \col)$. We denote the uniform distribution on $\Sigma(\row, \col)$ by $U(\Sigma(\row, \col))$. Without loss of generality, we assume that each $r_i$ is between $1$ and $n-1$ and each $c_j$ is between $1$ and $m-1$ because, in the case where they are not, the corresponding row/column is completely determined by its summation, and the user can delete the degenerate row/column. We also define $\rmax := \max_{1\le i\le m} r_i,$ $\rmin:= \min_{1\le i\le m} r_i,$ $\cmax := \max_{1\le j\le n} c_j,$ and $\cmin:= \min_{1\le j\le n} c_j$ for the maximum row sum, the minimum row sum, the maximum column sum, and the minimum column sum, respectively.
	
Given two positive sequences $a(n),b(n)$, we say $a(n) \asymp b(n)$ or $a(n) = \Theta(b(n))$ if there are positive constants $c_1, c_2 > 0$ such that $c_1 b(n) < a(n) < c_2 b(n)$. We say $a(n) = \Omega(b(n))$ if $a(n) > c_1 b(n)$, and $a(n) = \calO(b(n))$ if $a(n) < c_2 b(n)$. Constants in asymptotic notation may depend on fixed regularity parameters such as degree bounds, balance constants, comparability constants in $m\asymp n$, and fixed error tolerance, but not on the growing matrix dimensions unless stated otherwise. For each positive integer $k$, we write $[k]$ for the set $\{1,\ldots, k\}$. Fix a matrix $M$, we write $M(i,j)$ for the $(i,j)$-entry of $M$, $M(i,\cdot)$ for the $i$-th row and $M(\cdot, j)$ for the $j$-th column. Given an ordered/unordered set $A$ in $[m]$, we write $M[A, \cdot]$ for the ordered/unordered list of rows in $M$ with indexes in $A$. Similarly, we define $M[\cdot, B]$ in the same way for $B$ in $[n]$.
\subsection{Motivation: From swap to loop-finding}
Starting from an initial matrix, $M_0 \in \Sigma(\row,\col)$, each step of the swap algorithm uniformly samples two rows and two columns without replacement. If the resulting two by two matrices is a checkerboard unit, i.e.,
\begin{align*}
     \left(
\begin{array}{cc}
    	1 & 0 \\
		0 & 1 
\end{array} \right)  \qquad \text{or} \qquad
\left(
\begin{array}{cc}
    	0 & 1 \\
		1 & 0
\end{array} \right),
\end{align*}
then the algorithm swaps one checkerboard to the other; otherwise, do nothing. This transformation  always maintains the margin sums. It is also proven that this algorithm defines a reversible Markov chain on $\Sigma(\row,\col)$ with uniform stationary distribution \citep{diaconis1998algebraic}.

The swap algorithm, while mathematically correct, may not be efficient for practical use. Two observations support this notion. First, there is a small number of flipped entries: During each step, the matrix $M_t$ either remains the same as $M_{t-1}$ or differs by four entries in a $2\times 2$ matrix. This means that regardless of the matrix size, the maximum number of flipped entries is four. Second, there is a high rejection rate: Swapping only occurs if the randomly selected four entries consist of exactly two ones and two zeros. When dealing with sparse or full original matrices, it may take around $m\times n$ iterations to achieve a single successful swap. These two facts indicate that the algorithm's scalability with matrix size is poor. Considering the permutation matrices as an extreme example. The swap algorithm takes $(n^2-n)/2$ iterations in expectation to make one single swap. Meanwhile, the swap algorithm on permutation matrices is equivalent to a $(1 - 1/\binom{n}{2})$-lazy random transposition walk on the permutation group $S_n$. It follows from Markov chain literature \citep{diaconis1981generating} that this algorithm has a mixing time of $n^3\log n/4 + cn^3$. Therefore, it takes around $1.15\times 10^6$ steps to mix a $100\times 100$  matrix; and more than $1.72 \times 10^9$ steps to mix a $1000\times 1000$ matrix.  These examples highlight the scalability issues  with increasing matrix size.

Based on this observation, the rectangle loop algorithm \citep{wang2020fast} aims to improve the mixing time, as illustrated in Figure \ref{fig:swap-rectangle}. Instead of  simultaneously selecting two rows and two columns, each step of the rectangle loop algorithm first chooses one row and one column, assuming the coordinate is $(R_2, C_2)$. Assuming the corresponding random element equals $1$,  the algorithm then samples a $0$ (assuming in column $C_4$) among all the $0$s in the same row. Then the algorithm samples a $1$ (assuming in row $R_5$) among all the $1$s in $C_4$. Finally, one checks the entry $(R_5,C_2)$. If it is $0$, then a checkerboard is found, and we swap it. Otherwise, we do nothing. Details of the algorithm can be found in \cite{wang2020fast}.

\begin{figure}[H]
		\centering
		\includegraphics[width=\textwidth]{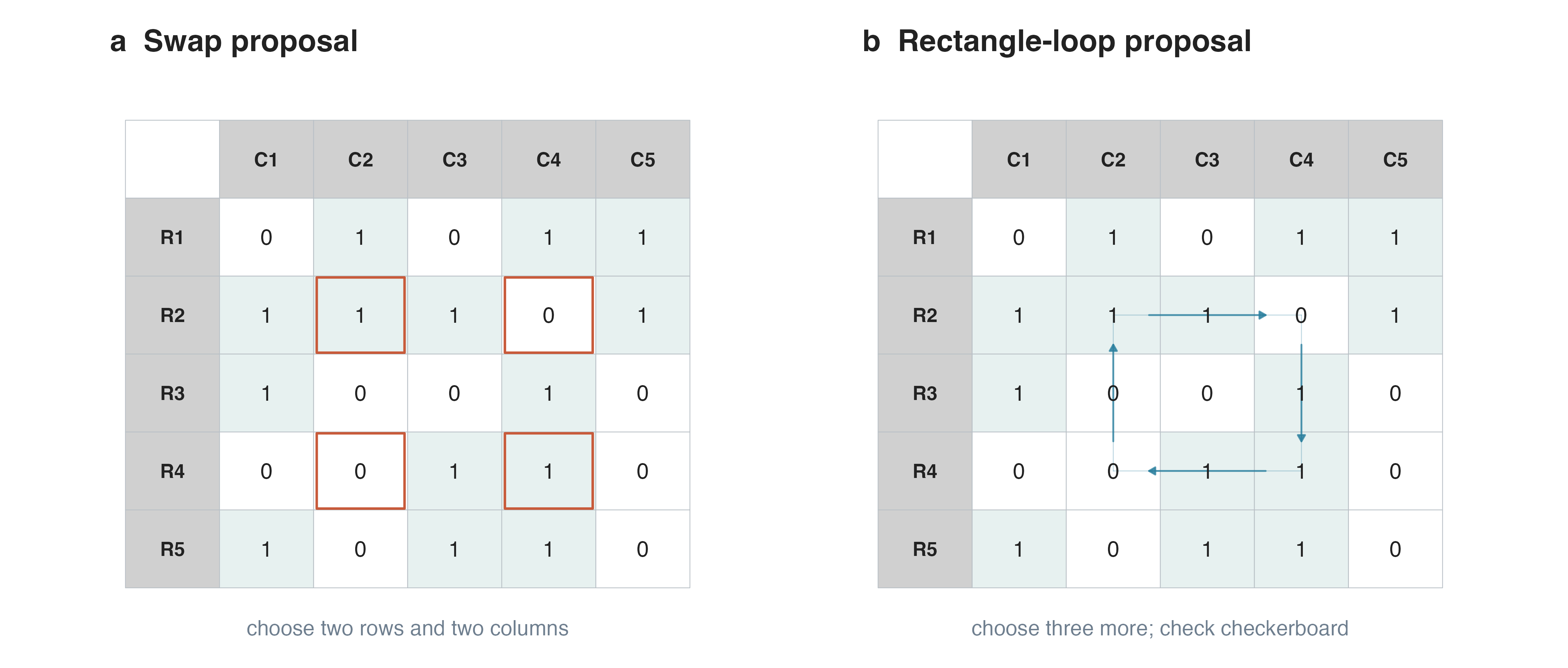}
		\caption{Comparison between swap and rectangle-loop proposals. Panel a shows the classical swap proposal, which chooses two rows and two columns simultaneously and succeeds only if the induced $2\times 2$ block is a checkerboard. Panel b shows the rectangle-loop proposal, which conditions each subsequent choice on the previous row or column and then checks whether the final cell closes a checkerboard.}
		\label{fig:swap-rectangle}
	\end{figure}

Despite the higher swap probability, the rectangle loop algorithm still often fails when applied to sparse or full matrices. Additionally, even upon successful execution, it still only flips four entries. To overcome these limitations, our new algorithm described below generates longer loops that guarantee the successful flipping of all entries within the loop.

\subsection{The Snake Algorithm}
Given a binary matrix $M$, we define an alternating loop (or loop) as an ordered sequence of distinct entries $\{(R_{i_1}, C_{i_1}), (R_{i_2}, C_{i_2}), \ldots, (R_{i_{2k}}, C_{i_{2k}})\}$ satisfying the following two conditions:
\begin{itemize}
    \item Consecutive entries have alternating values, i.e.,:
    $$M(R_{i_1}, C_{i_1}) = 1 - M(R_{i_2}, C_{i_2}) = M(R_{i_3}, C_{i_3}) =  1 - M(R_{i_4}, C_{i_4}) \ldots   $$
    \item Consecutive entries are in the same row or column alternatively, i.e., the entries satisfy one of the two patterns:
   \begin{itemize}
       \item  $R_{i_1} = R_{i_2}, ~C_{i_2} = C_{i_3},~ R_{i_3} = R_{i_4}, ~ C_{i_4} = C_{i_5},  \ldots, ~C_{i_{2k}} = C_{i_1}$, or
       \item $C_{i_1} = C_{i_2}, ~R_{i_2} = R_{i_3},~ C_{i_3} = C_{i_4}, ~ R_{i_4} = R_{i_5}, \ldots, ~R_{i_{2k}} = R_{i_1} $
   \end{itemize}
\end{itemize}

An oriented checkerboard is an example of a loop with a length of $4$. However, loops can have a much longer length. For instance, in Figure \ref{fig:different loops}, there are three different loops with lengths $4, 6,$ and $8$.

\begin{figure}[!htbp]
		\centering
		\includegraphics[width=0.95\textwidth]{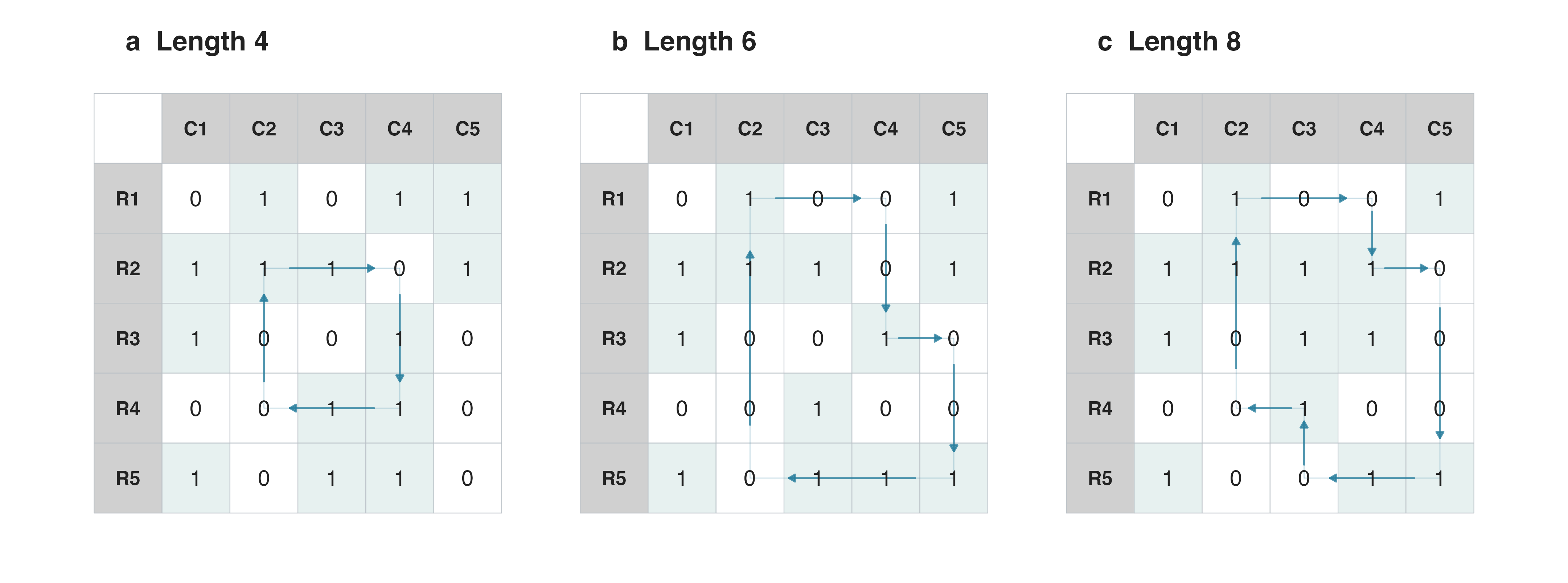}
		\caption{Alternating loops of lengths $4,6$, and $8$. Each highlighted cycle alternates between one- and zero-entries and between row and column moves, so flipping the highlighted cells preserves every row and column sum.}
		\label{fig:different loops}
	\end{figure}

A key feature of the alternating loop is that flipping all the entries in the loop will not change the row and column sums. Therefore, the main idea of our \snake algorithm is to find an alternating loop, and then flip all the entries. Our algorithm is  described below:
\begin{algorithm}[htbp]
	\caption{One step of the \snake algorithm}\label{alg:Snake}
	
	\begin{algorithmic}
		
		\State \textbf{Input:} 
		\begin{itemize}
			\item Row sums $\row$ and column sums $\col$
			\item Current state $M \in \Sigma(\row,\col)$
			\item Trajectory $T = \varnothing $
		\end{itemize}
		\State Choose one row and one column $S_1 = (R_1, C_1)$ uniformly at random, add $S_1$ into $T$
		\State Set $K = 1$
        \While{\textsc{True}}
         \If{$M(R_K, C_K) = 1$}
         \State Sample $C_{K+1}$ uniformly from the set $\{1\leq j \leq n: M(R_K, j) = 0\}$
         \State Add $S_{K+1} = (R_K, C_{K+1})$ into $T$
         \Else
         \State Sample $R_{K+1}$ uniformly from the set $\{1\leq i \leq m: M(i, C_K) = 1\}$
          \State Add $S_{K+1} = (R_{K+1}, C_{K})$ into $T$
         \EndIf
         \State $K \gets K+1$
         \If{$(S_{K'}, S_{K'+1}, \ldots, S_{K})$ forms an alternating loop for some $1\leq K'< K$}
         \State Update $(M(S_{K'}), M(S_{K'+1}), \ldots, M(S_{K})) \gets 1 - (M(S_{K'}), M(S_{K'+1}), \ldots, M(S_{K})) $
         \State \textbf{Break}
         \EndIf

		\EndWhile

		\State\textbf{Return: The updated $M$} 
	\end{algorithmic}
\end{algorithm}

The algorithm is called \snake because each step is similar to the classical video game Snake. In the game, the snake continually gets longer as it moves. The game finishes when the snake runs into itself, i.e., forms a loop. Figure \ref{fig:snake} provides an illustrative example using $5\times 5$ binary matrices.

\begin{figure}[H]
		\centering
		\includegraphics[width=0.95\textwidth]{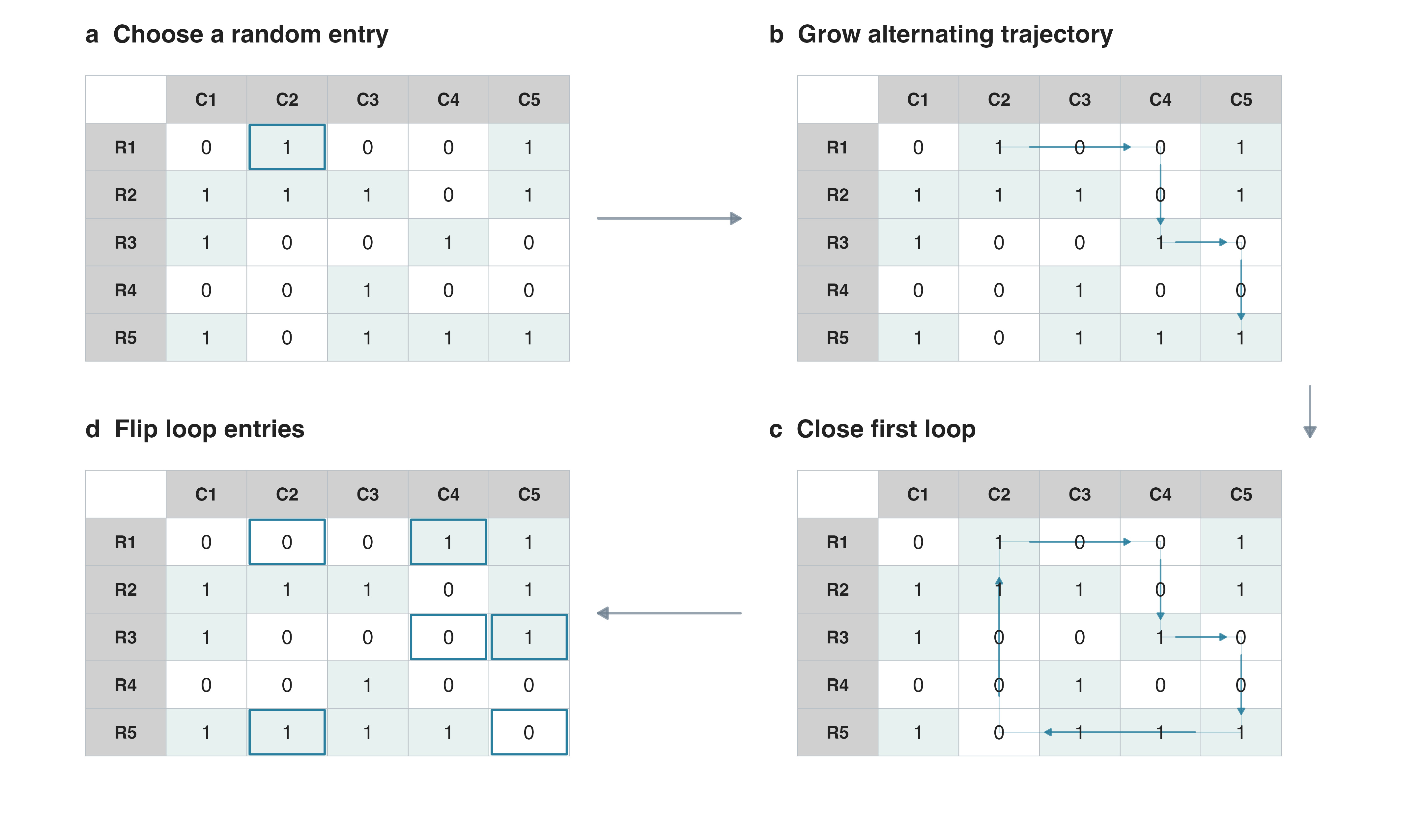}
		\caption{One transition of Algorithm \ref{alg:Snake}. Starting from a uniformly chosen entry, the trajectory alternates between row and column moves through entries of opposite value until the first self-intersection closes an alternating loop. The algorithm then flips all entries on that loop.}
		\label{fig:snake}
	\end{figure}

Now we dive into Algorithm \ref{alg:Snake} and explain why it is rejection-free (or why it will always find a loop). Given a trajectory $(S_1, S_2, \ldots, S_T)$ without loop, let us assume without loss of generality that $M(S_T) = 1$. Then $S_{T+1} := (R_T, C_{T+1})$ shares the same $x$-coordinate with $S_T$. Meanwhile, we can observe that the new trajectory $(S_1, S_2, \ldots, S_{T+1})$ contains a loop if and only if there is some $T'\in \{1,2,\ldots, T\}$ such that $S_{T'}$ has the same $y$-coordinate as $S_{T+1}$. However, there are only finitely many rows and columns. Therefore the while loop will terminate after a finite number of steps; equivalently, a repeat occurs after at most $m+n+1$ visited entries. Meanwhile, the alternating loop will always be unique when the algorithm stops. 

Next, we briefly discuss some theoretical properties and implementation details of the \snake algorithm. Most discussions will be informal; formal statements and proofs are given in Section \ref{sec:theoretical-analysis} and Appendix \ref{app:proofs}. We start with the correctness of the \snake algorithm. Let $\bP_\snake$ be the Markov transition kernel described by Algorithm \ref{alg:Snake}. One can show $\bP_\snake(\cdot,\cdot)$ is a symmetric function on $\Sigma(\row,\col)\times \Sigma(\row,\col)$. This fact, in turn, implies the \snake algorithm is a reversible, irreducible Markov chain that leaves $U(\Sigma(\row,\col))$ as stationary distribution. 

Then we  discuss the number of flipped entries per step. Denote by $C$ the number of inner-loop iterations for one step of the \snake algorithm and by $L$ the number of flipped entries. The value of $L$ is a direct move-size diagnostic: larger values mean that a single transition makes a more nonlocal change, although actual mixing remains margin- and statistic-dependent.
Though $L=O(C)$ and $C\leq m+n+1$, getting good estimates on the order of $L$ and $C$ is  challenging due to the complex combinatorial structure of $\Sigma(\row,\col)$. The subsequent estimates are derived from a series of combinatorial and probability analyses, which are much sharper than the na\"ive bounds:
\begin{itemize}
\item (No assumption) For any  $\row, \col$ such that $\Sigma(\row,\col) \neq \varnothing$, $\bE(L) = \calO(\sqrt{m+n}\log(m+n))$.
    \item (Sparse) When the row sums are relatively sparse in the sense that $\rmax = \calO(1)$, we have $\bE(L) = \Omega(\sqrt n)$. Similarly  $m-\cmin = \calO(1)$ implies $\bE(L) = \Omega(\sqrt m)$.
    \item (Relatively balanced) When every active column has at least a fixed fraction of ones and every active row has at least a fixed fraction of zeros, in the sense that $m \asymp n, \cmin \geq \delta m, \rmax \leq \Delta n$ for $0 <\delta <\Delta < 1$, we have  $\bE(L) \asymp \bE(C) \asymp \sqrt n$.
    \item (Half-balanced) When either every active row has fewer than half ones, $\rmax\le \delta n$ with $\delta<1/2$, or every active column has more than half ones, $\cmin\ge \Delta m$ with $\Delta>1/2$, the work per flipped entry is near-optimal up to a polylogarithmic factor.
\end{itemize}

Therefore,  the number of flipped entries grows at the rate of $\sqrt{\min\{m,n\}}$ in the sparse and relatively balanced square regimes covered by the theorems below. In contrast, existing local-flip algorithms flip at most $\calO(1)$ entries, and often only $O(1/(mn))$ or $O(1/\max\{m,n\})$ expected entries per attempted step when the matrix is sparse. This gives the \snake chain substantially larger nonlocal moves in regimes where local methods move rarely. 

Finally, we comment on the computational cost of Algorithm \ref{alg:Snake}. Suppose Algorithm \ref{alg:Snake} terminates after $C$ inner-loop iterations. With dense indexed rows and columns, or maintained candidate lists for the row-zero and column-one sampling sets, the sampling plus swapping cost is $\Theta(C)$. It remains to analyze the cost of checking whether a loop exists at each iteration. One can maintain two auxiliary hash tables recording the previously seen $x$ and $y$ coordinates and their appearance times. Since checking the existence of a loop is equivalent to checking if a coordinate has appeared before, at each iteration the implementation needs at most one search and one insert in the current hash table, both $\calO(1)$ operations. Under this implementation model, the total computational cost for running one step of the \snake algorithm once is $\calO(C)$.

\subsection{Extensions}\label{subsec: extension}

\subsubsection{Directed graphs with fixed degree sequence}
Our method can be naturally generalized to sample simple directed graphs with fixed degree sequences, a problem that finds widespread applications in network analysis \citep{kim2012constructing,berger2010uniform}. Given a simple directed graph $G$ with $n$ nodes labelled by $\{1,2,\ldots, n\}$, its adjacency matrix is a  square matrix $A$ with $A(i,j) = 1$ if there is an edge from $i$ to $j$, and $0$ otherwise.  The diagonal elements in $A$ are all zero as there is no self-loop. The sums of the rows, denoted as $r$, and the sums of the columns, denoted as $c$, of matrix $A$ correspond to the out-degree and in-degree sequences of graph $G$, respectively. Therefore, sampling simple directed graphs with fixed in and out-degree sequences is equivalent to sampling binary matrices with fixed margin sums and zero diagonal entries. Towards this goal, we propose the \dsnake algorithm based on the \snake algorithm, which we now describe in detail. 

\begin{figure}[H]
		\centering
		\includegraphics[width=\textwidth]{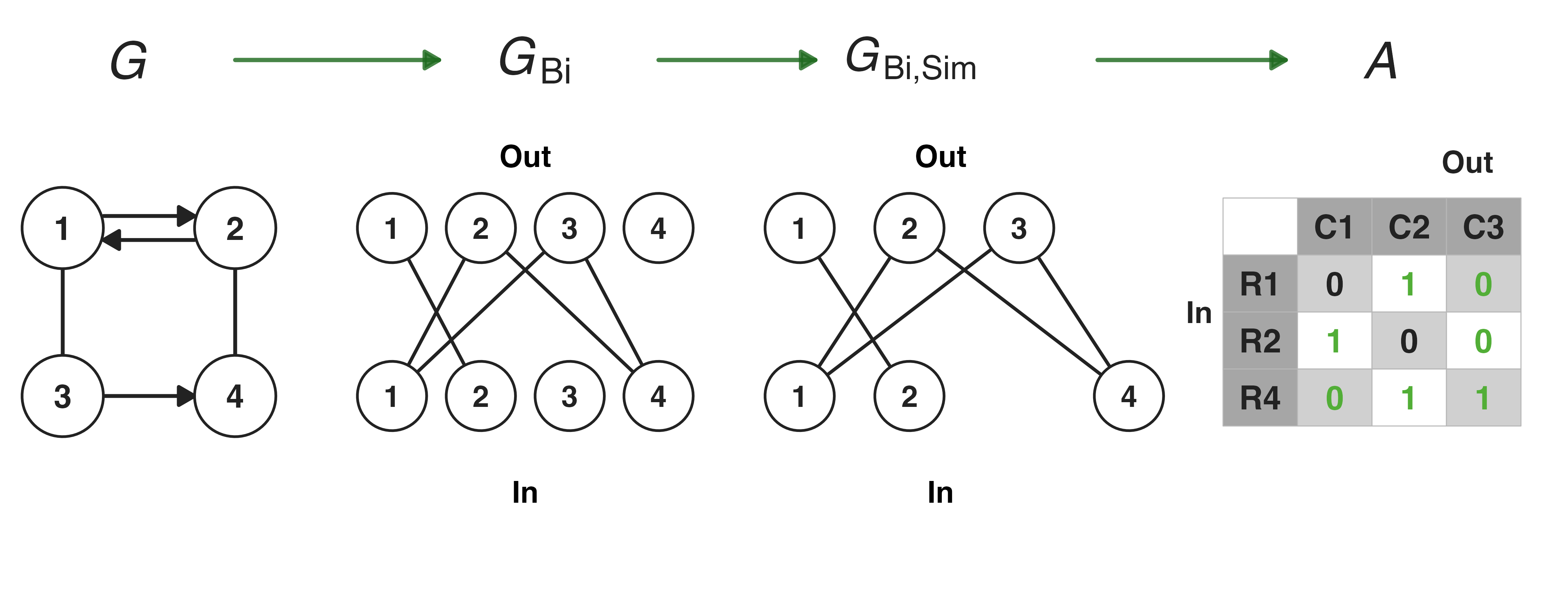}
		\caption{Preprocessing for the directed-graph extension. A simple directed graph is represented as a bipartite graph between out-labels and in-labels, redundant labels are removed, and the remaining fixed-degree problem is represented as a binary matrix with label-diagonal cells forbidden. The black diagonal zero entries mark structural no-self-loop constraints that cannot be flipped during sampling.}
		\label{fig:preprocessing}
	\end{figure}

Given a directed graph $G$, we first represent it using a bipartite graph $\biparG$ with $2n$ nodes, $n$ nodes in each part. Every node $i$ in $G$ is associated with two nodes $i_\tin$ and $i_\out$ in the two parts of $\biparG$. Any directed edge $i\rightarrow j$ in $G$ is then represented by an undirected edge connecting $i_\out$ and $j_\tin$. Now every node in $\biparG$ has at most $n-1$ edges, and there is no edge connecting $i_\out$ and $i_\tin$ for every $i\in G$. Meanwhile, there may be redundant nodes in the sense that they connect to zero or all the other $(n-1)$ nodes in the other part. The removal is performed iteratively. Whenever a node is deleted, the residual degrees of the remaining nodes are updated, and the process continues until no redundant nodes remain. After removing the redundant nodes, we have row and column label sets $U,V\subset [n]$ representing the remaining labels in the two parts. The simplified bipartite graph $\bisimpG$ is then a subgraph of $\biparG$, with two parts $U$ and $V$. Finally, we construct the $\lvert U\rvert \times \lvert V \rvert$ dimensional adjacency matrix of $\bisimpG$. In particular, we  index every row and column according to the elements in $U$ and $V$ (instead of $[\lvert U\rvert]$ and $[\lvert V \rvert]$). The resulting binary matrix contains all the information for the degree sequences. Every row $i \in U$ has at least one `$1$', and at most $\left(\lvert V \rvert - 1 - \Indc(i \in V)\right)$ `$1$'s (as otherwise it is redundant). Similarly, every column $j \in V$ has at least one `$1$', and at most $\left(\lvert U \rvert - 1 - \Indc(j \in U)\right)$ `$1$'s. The whole procedure is explained via Figure \ref{fig:preprocessing}.

After preprocessing, it suffices to sample binary matrices uniformly with additional constraints. Besides keeping the row and column sums invariant, we need to further ensure the diagonal entries $D := \{(s,s), s\in U \cap V\}$ (e.g., the black entries in Figure \ref{fig:preprocessing}) are all zero throughout the sampling procedure, as we require no self-loops. We write the constrained state space as
\[
    \Sigma_D(\row,\col;U,V)
    =\{M\in\{0,1\}^{U\times V}:\ M\mathbf 1=\row,\ M^\top\mathbf 1=\col,\ M(s,s)=0\ \text{for all }s\in U\cap V\}.
\]
Sampling simple directed graphs with the prescribed in- and out-degrees is equivalent to sampling uniformly from this constrained space after the redundant labels have been removed. This can be achieved by slightly 
modifying the \snake algorithm as follows:
\begin{algorithm}[htbp]
	\caption{One step of the \dsnake algorithm}\label{alg:D-Snake}
	
	\begin{algorithmic}
		
			\State \textbf{Input:} 
			\begin{itemize}
				\item Row labels $U$, row sums $\row$;  column labels $V$, column sums $\col$
				\item Current state $M \in \Sigma_D(\row,\col;U,V)$
				\item Trajectory $T = \varnothing $
			\end{itemize}
			\State Choose one row and one column $X_1 = (R_1, C_1)$ uniformly over $(U\times V) \setminus D$, add $X_1$ into $T$
			\State Set $K = 1$
	        \While{\textsc{True}}
	         \If{$M(R_K, C_K) = 1$}
	         \State Sample $C_{K+1}$ uniformly from the set $\{j\in V: M(R_K, j) = 0,\ j\ne R_K\}$
	         \State Add $X_{K+1} = (R_K, C_{K+1})$ into $T$
	         \Else
	         \State Sample $R_{K+1}$ uniformly from the set $\{i\in U : M(i, C_K) = 1\}$
	          \State Add $X_{K+1} = (R_{K+1}, C_{K})$ into $T$
	         \EndIf
	         \State $K \gets K+1$
	         \If{$(X_{K'}, X_{K'+1}, \ldots, X_{K})$ forms an alternating loop for some $1\leq K'< K$}
	         \State Flip every entry in the loop $(X_{K'}, X_{K'+1}, \ldots, X_K)$
	         \State \textbf{Break}
	         \EndIf

		\EndWhile

		\State\textbf{Return: The updated $M$} 
	\end{algorithmic}
\end{algorithm}

The column-one sampling set in Algorithm \ref{alg:D-Snake} needs no explicit diagonal exclusion because all label-diagonal cells in $D$ are structurally fixed at zero and therefore cannot be one-candidates.

The new algorithm has essentially the same idea as the original \snake algorithm. We enforce the diagonal constraints by forcing the loop never to touch the diagonal elements. The following proposition records the resulting validity statement; its proof is in Appendix \ref{app:proof-dsnake}.

\begin{prop}\label{prop:dsnake-correctness}
Assume $\Sigma_D(\row,\col;U,V)$ is nonempty after preprocessing. The transition kernel of Algorithm \ref{alg:D-Snake} is reversible with respect to the uniform distribution on $\Sigma_D(\row,\col;U,V)$. If the directed fixed-degree state graph is connected by directed $2$-switches and directed $3$-cycle reorientations, then Algorithm \ref{alg:D-Snake} is irreducible and has the uniform distribution on $\Sigma_D(\row,\col;U,V)$ as its unique stationary distribution.
\end{prop}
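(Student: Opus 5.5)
The plan is to prove that the transition kernel $P$ of Algorithm~\ref{alg:D-Snake} is \emph{symmetric}, i.e.\ $P(M,M')=P(M',M)$ for all $M\neq M'$ in $\Sigma_D:=\Sigma_D(\row,\col;U,V)$. Symmetry gives reversibility with respect to the uniform distribution, so irreducibility then yields uniqueness of the stationary law. The first step is to write $P(M,M')$ as a sum over trajectories. A trajectory $\tau=(X_1,\dots,X_K)$ ends in a loop $\ell=(X_{K'},\dots,X_K)$ whose flip sends $M$ to $M'$. Its probability is
\[
\frac{1}{|U\times V|-|D|}\prod_{k=1}^{K-1}\frac{1}{N_M(X_k)},
\]
where $N_M(X_k)$ is the size of the sampling set at step $k$. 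This set has $r_i$ ones in column-direction moves when the entry is a zero. When the entry is a one, it has $|V|-r_i-\Indc(i\in V)$ zeros, which is exactly the number of flippable zeros in row $i$, since diagonal cells are excluded. These sizes depend only on the margins and the row or column index, not on $M$ itself. I also need that the loop never contains a cell of $D$: ones are never in $D$, and zeros are sampled off the diagonal while the starting cell is drawn from $(U\times V)\setminus D$. Hence the flip keeps $M'$ inside $\Sigma_D$.

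The second step is the bijection. Given a trajectory $\tau$ from $M$ to $M'$, consisting of a tail $(X_1,\dots,X_{K'-1})$ followed by the loop, I map it to a trajectory $\tau'$ for $M'$ that keeps the same tail and traverses the loop in the reverse cyclic direction. The model is the argument for the unconstrained \snake kernel: there $\bP_\snake$ was asserted symmetric, and I would replicate its proof. Several things must be checked.
\begin{itemize}
\item The tail cells are unchanged in $M'$, so the tail is a legal path under $M'$.
\item In $M'$ the loop values are complemented. Reading the loop backwards from the entry point therefore again alternates correctly: the move type (row versus column) required from a cell is determined by its value, and both the value and the direction are reversed.
\item The first repeat under $M'$ occurs at the same time $K$ and closes exactly the reversed loop, because the set of visited rows and columns is the same.
\end{itemize}
Since each factor $1/N(X_k)$ depends only on the index and the margins, and the reversed path uses the same multiset of (row or column, value-type) steps up to relabeling, the two products agree and $P(M,M')=P(M',M)$. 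This step is the main obstacle. The precise matching of the entry point into the loop, which depends on whether the last move was a row or column step, requires careful indexing, and it is exactly where the earlier \snake correctness argument would be mirrored.

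For irreducibility, I would use the hypothesis that $\Sigma_D$ is connected by directed $2$-switches and directed $3$-cycle reorientations. It suffices to show that each such move has positive probability under $P$. A $2$-switch is an alternating loop of length $4$ avoiding $D$. A $3$-cycle reorientation $a\to b\to c\to a$ replaced by $a\to c\to b\to a$ corresponds to an alternating loop of length $6$ on rows $a,b,c$ and columns $a,b,c$ using only off-diagonal cells. In each case, start at one of the loop's cells, follow the loop, and choose the next loop cell at every step; since the rows and columns are distinct, no earlier repeat occurs. The resulting trajectory has positive probability and flips exactly that loop. Hence the chain is irreducible on $\Sigma_D$, and symmetry gives uniform as the unique stationary distribution.
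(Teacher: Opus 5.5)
Your proposal is correct and is essentially the paper's proof: you prove symmetry of the kernel with the prefix-preserving loop-reversal bijection, whose sampling denominators are fixed by the margins, and you get irreducibility by realizing each admissible length-$4$ and length-$6$ loop with positive probability. Two small points: the column-move sampling set has size $c_j$, not $r_i$; and the entry-point issue you flag is immediate, since $X_{K'-1}$, $X_{K'}$ and $X_K$ all share the closing coordinate, and after the flip only $X_K$ carries the value required for the move out of $X_{K'-1}$, so the reversed loop is $(X_K,X_{K-1},\dots,X_{K'})$ (starting directly at $X_K$ when $K'=1$).
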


The connectivity condition in Proposition \ref{prop:dsnake-correctness} is the standard one for simple directed graphs with prescribed in- and out-degrees. In matrix language, \citet{rao1996markov} identify the needed moves as ordinary checkerboard switches together with compact alternating hexagons; \citet{lamar2009uniform} states the resulting Rao--Jana--Bandyopadhyay theorem as connectivity of the meta-graph whose edges are directed $2$-switches and directed $3$-cycle reorientations. Equivalently, \citet{berger2010uniform} construct the corresponding directed state graph and prove that it is strongly connected, while also noting that directed $2$-switches alone can fail to connect all realizations for some degree sequences. A general mixing-time analysis with structural diagonal constraints remains open.

\subsubsection{Equal-margin label shuffling}
When row or column margins have repeated values, the target distribution is invariant under permutations within equal-margin row classes and within equal-margin column classes. Let $Q$ denote the kernel that independently applies uniform random permutations inside all equal-margin row and column classes. Since $Q$ is a measure-preserving relabeling of $\Sigma(\row,\col)$ and $\bP_\snake$ preserves the uniform law, any composition of $Q$ and $\bP_\snake$ also preserves the uniform law. The $\snakep$ variant used in the numerical section is the block kernel that performs five ordinary \snake transitions and then applies one $Q$ relabeling. This block sampler is generally non-reversible, but the uniform law remains invariant. In Figure \ref{fig:mixing_sparse}, the implementation uses this five-attempt block schedule.

\section{Theoretical analysis}\label{sec:theoretical-analysis}
This section presents the theoretical analysis concerning the correctness and efficiency of the \snake algorithm. To start, we recall a few concepts in the theory of finite Markov chains. A Markov chain with transition kernel $\bP$ is said to be reversible with respect to some distribution $\pi$ if $\pi(x)P(x,y) = \pi(y)P(y,x)$ for any pair of states $(x,y)$. We say $\bP$ is $\pi$-irreducible if for any two states satisfying $\pi(x), \pi(y) > 0$, there exists some positive integer $N = N(x,y)$ such that $\bP^N(x,y) > 0$. Meanwhile, if the Markov chain $\bP$ is both  $\pi$-reversible and irreducible, then $\pi$ is the unique stationary distribution. 
Throughout this section, we use the preprocessing convention from Section \ref{sec: Snake algorithm}: deterministic rows and columns have been removed, so $1\le r_i\le n-1$ and $1\le c_j\le m-1$ for the active matrix.

\subsection{Correctness}
The following theorem shows our \snake algorithm has the correct stationary distribution.
\begin{theorem}\label{thm: correctness}
Suppose we have fixed row sums $\row = (r_1, r_2, \ldots, r_m) \in \bN^m$ and column sums $\col = (c_1, c_2, \ldots, c_n) \in \bN^n$ such that $\Sigma(\row, \col) \neq \varnothing$. Let $\bP_\snake$ be the Markov transition kernel described by Algorithm \ref{alg:Snake}. Then $\bP_\snake$ is irreducible and $U(\Sigma(\row, \col))$-reversible. 
\end{theorem}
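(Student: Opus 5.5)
The plan is to show that $\bP_\snake(M,M')=\bP_\snake(M',M)$ for all $M\neq M'$. Symmetry of the kernel gives reversibility with respect to the uniform law immediately. Irreducibility will then follow by showing that every checkerboard swap is realized by $\bP_\snake$ with positive probability. The swap chain is irreducible on $\Sigma(\row,\col)$ by the classical result of Ryser (equivalently \citet{diaconis1998algebraic}), so the \snake chain inherits irreducibility.

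For symmetry, I will decompose $\bP_\snake(M,M')$ as a sum over trajectories $T=(S_1,\dots,S_K)$ that produce $M'$ from $M$. Each such trajectory has a unique first self-intersection: a tail $(S_1,\dots,S_{K'-1})$ that is a simple alternating path, and a loop $(S_{K'},\dots,S_K)$ whose flip turns $M$ into $M'$. Here $S_K$ shares a row or column with $S_{K'}$, closing the loop. Its probability is
\[
\frac{1}{mn}\prod_{k=1}^{K-1}\frac{1}{d_M(S_k)},
\]
where $d_M(S_k)$ is $n-r_i$ if $M(S_k)=1$ in row $i$ and $c_j$ if $M(S_k)=0$ in column $j$. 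I will build a bijection $\phi$ between trajectories taking $M$ to $M'$ and trajectories taking $M'$ to $M$ that preserves this weight.

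The natural choice keeps the same start $S_1$ and tail, and traverses the loop in the reverse direction. Suppose $M(S_{K'-1})=1$, so the step to $S_{K'}$ is along a row. The loop has two cells in that row, $S_{K'}$ and the cell $S_K$ that closes it. In $M'$ the loop cells are flipped but the tail cells are unchanged, so the reversed trajectory from $S_{K'-1}$ can enter the loop at the other loop cell in that row, namely the one of value $0$ in $M'$ that is adjacent to $S_{K'}$. It then walks around the loop in the opposite orientation and first self-intersects when it returns to that row. I must check three things carefully. First, the tail and the other loop cells are not hit earlier, which holds because the row and column sets visited are identical. Second, the transition to each cell is legal, which holds because the values are flipped and the orientation is reversed, so row-steps and column-steps still alternate correctly. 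Third, the weights agree. The denominators $n-r_i$ and $c_j$ depend only on the margins, which are equal for $M$ and $M'$. Moreover, each row or column touched is used as a "sampling pivot" the same number of times in both trajectories: each loop row is left from a one once, whichever direction it is traversed in. So the products coincide.

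The main obstacle is this bookkeeping. The cell from which the trajectory enters the loop must be handled carefully, including the case $K'=1$ where the loop starts at $S_1$, since then the reversed trajectory must start at a loop cell with the correct value. It may also be necessary to let $\phi$ change the starting cell or rotate the loop, and to verify that it is an involution.

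For irreducibility, I will take a checkerboard $\{(i,j),(i,j'),(i',j'),(i',j)\}$ with $M(i,j)=M(i',j')=1$. Starting at $S_1=(i,j)$, the walk picks column $j'$ and then row $i'$, each with positive probability, reaching $(i',j')$. It next picks column $j$ at $(i',j)$, which closes the 4-loop. Flipping this loop performs the swap, so every swap move has positive $\bP_\snake$ probability.
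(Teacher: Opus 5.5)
Your route is the paper's: you prove the kernel is symmetric using a weight-preserving involution on first-closing histories that keeps the pre-loop prefix and traverses the loop in the opposite orientation. Irreducibility then comes from Ryser's swap connectivity plus a positive-probability realization of each checkerboard. Your irreducibility argument is correct. So is your treatment of $K'\ge 2$: the walk enters at $S_K$, closes on its return to the entry row, and the pivot count shows each factor $1/(n-r_i)$ and $1/c_j$ appears equally often in both directions.

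The gap is the case $K'=1$, which you flag but leave open, and there the rule ``keep the same start $S_1$'' genuinely fails. Say $M(S_1)=1$, so $(S_1,S_2)$ share a row $i$ and the closing pair $(S_K,S_1)$ shares a column $j$. The forward weight is $1/(mn)$ times the factors of every loop pair except the closing one. It therefore contains $1/(n-r_i)$ and not $1/c_j$. In $M'$ the cell $S_1$ is a zero, so the only loop-following history from $S_1$ is $S_1\to S_K\to S_{K-1}\to\cdots\to S_2$. This history is legal and does first close at $S_2$, since row $i$ repeats. But it traverses the column-$j$ pair and omits the row-$i$ pair, so its weight has $1/c_j$ where the forward weight had $1/(n-r_i)$. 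These differ in general. Your claim that every row and column is used as a pivot equally often is therefore false in this case, and the same-start map is not weight-preserving. The case $M(S_1)=0$ is symmetric.

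The missing step is to move the start: send the history $S_1,\dots,S_K$ (loop beginning at $S_1$) to $S_K,S_{K-1},\dots,S_1$. This is exactly the paper's rule ``if $p=1$, take the reversed segment to start at $s_q$''.
\begin{itemize}
\item In $M'$ the cell $S_K$ carries the value $S_1$ had in $M$, so every reversed step is legal.
\item The walk introduces the loop's coordinates in reverse order and first repeats one on reaching $S_1$, which closes the loop through the coordinate it shares with $S_K$.
\item The start still has probability $1/(mn)$, and the omitted pair is again $\{S_1,S_K\}$, so the two weights coincide.
\end{itemize}
Combined with your $K'\ge 2$ map, this gives an involution between histories $M\to M'$ and histories $M'\to M$, which completes the symmetry argument.
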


Although the technical details are deferred to Appendix \ref{app:proofs}, we can sketch the proof ideas here. The irreducibility of our Algorithm essentially follows from the irreducibility of the classical swap algorithm. Let $M_1, M_2$ be two arbitrary matrices with the same row and column sums. It is known in Theorem 3.1 of \cite{ryser1957combinatorial} that $M_1$ is transformable into $M_2$ by a finite number of swaps. Since every swap can be realized with positive probability by the \snake algorithm, any two matrices in $\Sigma(r, c)$ are connected by a finite sequence of \snake transitions, which proves the irreducibility.

Since our target distribution is uniform over all the matrices in $\Sigma(\row,\col)$, showing the \snake algorithm is $U(\Sigma(\row,\col))$-reversible is equivalent to proving the transition kernel $\bP_\snake(\cdot, \cdot)$ is symmetric. Given two matrices $A_1, A_2$ that differ by exactly one loop (otherwise both $\bP_\snake(A_1,A_2) = \bP_\snake(A_2,A_1) = 0$, which is obviously symmetric), we may assume without loss of generality that we are currently at $A_1$.
There are multiple paths in one step of the \snake algorithm that transfer $A_1$ to $A_2$, and $\bP_\snake(A_1,A_2)$ is the summation of the probability of all such paths. We will define a bijection from all the paths which transfer $A_1$ to $A_2$ to those that transfer $A_2$ to $A_1$. We will also show the probability of choosing each path (given current state $A_1$) is the same as the probability of choosing the path in the image of the bijection (given current state $A_2$), and this in turn shows $\bP_\snake(A_1,A_2) = \bP_\snake(A_2,A_1)$.
\subsection{Move size and computational cost analysis}\label{subsec: moresize}
After showing the validity of our algorithm, we investigate the efficiency of the \snake algorithm in this subsection. We mainly consider two quantities: the number of entries flipped by a transition and the computational cost of generating that transition. Though closely related, the former is a move-size diagnostic that does not involve wall-clock time, whereas the latter concerns the actual algorithmic cost. 
\subsubsection{Move size: Number of flipped entries}\label{subsubsec:number of flipped entries}
The number of flipped entries measures the algorithm's step size on the space of binary matrices. A larger step size means that a transition changes a larger part of the current matrix, but it does not by itself imply a shorter total-variation mixing time or a smaller Monte Carlo standard error for every statistic. We therefore use this quantity as a structural move-size diagnostic for comparing loop-based and local-flip algorithms. Both the swap and the rectangle loop algorithm flip at most four entries, as at most, only a $2\times 2$ submatrix is swapped in each iteration. When the row and column sums are sparse, the expected number of flipped entries for both the swap and rectangle loop algorithm goes to zero when the dimensionality goes to infinity. Therefore these classical local algorithms can be prohibitively slow, especially when the column and row sums are sparse and the dimension is high. In contrast, one step of the \snake algorithm flips on the order of $\sqrt {n}$ entries under relatively general assumptions, giving substantially more nonlocal moves than these local methods. The numerical section also compares against Curveball, a stronger row-trade method that is not covered by this local-flip discussion. The following two results estimate the expected number of flipped entries per step when the matrices are sparse or relatively regular. 
Throughout Theorems \ref{thm: flips-sparse}--\ref{thm: rate optimality} and Theorem \ref{thm:half-balanced-efficiency}, expectations are conditional on an arbitrary current state $M\in\Sigma(\row,\col)$ and are taken only over the internal randomness of one \snake transition; the stated bounds are uniform over the choice of $M$.

\begin{theorem}[Sparse]\label{thm: flips-sparse}
Let $L$ be the number of flipped entries in one step of the \snake algorithm. Then 
 $$
    \bE(L)\geq\frac{1}{64}
    \sqrt{\max\left\{
    {\frac{n}{\rmax}},
        {\frac{m}{m-\cmin}}
    \right\}},$$ where $\rmax$ and $\cmin$ are the largest row sum and smallest column sum defined in Section \ref{sec: Snake algorithm}.
\end{theorem}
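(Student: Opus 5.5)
The plan is a birthday-paradox argument. First I reduce to the bound $\sqrt{n/\rmax}$, then control the probability that the snake closes early, then show the resulting loop is long.

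\emph{Reduction.} The second bound, $\sqrt{m/(m-\cmin)}$, should follow from the first by a symmetry. Map $M$ to $\tilde M=(J-M)^\top$, i.e.\ complement and then transpose. Under this map:
\begin{itemize}
\item a $1$-entry of $M$ becomes a $0$-entry of $\tilde M$, and "move within the row to a uniformly chosen zero" becomes "move within the column to a uniformly chosen one";
\item conversely, a $0$-entry of $M$ becomes a $1$-entry of $\tilde M$, and "move within the column to a uniformly chosen one" becomes "move within the row to a uniformly chosen zero";
\item the uniform starting cell, the alternating-loop condition and the number of flipped entries are all preserved;
\item $\tilde M$ has $n$ rows and $m$ columns, its row sums are $m-c_j$, and so its maximal row sum is $m-\cmin$.
\end{itemize}
Hence the law of $L$ is the same for the two chains, and it suffices to prove $\bE(L)\ge \frac1{64}\sqrt{n/\rmax}$. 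Since $L\ge 4$ always, I may also assume $n/\rmax$ is larger than a fixed constant; in particular $\rmax\le n/2$, so every row contains at least $n/2$ zeros.

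\emph{Bounding the closing hazard.} Group the trajectory into double steps: from a $1$-entry in row $R$, jump to a uniformly random zero column $C$ of $R$; then jump to a uniformly random row having a $1$ in $C$. A closure can occur in only two ways:
\begin{itemize}
\item $C$ is an already visited column;
\item the next row lies among already visited rows. This requires $C$ to contain a $1$ in some visited row.
\end{itemize}
Suppose $k$ rows and columns have been visited. Each visited row has at most $\rmax$ ones, so the set of "dangerous'' columns has size at most $k(\rmax+1)\le 2k\rmax$. The column $C$ is uniform over at least $n/2$ zeros, so the hazard of closing in this double step is at most $4k\rmax/n$. This bound holds conditionally on the whole past. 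If the start is a $0$-entry, one extra initial step changes only constants. A product bound then gives $\bP(\text{no closure during the first }k\text{ double steps})\ge 1-2k^2\rmax/n\ge 1/2$ for $k=\lfloor\frac12\sqrt{n/\rmax}\rfloor$.

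\emph{Long trajectory to long loop.} This is the step I expect to be the main obstacle, because a long trajectory can close on its most recent entries and produce a short loop. The loop length is determined by where the closing column or row sits in the trajectory. My plan is to bound the short-loop hazard separately. At any double step, a loop of length $<\ell$ forms only if $C$ falls in the dangerous set generated by the last $\ell$ visited entries. That set has size $\le 2\ell\rmax$, so this hazard is at most $4\ell\rmax/n$ at every time, regardless of how much has been visited. I would then work in a window of $k$ double steps following step $k$:
\begin{itemize}
\item on the event that the walk has survived to step $k$, any closure in the window yields a loop of length at least the distance back to the hit entry;
\item a union bound over the window bounds the probability of a short loop ($\ell=k/2$) there by about $2k\ell\rmax/n\le 1/4$.
\end{itemize}
If the walk survives past the window as well, I would restart the same argument from the later time, using a geometric/renewal argument: each successive block of $k$ double steps produces a short loop with probability $\le 1/4$.

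Combining these gives $\bP(L\ge k/2)\ge c$ for an absolute constant $c$, and hence $\bE(L)\ge c\,k/2$. Tracking the constants (factors of $2$ from rows versus columns, and from $k$ versus $k+1$) should yield $\frac1{64}\sqrt{n/\rmax}$.
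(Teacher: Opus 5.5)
Your reduction via $(J-M)^\top$ is the same as the paper's. The step you flag as the main obstacle, however, is a genuine gap.

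The renewal step needs the walk to \emph{stop} soon. A bound of $1/4$ on a short loop per block of $k$ double steps only gives $\bP(\text{short loop})\le\frac14\sum_b\bP(\text{the walk reaches block }b)$. Nothing in your proposal controls how many blocks the walk survives. Your product estimate bounds survival from below, which is the wrong direction here.

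Worse, your short-loop hazard $4\ell\rmax/n$ is too crude for any such bookkeeping to work. It counts every column with a $1$ in a recent row as dangerous, ignoring that the walk then picks that particular row only with probability $1/c_C$. Take $m=n$ and $M$ block diagonal with all-ones $\sqrt n\times\sqrt n$ blocks, so $\rmax=\sqrt n$. Each double step closes with probability at most about $2k/n$, so the walk survives $\asymp\sqrt n$ double steps with constant probability. The true short-loop hazard is about $2\ell/n$. Your bound, accumulated over this lifetime, is of order $\ell\rmax/\sqrt n=\ell$, which is far above $1$ for $\ell\asymp\sqrt{n/\rmax}=n^{1/4}$. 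The total closing hazard is also only $O(k/n)$, so no comparison of your short-loop bound with the overall closing rate can rescue this either.

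The paper instead runs a hazard-ratio argument separately for the column gap $R-R_0$ and the row gap $R'-R'_0$, using $L\ge 2\min\{R-R_0,R'-R'_0\}$.

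\emph{Columns.} At least $k-\rmax$ of the $k$ visited columns are zeros of the current row, so the repeat hazard is at least $(k-\rmax)/(n-\rmax)$. A repeat among the last $s$ columns has hazard at most $s/(n-\rmax)$, with no factor $\rmax$. Hence $\bP(R-R_0\le s)\le\frac{s}{n-\rmax}\sum_k\bP(R>k)$. The part of the sum with $k\ge\rmax+8s$ is at most $(n-\rmax)/(8s)$, because the repeat hazard there exceeds $8s/(n-\rmax)$.

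\emph{Rows.} The missing ingredient is the return bound $\bP(b_{k+1}=i\mid b_k)\le 2p_i$ with $p_i=\sum_{j:M(i,j)=1}\frac{1}{nc_j}$, which retains the $1/c_j$ factor. It is paired with the lower bound $\sum_{i\le k}p_{b_i}-\rmax/n$ on the row-repeat hazard. Time is then measured in accumulated $p$-mass through the stopping index $S$. Short row gaps whose earlier visit has index at most $S$ have probability at most $2s\,\bE\sum_{j\le S}p_{b_j}$. After $S$, a short row gap is at each step at most one eighth as likely as a row repeat.

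Without such a lower bound on the closing hazard, and a row-return estimate carrying the $1/c_j$ weights, your renewal step does not give $\bP(L\ge k/2)\ge c$.
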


Theorem~\ref{thm: flips-sparse} shows, for example, that if $m\asymp n$ and either
$\rmax=\calO(1)$ or $m-\cmin=\calO(1)$, then the expected number of flipped entries is
$\Omega(\sqrt n)$. This compares favorably with classical local-flip algorithms.

Another important case is when active columns contain a fixed fraction of ones and active rows contain a fixed fraction of zeros. The next theorem shows the algorithm flips the same magnitude of entries under this setting. 

\begin{theorem}[Balanced]\label{thm: flips-balance}
 Let $\delta$ and $\Delta$ be constants with $0 < \delta\leq \Delta < 1$. Assume that  $m \asymp n, ~\cmin \geq \delta m, \text{and} ~ \rmax \leq \Delta n$. Then we have 
    $\bE(L) = \Theta(\sqrt{n})$.
\end{theorem}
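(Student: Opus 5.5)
The plan is to prove matching lower and upper bounds of order $\sqrt n$. Both come from a birthday-problem analysis of the trajectory of Algorithm \ref{alg:Snake}. Record the trajectory as a sequence of visited rows and columns. Each step from a one-entry picks a new column uniformly among the $n-r_{R_K}$ zero positions of the current row. Each step from a zero-entry picks a new row uniformly among the $c_{C_K}$ one positions of the current column. The loop closes at the first time a newly sampled column (respectively row) coincides with a previously visited one.

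Under the assumptions, every sampling set is large. Each zero-set of a row has size $n-r_i\ge(1-\Delta)n$, and each one-set of a column has size $c_j\ge\delta m\asymp n$. So, conditionally on the past, the probability that the next sampled index hits one of the at most $K$ previously visited indices is at most $K/((1-\Delta)n)$ or $K/(\delta m)$, that is, $\calO(K/n)$.

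\textbf{Lower bound.} Let $\tau$ be the stopping time, so $C=\tau$. The bound above gives
\[
\bP(\tau>k)\ge\prod_{j\le k}\bigl(1-c_0 j/n\bigr)\ge 1/2 \qquad\text{for } k\le c_1\sqrt n .
\]
Next I need that the flipped loop is long, not merely the trajectory. The loop is the part of the trajectory from the first visit of the repeated index to time $\tau$. At the closing step, the repeated index is (conditionally) chosen from the sampling set, and the probability of hitting any particular earlier visited index is at most $\calO(1/n)$. Hence, given that a collision occurs at time $\tau\approx k$, the probability that it hits one of the last $k/2$ visited indices is at most about $1/2$. On that complementary event the loop has length at least $k/2$. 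Summing, $\bE(L)\ge c\sqrt n$.

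This step is the main obstacle. A collision probability can be upper bounded by $\calO(1/n)$ per visited index, but the loop-length argument needs a matching lower bound on the chance of hitting early indices. To get around this, I would argue directly. Write $\bP(L\ge \ell)\ge \bP(\tau\ge 2\ell)-\bP(\text{loop closes within the last }\ell\text{ indices before time }\tau\text{ and }\tau\ge 2\ell)$. The subtracted term is bounded by $\sum_{t}\bP(\tau> t-1)\cdot \ell\, c_0/n$ over the relevant window $t\le c_1\sqrt n$, which is at most $c_1c_0\ell/\sqrt n$, small for $\ell=c_2\sqrt n$. Alternatively, I could cite Theorem \ref{thm: flips-sparse}: it already gives $\bE(L)\ge\frac1{64}\sqrt{m/(m-\cmin)}$, but that is only $\Omega(1)$ here, so the direct argument is needed.

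\textbf{Upper bound.} Clearly $L\le C=\tau$, so it suffices to show $\bE(\tau)=\calO(\sqrt n)$. For this I need a lower bound on the per-step collision probability of order $K/n$. Split into the two kinds of steps, which alternate, so each kind happens about $K/2$ times among the first $K$ steps.

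Consider a zero-to-one step (sampling a row). The set of previously visited rows $\{R_1,\dots\}$ has size about $K/2$, and each visited row $R_t$ was entered at a one-entry in some column. Among the visited rows, the ones with $M(R_t,C_K)=1$ are the candidates. I need a positive fraction of visited rows to have a one in the current column. This need not hold deterministically. Instead, I will use the column step: from row $R_K$ we choose $C_{K+1}$ uniformly among its zeros. The next row step then closes if $C_{K+1}$ has a one in some visited row. The number of pairs (visited row $i$, zero column $j$ of $R_K$) with $M(i,j)=1$ is at least $\sum_{i}\bigl(r_i-r_{R_K}\bigr)_+$, which can be small. So I will combine the two closing events: a collision on either the row side or the column side.

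For a visited row $i$ and the current row $R_K$, consider the columns where $R_K$ is zero. Row $i$ has ones on at least $r_i-(n-(n-r_{R_K}))$ of them... Since this may also degenerate, I will use the simpler fact that the sampled column $C_{K+1}$ is uniform over at least $(1-\Delta)n$ columns, and that the visited columns (about $K/2$ of them, all zeros in their entering row but arbitrary in $R_K$) include those where $R_K$ is zero. Using $\cmin\ge\delta m$, a two-step averaging gives a conditional collision probability of at least $cK/n$ within every two consecutive steps.

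This yields $\bP(\tau>k)\le e^{-ck^2/n}$, hence $\bE(\tau)=\calO(\sqrt n)$. Making this averaging rigorous against adversarial $M$ is where I expect the real work, and I would first try bounding the expected number of ones in the visited-row by current-zero-column block via $\cmin\ge\delta m$ and $m\asymp n$.
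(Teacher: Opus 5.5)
Your overall plan is the paper's: bound $L\lesssim C$ and show $\bE(\tau)=O(\sqrt n)$ for the upper bound, and use a union bound over short-gap closures, each costing $O(1/n)$ by $\rmax\le\Delta n$ and $\cmin\ge\delta m$, for the lower bound. However, both halves have a gap, and the first is the heart of the theorem.

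\textbf{Upper bound.} You rely on the claim that, conditional on \emph{every} history, the next two steps close a loop with probability at least $cK/n$. This claim is false, so $\bP(\tau>k)\le e^{-ck^2/n}$ does not follow. For $\delta<\Delta$ the hypotheses allow rows of sum one. Suppose the visited rows $b_1,\dots,b_{k-1}$ have sum one, with their only one at the column $a_i$ through which they were entered, and the current row $b_k$ has ones at all of $a_1,\dots,a_k$. Then $a_{k+1}$, a zero of $b_k$, is a new column, and $b_{k+1}$, a one of column $a_{k+1}$, cannot be any $b_i$. The two-step closing probability is therefore exactly $0$. Any valid argument must be amortized over histories.

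The underlying issue is orientation. In a single trajectory, a pair (visited row $b$, visited column $a$) exerts collision pressure only if $M(b,a)=0$ and $a$ precedes the column step out of $b$, or $M(b,a)=1$ and $b$ precedes the row step out of $a$. The history can be ordered so that neither happens.

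The missing idea is the paper's second trajectory segment $(x_k,y_k)$, driven by independent maps, together with a count of cross pairs $(y_j,a_i)$:
\begin{itemize}
\item if $M(y_j,a_i)=0$, the second segment leaving row $y_j$ hits $a_i$ with probability at least $1/n$;
\item if $M(y_j,a_i)=1$, the first segment leaving column $a_i$ hits $y_j$ with probability at least $1/m$.
\end{itemize}
Independence lets you fully extend whichever segment is needed. For the no-repeat events $A_{k,l}$ this gives $(m+n)\bP(A_{2(s-1)t_2,2(s-1)t_2})\ge t_2(t_2-1)\bP(A_{2st_2,2st_2})$. With $t_2=\lfloor 3\sqrt{m+n}\rfloor$, the no-repeat probability halves every block. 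Splicing via $x_1=a_{t_1t_2+1}$ needs a union over $x_1$, and this is where $\rmax\le\Delta n$ enters: $\bP(a_{t_1t_2+1}=i\mid\cdot)\le 1/((1-\Delta)n)$. Hence $\bP(\min(R,R')>2t_1t_2)\le 2^{-t_1}/(1-\Delta)$ and $\bE\min(R,R')=O(\sqrt n)$.

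\textbf{Lower bound.} Restricting the subtracted term to ``the relevant window $t\le c_1\sqrt n$'' is unjustified. Short-gap closures can occur at any time, and summing $\bP(\tau\ge t)\,c_0\ell/n$ over all $t$ gives $c_0\ell\,\bE(\tau)/n$. Because $\rmax$ may be $\Delta n$, the hazard-growth trick of Lemma~\ref{lem: r} is unavailable, so you need the upper bound first. The paper then applies Markov's inequality to $\min(R,R')$:
\[
\bE(L)\ge s_2\Bigl(1-\frac{\bE\min(R,R')}{s_1}-s_1s_2\Bigl(\frac{1}{(1-\Delta)n}+\frac{1}{\delta m}\Bigr)\Bigr),
\]
with $s_1,s_2\asymp\sqrt n$. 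Once the upper bound is in hand, your union-bound argument goes through in this form.
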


\subsubsection{Computational efficiency}
Though the \snake algorithm flips many entries per step, the  cost may also be higher than existing methods. To make a fair comparison of the computational efficiency, we now turn to study the computational complexity of our algorithm. To run one iteration of the \snake algorithm, one needs to run the inner while loop for a random number of iterations, denoted by $C$. If we refer to the cost of one iteration in the while loop by one unit, then the computational cost of the \snake algorithm is $C$ per step. The following result gives a universal upper bound on $C$ without assumption on the margin sums. 
\begin{theorem}\label{thm: upper bound}
 We have $\bE(C) = \calO(\log(m+n)\sqrt{m+n})$.
\end{theorem}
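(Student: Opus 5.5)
The approach is a birthday-problem argument on the sequence of visited rows and columns. The trajectory $S_1,S_2,\dots$ alternates between row moves and column moves, so it produces a sequence of row indices $R_1,R_2,\dots$ and column indices $C_1,C_2,\dots$, each new coordinate being sampled from a set that depends only on the current matrix and the current row or column. The walk stops the first time a freshly sampled coordinate equals one already visited. Thus $C\le 1+2\min\{\tau_R,\tau_C\}$, where $\tau_R$ is the index of the first repeated row and $\tau_C$ that of the first repeated column. It therefore suffices to show that one of these two stopping times has expectation $\calO(\sqrt{m+n}\log(m+n))$. The target is the tail bound
\[
\bP(C>k)\le \exp\bigl(-c\,k^2/(m+n)\bigr)+(m+n)^{-2}
\]
for $k$ beyond about $\sqrt{(m+n)\log(m+n)}$ times a constant. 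Summing this tail gives the claim, possibly with a slightly worse log factor, which can then be tightened.

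The key step is a lower bound on the conditional probability that the next new coordinate hits the visited set. Suppose the path has already visited columns $C_1,\dots,C_j$ and is at a row $R$ with $M(R,\cdot)$ sampled from zeros. The probability of closing a loop is the fraction of zeros of row $R$ that lie among the visited columns. This fraction need not be large for every row, so a pure birthday bound fails. Instead I would use a double-counting or potential argument. Each visited column $C_\ell$ was entered because it had a zero in some row, and was exited via a one in that column. Summing over possible next rows and columns, the expected hitting probability at step $\ell$ is at least roughly $\ell/(m+n)$ on average over a window of steps. The averaging uses the uniform choice over ones of a column or zeros of a row together with the margin sums, which satisfy $\sum_i (n-r_i)=\sum_j (m-c_j)$ and $\sum_i r_i=\sum_j c_j$.

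Concretely, I would split the trajectory into blocks of length $b\approx\sqrt{(m+n)\log(m+n)}$. I would then show that, conditioned on having survived the first $b$ steps, each subsequent block of $b$ steps closes a loop with probability at least a constant. That yields geometric tails, and hence $\bE(C)=\calO(b)$, modulo the log factor. Within a block, failure requires every sampled row (respectively column) to avoid the set of $\ge b$ visited rows (respectively columns). The heterogeneity of row and column degrees is controlled by a dichotomy. If many visited rows have small zero-counts, then the uniform sampling within those rows lands on a visited column with large probability. Otherwise the visited set carries a large share of the total mass, so the mass-weighted birthday argument applies. The $\log(m+n)$ factor enters from a union or dyadic pigeonhole over degree scales $2^t$, $t\le\log_2(m+n)$.

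The main obstacle is precisely this non-uniformity: the transition probabilities depend on the adaptive, history-dependent visited set and on arbitrary margins. The walk could in principle keep selecting high-degree fresh coordinates while visited ones have low sampling weight. My first attempt at handling it is the dyadic bucketing of rows and columns by degree. Within one bucket the sampling probabilities are comparable up to a factor 2, so a standard birthday bound applies to the restricted walk. Pigeonhole then guarantees that some bucket receives a $1/\log(m+n)$ fraction of the visits, and it is there that the extra logarithm is paid.
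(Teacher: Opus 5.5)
Your proposal has a genuine gap at the step you call key: a windowed-average lower bound of order $\ell/(m+n)$ on the closing probability of a \emph{single} trajectory. Double counting cannot give this, because of time ordering. Take a visited pair $(b_j,a_i)$, with row $b_j$ and column $a_i$. A zero at $M(b_j,a_i)$ can close the loop only if $i\le j$: the walk must stand at row $b_j$ after having visited $a_i$. A one can close it only if $j<i$: the walk must stand at column $a_i$ after having visited $b_j$. Now suppose the visited pairs form a staircase, with $M(b_j,a_i)=1$ for $i\le j$ and $M(b_j,a_i)=0$ for $i>j$. This is exactly what happens while a trajectory stays inside a block $M(i,j)=\mathbf{1}\{j\le i\}$. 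Then every pair is of the useless type, and the conditional closing probability is exactly zero at every step.

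Dyadic bucketing by degree does not repair this. It equalizes the normalizers $1/(n-r_i)$ and $1/c_j$, but the obstruction is the \emph{support} of the next-step law: which entries of the current row are zeros, and which entries of the current column are ones. That support can miss the visited set entirely, whatever the degrees. Your dichotomy about visited rows with small zero-counts also concerns rows the walk has already left, not the row it is currently sampling from. So neither the per-block constant closing probability nor the Gaussian tail you target is supported. That tail would in fact give $\calO(\sqrt{(m+n)\log(m+n)})$, which is stronger than the theorem.

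The missing idea is decoupling into two independent walkers.
\begin{itemize}
\item Split the trajectory at time $t_1t_2$. Conditionally on its starting column, the second half is an independent copy of the walk. A union bound over the $n$ possible starting columns therefore reduces matters to two \emph{independent} walkers $(a_k,b_k)$ and $(x_k,y_k)$.
\item For every cross pair $(y_j,a_i)$, one walker gets a hitting chance. If $M(y_j,a_i)=0$, the second walker standing at row $y_j$ hits column $a_i$ with probability at least $1/n$. If $M(y_j,a_i)=1$, the first walker standing at column $a_i$ hits row $y_j$ with probability at least $1/m$.
\item Because the walkers are independent, you may interleave their steps so that whichever one benefits moves after the other has visited the relevant coordinate.
\item Counting the $t_2(t_2-1)$ pairs in a window gives $(m+n)\,\bP(A_{2(s-1)t_2,2(s-1)t_2})\ge t_2(t_2-1)\,\bP(A_{2st_2,2st_2})$. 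With $t_2=\lfloor 3\sqrt{m+n}\rfloor$, this is a factor at most $1/2$ per block.
\end{itemize}
The $\log(m+n)$ then comes from iterating about $5\log(m+n)$ such blocks to overcome the factor $n$ lost in the union bound over the midpoint, not from degree scales.
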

In one step of the \snake algorithm, since we only flip those entries that form an alternating loop (see Figure \ref{fig:snake} as an example), we have $L=O(C)$. The ratio $\bE(C)/\bE(L)$ measures the expected work per flipped entry. Given margin sums $\row$ and $\col$, we say the computational efficiency of the \snake algorithm is rate optimal if $\bE(C)/\bE(L) = \Theta(1)$. Our next result shows the \snake algorithm is rate optimal in the sparse and balanced regimes above.

\begin{theorem}\label{thm: rate optimality}
 With all the notations defined  above, we have:
 \begin{itemize}
    \item (Sparse) If $\rmax = \cmax = \calO(1)$, then  the \snake algorithm is rate optimal. 
     \item (Balanced) If $\row$ and $\col$ satisfy the assumption in Theorem \ref{thm: flips-balance}, then 
   the \snake algorithm is rate optimal. 
 \end{itemize}
\end{theorem}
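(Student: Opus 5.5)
Rate optimality means $\bE(C)/\bE(L)=\Theta(1)$. Since $L=\calO(C)$ pathwise, $\bE(L)\lesssim\bE(C)$ always, so only $\bE(C)=\calO(\bE(L))$ needs proof. In both regimes Theorems \ref{thm: flips-sparse} and \ref{thm: flips-balance} already give $\bE(L)=\Omega(\sqrt n)$ (for the sparse case I use $\rmax=\calO(1)$ and must check $m\asymp n$ or handle it; with $\rmax,\cmax=\calO(1)$ one has $\sum r_i=\sum c_j$ and every active line has sum $\ge1$, so $m\le\sum r_i\le\cmax n$ and $n\le\rmax m$, hence $m\asymp n$). Theorem \ref{thm: upper bound} only yields $\calO(\sqrt n\log n)$, so I will prove directly that $\bE(C)=\calO(\sqrt n)$ in these regimes, replacing the logarithmic factor with a birthday-type tail bound.

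The plan is to bound the collision time of the trajectory. The path alternates between row moves (sampling a new column among the zeros of the current row) and column moves (sampling a new row among the ones of the current column). After $k$ steps roughly $k/2$ distinct columns and $k/2$ distinct rows have been visited. At a row move from row $R$, the new column lies in the visited set $\mathcal C_k$ with probability $|\{j\in\mathcal C_k:M(R,j)=0\}|/(n-r_R)$. In the sparse case $r_R\le\rmax=\calO(1)$, so the denominator is $\ge n-\calO(1)$. The numerator is $\ge|\mathcal C_k|-\rmax$, so the hazard of stopping at step $k$ is at least $c\,k/n$ once $k$ is a large constant. In the balanced case $n-r_R\le n$ and $|\{j\in \mathcal C_k: M(R,j)=0\}|$ may be small, so I will use column moves instead. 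From column $C$ the new row lands in the visited set $\mathcal R_k$ with probability $|\{i\in\mathcal R_k:M(i,C)=1\}|/c_C$, with $c_C\le m$. Every visited row $i$ other than the current one entered through a column move, and every visited column $j$ through a row move, so there are many forced ones/zeros among visited cells.

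The main obstacle is lower-bounding these hazards uniformly over the adversarial state $M$, since the visited lines may avoid each other. The cleanest route is probably to reuse the conditional hazard estimate underlying the proof of Theorem \ref{thm: flips-balance}, which presumably shows that at each step the collision hazard is at least $c\,k/n$ given the past. The conditions $\cmin\ge\delta m$ and $\rmax\le\Delta n$ give each row at least $(1-\Delta)n$ zeros and each column at least $\delta m$ ones. I would try to bound the hazard by averaging over two consecutive steps and using that the sampled column is nearly uniform over a $\Theta(n)$-size set. Given such a hazard bound, $\bP(C>k)\le\prod_{\ell\le k}(1-c\ell/n)\le e^{-ck^2/(2n)}$, so summing gives $\bE(C)=\sum_k\bP(C>k)=\calO(\sqrt n)$.

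Combining this with $\bE(L)=\Omega(\sqrt n)$ gives $\bE(C)/\bE(L)=\calO(1)$. Together with $\bE(C)/\bE(L)=\Omega(1)$ from $L=\calO(C)$, this proves rate optimality in both regimes.
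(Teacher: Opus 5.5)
Your sparse case is correct and is the paper's argument: the same count giving $m\asymp n$, the same row-move hazard $\ge(|\mathcal C_k|-\rmax)/n$, a Gaussian tail for $R$, and then Theorem~\ref{thm: flips-sparse}.

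\textbf{The gap is in the balanced case.} You propose to borrow a per-step bound ``collision hazard $\ge ck/n$ given the past'' from the proof of Theorem~\ref{thm: flips-balance}. That proof contains no such bound, and no such bound holds uniformly over histories, even after averaging two consecutive half-steps. Suppose the history $a_1,b_1,\dots,a_k,b_k$ and the next column $a_{k+1}$ satisfy $M(b_i,a_j)=1$ for $j\le i$ and $M(b_i,a_j)=0$ for $j>i$.
\begin{itemize}
\item This pattern is consistent with the forced cells $M(b_i,a_i)=1$ and $M(b_i,a_{i+1})=0$, and the walk follows it with positive probability.
\item For $k=o(n)$ the block can be planted in a matrix satisfying $\cmin\ge\delta m$ and $\rmax\le\Delta n$ with some slack, since each margin moves by at most $k+1$.
\item Every visited column is a one in row $b_k$, so the row move cannot close a loop. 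Every visited row is a zero in column $a_{k+1}$, so the column move cannot close one either. The conditional hazard is exactly $0$.
\end{itemize}
The ``forced'' staircase cells you point to are precisely the wrong type for a collision. Near-uniformity of the new column over at least $(1-\Delta)n$ zeros only gives \emph{upper} bounds on hitting probabilities; the paper uses it that way, for the lower bound on $L$. So the product bound $\prod_\ell(1-c\ell/n)$ is unjustified.

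\textbf{The missing idea} is the two-trajectory block argument from the proof of Theorem~\ref{thm: upper bound}, which averages over cells rather than over steps. Compare the path with an independent continuation $(x_k,y_k)$. For each pair $(y_j,a_i)$, one of two things happens:
\begin{itemize}
\item if $M(y_j,a_i)=0$, the second path jumps from row $y_j$ to column $a_i$ with probability $\ge1/n$;
\item if $M(y_j,a_i)=1$, the first path jumps from column $a_i$ to row $y_j$ with probability $\ge1/m$.
\end{itemize}
Whatever the configuration, the $t_2^2$ such pairs with $t_2=\lfloor3\sqrt{m+n}\rfloor$ at least halve the survival probability per block, giving $\bP(A_{2t_1t_2,2t_1t_2})\le2^{-t_1}$.

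The balanced hypothesis enters only through $\bP(a_{t_1t_2+1}=i\mid\cdot)\le1/((1-\Delta)n)$. This replaces the union bound over the starting column $x_1$, which is the source of the logarithm in Theorem~\ref{thm: upper bound}. It yields $\bP(\min(R,R')>2t_1t_2)\le2^{-t_1}/(1-\Delta)$ for every $t_1$, hence $\bE(C)\lesssim\bE\min(R,R')\le4t_2/(1-\Delta)=\calO(\sqrt n)$. Combined with $\bE(L)=\Omega(\sqrt n)$, this gives rate optimality in the balanced case. Without this mechanism, or another that copes with adversarial configurations, your balanced case does not go through.
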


The next theorem gives a weaker but more one-sided guarantee. It does not require $m\asymp n$ and does not assert constant work per flipped entry; it shows that a half-balanced margin condition is enough to prevent the closing loop from being shorter than the explored path by more than a polylogarithmic factor in expectation. Since $L=O(C)$ for every transition, any sampler of this form has $\bE(C)/\bE(L)=\Omega(1)$; the theorem is therefore near-optimal up to the displayed polylogarithmic factor.

\begin{theorem}[Half-balanced efficiency]\label{thm:half-balanced-efficiency}\leavevmode\par\noindent
Let $\delta$ and $\Delta$ be fixed constants with $0<\delta<1/2<\Delta<1$. Assume that either
\[
    \rmax\le \delta n
    \qquad\text{or}\qquad
    \cmin\ge \Delta m .
\]
Then one step of the \snake algorithm satisfies
\[
    \frac{\bE(C)}{\bE(L)}=\calO_{\delta,\Delta}\bigl(\log^{13}(m+n)\bigr).
\]
\end{theorem}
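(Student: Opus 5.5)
We view one \snake step as a random walk on the bipartite vertex set $[m]\sqcup[n]$. The walk alternates between \emph{row moves}: from a row $R$ to a uniformly random column $j$ with $M(R,j)=0$. It also makes \emph{column moves}: from a column $C$ to a uniformly random row $i$ with $M(i,C)=1$. The walk stops at the first revisit of a row or column coordinate. In this language, $C$ is essentially the hitting time of the first repeat, and $L$ is twice the length of the cycle closed at that moment, i.e.\ the distance along the path from the repeated vertex back to the current position. By symmetry (transpose the matrix and complement $0\leftrightarrow 1$, which swaps the roles of $\rmax\le\delta n$ and $\cmin\ge\Delta m$), it suffices to treat the case $\rmax\le\delta n$ with $\delta<1/2$.

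The plan has three steps.

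\textbf{Upper bound on $\bE(C)$.} We take $\bE(C)=\calO(\log(m+n)\sqrt{m+n})$ directly from Theorem~\ref{thm: upper bound}. The real work is a matching lower bound $\bE(L)\gtrsim \bE(C)/\mathrm{polylog}$. It is not enough to lower bound $\bE(L)$ by $\sqrt{\cdot}$, because $\bE(C)$ may itself be much smaller than $\sqrt{m+n}$ for skewed margins.

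\textbf{Reduction to a conditional statement.} Write $\tau$ for the stopping step. We will show that for every $k$,
\[
\bP(L\ge k/\mathrm{polylog}\mid \tau\ge k)\ge 1/\mathrm{polylog}.
\]
Summing over dyadic scales $k=2^\ell$ then gives $\bE(L)\gtrsim\sum_\ell 2^\ell\bP(\tau\ge 2^\ell)/\mathrm{polylog}\gtrsim \bE(C)/\mathrm{polylog}$, where the number of dyadic scales contributes one further logarithm.

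To prove the conditional bound, condition on the first $k$ steps with no repeat. Consider the next $k'$ steps, with $k'$ chosen so that the walk has a non-negligible chance of closing a loop in that window. The loop length is short only if the walk hits one of the last $k/\mathrm{polylog}$ visited coordinates. So we must show that hitting the \emph{recent} part of the trajectory is not much likelier than hitting the \emph{old} part.

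\textbf{Role of the half-balanced hypothesis.} This is where $\delta<1/2$ enters. A row move lands on a specific column with probability $1/(n-r_R)\le 1/((1-\delta)n)$, which is nearly uniform over columns. A column move lands on a specific row with probability $1/c_C$, which can be highly nonuniform. Two consecutive steps form a row$\to$row kernel $K$, and we need a comparison showing that $K$ cannot concentrate its mass on recently visited rows. Concretely, the plan is to show that the probability of returning to the set of the last $s$ visited rows within $t$ steps is at most $\mathrm{polylog}\cdot (s/k)$ times the probability of hitting all $k$ visited rows. The argument is a greedy/averaging one: because each row has more zeros than ones ($r_R\le\delta n<n/2$), the zero-columns of consecutive rows overlap heavily. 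This should make the walk's position decorrelate from its recent history after $\calO(\log)$ steps.

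\textbf{Main obstacle.} I expect this comparison to be the hardest step. Adversarial margins could create small dense clusters (columns with few ones) that trap the walk and force short loops. The intended remedy is to do a dyadic bucketing of column sums $c_C$ and of row degrees, and to apply the comparison within each bucket. Each bucketing, union bound, and scale sum costs a logarithm, and these accumulated losses are what should produce the $\log^{13}(m+n)$ factor rather than a constant.
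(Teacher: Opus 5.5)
Your reduction to the case $\rmax\le\delta n$ via $M\mapsto(1-M)^\top$ is valid, and so is the dyadic-scale reduction. You also place the hypothesis correctly. Every row has at least $(1-\delta)n>n/2$ zeros, so the one-step laws from any two starting points overlap. This is the paper's Dobrushin bound $1/(2-2\delta)<1$, which gives mixing in $t_1=O(\log N)$ steps, $N=m+n$.

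The gap is in the step that carries the theorem: that the closing loop is not much shorter than the explored path. You announce it but do not prove it, and the comparison you formulate is false as a pathwise statement. After mixing, the chance of hitting a set of visited rows depends on their masses, not their number. In the half-balanced regime these masses can differ by a factor of order $m$:
\begin{itemize}
\item a row with $\delta n/2$ ones in columns of sum $1$ is reached with probability at least $\delta/2$ from every other row;
\item a row whose ones lie in columns of sum about $m/2$ is reached with probability $O(1/m)$.
\end{itemize}
Whether the recent window of $s$ rows happens to contain one heavy row therefore decides the ratio. So no bound of the form $\mathrm{polylog}\cdot s/k$ holds path by path; only an averaged statement can be true, and proving it is the whole difficulty. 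Before mixing, returns to a specific row $i$ are only dominated by $2p_i$ with $p_i=\sum_{M(i,j)=1}1/(nc_j)$.

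You also have no usable lower bound for hitting the old part of the path. The one-step bound $\sum_{i\le k}p_{b_i}-r_{b_k}/n$ from the proof of Lemma~\ref{lem: r'} is vacuous when $r_{b_k}/n$ can be as large as $\delta$. In addition, conditioning on the possibly rare event $\{\tau\ge k\}$ changes the law of the path. Dyadic bucketing of column sums does not address any of these issues.

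The missing idea is to drop conditional comparisons and instead match two unconditional scales. These are expressed through the collision probability $\sum_iq_i^2$ of the mixed law $q$ (the law of $a_{t_1+1}$) and through $\max_kp_k$.

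\textbf{Upper bound.} The paper couples the chain at block times $t_1,2t_1,\ldots$ with i.i.d.\ $q$-samples, using maximal coupling with error $\varepsilon=1/(100t_1^2N^2)$ per block. A Poissonized birthday bound then gives $\bP(R>3t_1t_2)\le 2/(7N)$ with $t_2\approx 10\log N/(\sum_iq_i^2)^{1/2}$. A Chernoff argument gives $\bP(R>2t_3)\le N^{-2}$ with $t_3\approx 20\log N/(\max_kp_k)^2$. Hence $\bE(R)\le4\min(t_1t_2,t_3)$.

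\textbf{Gap lower bound.} Take $d=\lfloor t_2/(30t_1^3)\rfloor$ and count close collision pairs to first moment, $\sum_i\sum_{j\le3t_1t_2}\sum_{k\le d}\bP(a_j=i)\bP(a_{j+k}=i\mid a_j=i)$. Visits are bounded by $q_i+\varepsilon$ after mixing and by $2p_i$ during the first $t_1$ steps. This yields $\bP(R-R_0\le d)\le 6/13$ whenever $t_3>(At_1)^{13}/4$. Otherwise $\bE(R)$ is itself polylogarithmic and there is nothing to prove.

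The quantities $\sum_iq_i^2$ and $\max_kp_k$ absorb exactly the heavy/light disparity that breaks your ratio. The exponent $13$ comes from balancing these parameters, not from bucketing. The same argument applied to $(1-M)^\top$ controls $R'-R'_0$. Combining $L\ge2\min(R-R_0,R'-R'_0)$ with $C\asymp\min(R,R')$ gives $\bE(L)\ge \frac{2}{13(At_1)^{13}}\bE(\min(R,R'))$, which is the theorem.
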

The efficiency of our algorithm also compares favorably to existing algorithms, especially in the sparse case. For example, assuming $m = n$ and $\rmax = \cmax = \calO(1)$, the swap algorithm has $\bE(C) = \Theta(1)$ per attempted swap and $\bE(L) = \calO(1/n^2)$ flipped entries per attempt. Thus its work per flipped entry, measured by $\bE(C)/\bE(L)$, is $\Omega(n^2)$; equivalently, its flipped entries per unit work are only $\calO(1/n^2)$. 

\subsection{Mixing time}
Theorem \ref{thm: correctness} gives irreducibility and reversibility of the raw \snake chain. The permutation-matrix theorem below analyzes the raw chain directly, including the parity issue through the random cycle-length distribution. For the general fixed-margin comparison theorem, we use lazy versions of the chains to remove possible periodicity in finite state spaces. Fix margin sums $\row, \col$ and define $\bP_{\snake,\mathrm{lazy}}=(I+\bP_{\snake})/2$. The $\epsilon$-mixing time of the lazy \snake algorithm is 
$$t_{\mix}^\snake(\epsilon):= \inf_t \{\sup_{M_0\in\Sigma(\row,\col)}\lVert \bP_{\snake,\mathrm{lazy}}^t(M_0,\cdot) - U(\Sigma(\row,\col)) \rVert_\TV \leq \epsilon\},$$
where $\lVert \mu - \nu \rVert_\TV = \sup_A \lvert \mu(A) - \nu(A) \rvert$ is the total-variation (TV) distance between two probability measures. Defining $\bP_{\swap,\mathrm{lazy}}=(I+\bP_{\swap})/2$ gives $t_{\mix}^\swap(\epsilon)$ analogously for the lazy swap algorithm. For a reversible kernel $P$, write $\gap(P)=1-\lambda_2(P)$ for its spectral gap. We say a Markov chain on the space of $m\times n$ binary matrices is rapidly mixing (or polynomial mixing) if the $\epsilon$-mixing time is upper bounded by a polynomial of $(m,n,1/\epsilon)$. Establishing such bounds is difficult because $|\Sigma(\row,\col)|$ is often exponential in $m,n$. \citet{kannan1999simple} conjectured that the swap algorithm is rapidly mixing for every feasible pair of margins. Earlier work established rapid mixing for many structured families; see \citet{erdHos2022mixing} for a unified treatment, and \citet{tikhomirov2020sharp,tikhomirov2022regularized} for sharper functional inequalities and mixing estimates in regular regimes. The conjecture for arbitrary feasible margins was recently resolved by \citet{fu2026spectral}, who prove the worst-case-sharp bound
\[
    \gap(\bP_{\swap,\mathrm{lazy}})
    \geq \binom{m}{2}^{-1}\binom{n}{2}^{-1}.
\]
For a singleton fixed-margin state space, we adopt the convention $\gap(P)=1$; all mixing statements are then immediate.

Before using this general swap-chain result, we give a sharper direct analysis in the sparsest non-degenerate square case. When $\row=\col=\mathbf{1}_n$, each matrix in $\Sigma(\mathbf{1}_n,\mathbf{1}_n)$ is a permutation matrix, so the state space can be identified with $S_n$. In this case, one step of the \snake algorithm is a random cycle walk whose cycle length has the birthday scale $\Theta(\sqrt n)$.
\begin{theorem}[Permutation matrices]\label{thm:perm-mixing}
Assume $m=n$ and $\row=\col=\mathbf{1}_n$. Let $t_{\mathrm{mix}}^{\mathrm{perm}}(\epsilon)$ denote the total-variation mixing time of the raw \snake chain on $\Sigma(\mathbf{1}_n,\mathbf{1}_n)$. Then
\[
    t_{\mathrm{mix}}^{\mathrm{perm}}(1/4)=\Theta(\sqrt n\log n).
\]
Consequently, for every fixed $0<\epsilon<1/2$,
\[
    t_{\mathrm{mix}}^{\mathrm{perm}}(\epsilon)=\Theta_\epsilon(\sqrt n\log n).
\]
\end{theorem}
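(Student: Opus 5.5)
The plan is to identify the raw \snake chain on $\Sigma(\mathbf 1_n,\mathbf 1_n)\cong S_n$ as a random walk on $S_n$ driven by a conjugation-invariant (class) measure, and then to obtain the upper bound from character estimates and the lower bound from a fixed-point (coupon-collector) statistic.

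\emph{Step 1: reduction to a random cycle walk.} Write $M$ as $M(i,\sigma(i))=1$. From a one-entry $(i,\sigma(i))$, Algorithm \ref{alg:Snake} moves to a uniformly chosen zero $(i,j)$ with $j\neq\sigma(i)$. From a zero entry $(i,j)$ the next move is forced, to the unique one $(\sigma^{-1}(j),j)$. Hence, after at most one initial step, the trajectory is a sequence of columns $j_1,j_2,\dots$, each chosen uniformly among the $n-1$ columns other than the current one. The walk stops at the first repeated column (or row).

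Flipping the closed loop replaces $\sigma$ by $\sigma\circ\gamma$, where $\gamma$ is a cycle on the columns of the loop. Relabelling columns by a uniform permutation leaves the law of the trajectory invariant given the loop length $\ell$. Therefore, conditionally on $\ell$, the cycle $\gamma$ is a uniform $\ell$-cycle. So $\bP_\snake(\sigma,\cdot)$ is the law of $\sigma\gamma$, with $\gamma\sim\mu:=\sum_\ell p_\ell\,U(\ell\text{-cycles})$, and $\mu$ is a class function.

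A birthday-problem computation gives the law of $\ell$. The loop length equals the distance from the first repeat back to the earlier occurrence of the repeated column, and this is roughly uniform on $[2,T]$, where $T$ is the repeat time. The repeat time satisfies $\bP(T>t)\approx e^{-t^2/2n}$. Together these give $\bE\ell\asymp\sqrt n$ and $\bP(\ell\le a\sqrt n)=O(a)$. They also give a bound $\bP(\ell\ge b\sqrt n)\le e^{-cb^2}$, and show that both parities of $\ell-1$ occur with probability bounded below. All of this is consistent with Theorems \ref{thm: flips-sparse} and \ref{thm: upper bound}.

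\emph{Step 2: lower bound.} Start from the identity. Every point of $[n]$ not moved by any of $\gamma_1,\dots,\gamma_t$ remains a fixed point of $\sigma_t$. By conjugation invariance, each step moves a uniformly random $\ell$-subset. The expected number of never-touched points is therefore $n\,\bE\prod_s(1-\ell_s/n)\ge n\exp(-C t/\sqrt n)$, using $\bE\ell\asymp\sqrt n$ and the light tail of $\ell$. For $t\le c\sqrt n\log n$ this is at least $n^{1/2}$. A second-moment bound, using negative correlation of the untouched indicators, shows that this count concentrates. Under the uniform law, the number of fixed points is approximately Poisson$(1)$. The test set $\{\#\text{fixed points}\ge n^{1/4}\}$ then gives TV distance close to $1$, which yields $t^{\mathrm{perm}}_{\mathrm{mix}}(\epsilon)\ge c_\epsilon\sqrt n\log n$ for every fixed $\epsilon<1/2$.

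\emph{Step 3: upper bound.} We use the Diaconis--Shahshahani bound
\[
4\|\bP_\snake^t(\mathrm{id},\cdot)-U\|_{\TV}^2\le \sum_{\lambda\ne(n)} d_\lambda^2\Bigl|\sum_\ell p_\ell\, r_\lambda(\ell)\Bigr|^{2t},
\]
where $r_\lambda(\ell)=\chi_\lambda(\ell\text{-cycle})/d_\lambda$ is the character ratio. The sign representation contributes $|\bE(-1)^{\ell-1}|^{2t}$, which is exponentially small because both parities have mass bounded below; this disposes of periodicity. For the remaining $\lambda$, I would restrict attention to the bulk $\ell\in[a\sqrt n,b\sqrt n]$, which carries probability bounded below, and control the rest by $|r_\lambda|\le1$. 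On the bulk I would invoke the character-ratio estimates used by Berestycki--Schramm--Zeitouni for random $k$-cycles with $k=o(n)$, of the form $|r_\lambda(k)|\le \max(\lambda_1/n,\lambda_1'/n)^{ck}$ plus the refined bounds near the trivial representation. Since the walk with a fixed $k\asymp\sqrt n$ mixes in $(n/k)\log n\asymp\sqrt n\log n$ steps, averaging over $k$ in the bulk should preserve this order.

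The main obstacle is Step 3. The bounds from Berestycki--Schramm--Zeitouni are stated for a fixed cycle length. The difficulty is showing that $|\sum_\ell p_\ell r_\lambda(\ell)|\le 1-c\,\ell_*(n-\lambda_1)/n$, with $\ell_*\asymp\sqrt n$, uniformly over $\lambda$ with $\lambda_1$ (or $\lambda_1'$) close to $n$. This needs a careful expansion of $r_\lambda(\ell)$ and some cancellation control between different $\ell$. An alternative route I would try is a comparison with the random $k$-cycle walk at $k=\lfloor\sqrt n\rfloor$ via $\ell^2$-spectral comparison, exploiting that both chains are class-function walks and hence diagonalised simultaneously.
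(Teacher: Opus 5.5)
Your reduction to a class-function walk is sound, and so is your lower bound. Your lower bound is a second-moment count of points untouched by $\gamma_1,\dots,\gamma_t$, compared with the Poisson$(1)$ fixed-point count under $U(S_n)$. It is a valid alternative to the paper's coupon-collector/derangement argument, and it even drives the distance to $1$.

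The gap is in Step~3, and it is not the one you identify. For a class function, the right-hand side of the Diaconis--Shahshahani bound you display is exactly the chi-square distance from uniform of $P_n^t$, the $t$-step law started at the identity. That quantity is astronomically large at $t\asymp\sqrt n\log n$. Conditional on the stopping time $T$, the cycle length is uniform on $\{2,\dots,T-1\}$, so $p_2=\bE[1/(T-2)]\ge c/\sqrt n$. With probability $p_2^t$ every step is a transposition, which confines $\sigma_t$ to a set of size at most $\binom{n}{2}^{t}$. Cauchy--Schwarz then gives
\[
\sum_{\lambda\ne(n)} d_\lambda^2\Bigl|\sum_\ell p_\ell r_\lambda(\ell)\Bigr|^{2t}
= n!\sum_{\sigma}P_n^t(\sigma)^2-1
\ \ge\ n!\,p_2^{2t}\binom{n}{2}^{-t}-1
\ \ge\ \exp\bigl((1-o(1))\,n\log n\bigr)
\]
for $t=O(\sqrt n\log n)$. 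So no estimate of the averaged ratios $\sum_\ell p_\ell r_\lambda(\ell)$, near the trivial representation or anywhere else, can give $O(\sqrt n\log n)$. Concretely, ``controlling the rest by $|r_\lambda|\le 1$'' leaves coefficients no smaller than the constant $1-p_I$. That cannot beat $d_\lambda^2$ as large as $n!\,e^{-O(\sqrt n)}$.

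The same objection rules out your fallback. An eigenvalue-by-eigenvalue comparison with the $\lfloor\sqrt n\rfloor$-cycle walk would make this chi-square distance small, which is impossible. A Dirichlet-form comparison controls only the spectral gap, and since $t_{\mathrm{rel}}\ge c\sqrt n$, the resulting bound $t_{\mathrm{rel}}\log n!$ is at least of order $n^{3/2}\log n$.

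The missing idea is to condition on the cycle lengths before passing to $L^2$, and to keep total variation at the level of the mixture.
Write $P_n=p_IP_{n,I}+(1-p_I)P_{n,I^c}$ with $I_n=[a_1\sqrt n,a_2\sqrt n]$ and $p_I\ge p_0$. Expand $P_n^t$ as a mixture over words in the two factors.
Each word is a convolution of class functions, so its Fourier transform at $\lambda$ is a product of scalars. The $P_{n,I^c}$ factors have modulus at most $1$, so a word with $s$ good factors obeys the Plancherel bound for $P_{n,I}^s$.
Writing $\hat P_{n,I}(\lambda)$ for the scalar $\sum_{\ell\in I_n}(p_\ell/p_I)r_\lambda(\ell)$ and $G_t\sim\Binom(t,p_I)$ for the number of good factors, convexity of total variation gives
\[
\|P_n^t-U\|_{\TV}\le \bP(G_t<s_0)+\sup_{s\ge s_0}\tfrac12\Bigl(\sum_{\lambda\ne(n)}d_\lambda^2|\hat P_{n,I}(\lambda)|^{2s}\Bigr)^{1/2}.
\]
A Chernoff bound disposes of the first term once $t\ge 2s_0/p_0$.

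For $P_{n,I}$ no cancellation across $\ell$ is needed. Hough's estimate $|r_\lambda(\ell)|\le d_\lambda^{-\ell/(n(\log n+c_1))}$ holds uniformly for $2\le\ell\le\delta n$ and all non-trivial, non-sign $\lambda$, including those near the trivial one. The bound is decreasing in $\ell$, so $|\hat P_{n,I}(\lambda)|\le d_\lambda^{-a_1/(\sqrt n(\log n+c_1))}$. Combined with $\sum_\lambda d_\lambda^{-c_2/\log n}=O(1)$ and $d_\lambda\ge n-1$, this makes the sum $O(e^{-C})$ at $s_0\asymp\sqrt n\log n$.

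Your parity argument must then be run for the conditioned law $P_{n,I}$. The paper gets $O(n^{-1/2})$ for the sign coefficient from $p_\ell-p_{\ell+1}=\bP(T=\ell+1)/(\ell-1)$.
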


The lower bound in Theorem \ref{thm:perm-mixing} follows from a coupon-collector obstruction: before enough raw labels have been touched, many labels are still fixed. The upper bound uses Fourier analysis on $S_n$. After conditioning on the event that the snake cycle has length in a constant-width interval around $\sqrt n$, Hough's character-ratio estimates for random $k$-cycles imply contraction in all non-trivial, non-sign irreducible representations; the sign representation is controlled by the smoothing of the random cycle length distribution. The full proof is in Appendix \ref{app:proof-perm-mixing}.

For general fixed margins, combining this new swap-chain bound with a direct Dirichlet-form comparison gives a universal polynomial bound for \textsc{Snake}. The explicit estimate is conservative---it only uses the probability that a \snake trajectory realizes one prescribed checkerboard swap---but it requires no sparsity, density, regularity, or stability assumption on the margins.
\begin{theorem}[All feasible margins]\label{thm: rapid mixing}
Let $\row,\col$ be any feasible pair of margins for an active $m\times n$ matrix, and put $d=\max\{m,n\}$. Then
\[
    \gap(\bP_{\snake,\mathrm{lazy}})\geq \frac{1}{mn d^3}.
\]
Consequently, for every $0<\epsilon<1$,
\[
    t_{\mix}^{\snake}(\epsilon)
    \leq mn d^3\left(\log |\Sigma(\row,\col)|+\log\epsilon^{-1}\right)
    \leq mn d^3\left(mn\log 2+\log\epsilon^{-1}\right).
\]
In particular, the lazy \snake chain is rapidly mixing for every feasible pair of margins.
\end{theorem}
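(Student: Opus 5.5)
The plan is a direct Dirichlet-form comparison with the lazy swap chain, then the gap bound of \citet{fu2026spectral}. Both lazy chains are reversible with respect to the uniform distribution $\pi$ on $\Sigma(\row,\col)$, by Theorem \ref{thm: correctness} for \snake and by \citet{diaconis1998algebraic} for swap. Their Dirichlet forms are
\[
\mathcal E_P(f,f)=\tfrac12\sum_{x,y}\pi(x)P(x,y)(f(x)-f(y))^2 .
\]
If for every pair $x\ne y$ differing by a checkerboard swap we have $\bP_\snake(x,y)\ge \alpha\,\bP_\swap(x,y)$, then $\mathcal E_{\snake,\mathrm{lazy}}\ge \alpha\,\mathcal E_{\swap,\mathrm{lazy}}$. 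Here we simply drop all non-checkerboard \snake transitions from the sum, which only decreases it. Since the variance functional is the same for both chains, this gives $\gap(\bP_{\snake,\mathrm{lazy}})\ge \alpha\,\gap(\bP_{\swap,\mathrm{lazy}})\ge \alpha\binom m2^{-1}\binom n2^{-1}$.

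The key step is a lower bound on $\bP_\snake(x,y)$ when $y$ is obtained from $x$ by swapping a checkerboard on rows $i,i'$ and columns $j,j'$. Say $x(i,j)=x(i',j')=1$ and $x(i,j')=x(i',j)=0$. I would exhibit one explicit trajectory:
\begin{itemize}
\item $S_1=(i,j)$, chosen with probability $1/(mn)$;
\item $S_2=(i,j')$, a zero in row $i$, chosen with probability $1/(n-r_i)\ge 1/n$;
\item $S_3=(i',j')$, a one in column $j'$, chosen with probability $1/c_{j'}\ge 1/m$;
\item $S_4=(i',j)$, a zero in row $i'$, chosen with probability $1/(n-r_{i'})\ge 1/n$.
\end{itemize}
I need to check that no loop closes before step 4 and that one closes exactly at step 4. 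The first two entries share a row, and the third introduces a new row $i'\neq i$. At the fourth entry the column $j$ repeats that of $S_1$, so the four entries form the alternating checkerboard loop, which is then flipped to give $y$. Summing over the four possible starting corners gives a slightly better constant, but one corner suffices. This yields
\[
\bP_\snake(x,y)\ge \frac{1}{mn}\cdot\frac1{n^2 m}.
\]
Meanwhile $\bP_\swap(x,y)=\binom m2^{-1}\binom n2^{-1}$, up to the convention for how the swap chain picks the block; I would pin that convention down to match \citet{fu2026spectral}.

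Combining the two bounds, and taking care that the constants from the $\binom{\cdot}{2}$ factors and from the corner count cancel favorably, I expect $\gap(\bP_{\snake,\mathrm{lazy}})\ge \frac{1}{2}\cdot\frac{1}{m^2n^3}\cdot(\text{a constant}\ge 2)$. This is at least $1/(mnd^3)$ once $m^2n^3\le mn d^3$, which holds since $mn^2\le d^3$. Laziness halves both Dirichlet forms equally, so it does not affect the ratio.

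The mixing bound then follows from the standard relaxation-time estimate for reversible lazy chains,
\[
t_{\mix}(\epsilon)\le \gap^{-1}\log\bigl(1/(\epsilon\,\pi_{\min})\bigr),
\]
with $\pi_{\min}=1/|\Sigma(\row,\col)|$ and $|\Sigma(\row,\col)|\le 2^{mn}$. The singleton case is covered by the stated convention $\gap(P)=1$.

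The main obstacle is bookkeeping the constants: the exact normalization of the swap chain in \citet{fu2026spectral}, and making sure that summing over the corner choices and orientations gives enough slack for the clean constant $1/(mnd^3)$. If the constants come out short by a factor of 2, I would sum over all four corner starts, which gives a factor of 4, or use the sharper bounds $n-r_i\le n-1$ and $c_{j'}\le m-1$.
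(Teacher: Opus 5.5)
Your proposal is correct and follows essentially the same route as the paper. The argument is a Dirichlet-form comparison with the lazy swap chain through one explicit four-step checkerboard trajectory, followed by the gap bound of \citet{fu2026spectral} and the standard relaxation-time bound with $\pi_{\min}=|\Sigma(\row,\col)|^{-1}\ge 2^{-mn}$.

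On your constant worry: the factors cancel exactly. Your trajectory gives $\alpha\ge \binom m2\binom n2/(m^2n^3)$, hence $\gap(\bP_{\snake,\mathrm{lazy}})\ge 1/(m^2n^3)\ge 1/(mnd^3)$. There is no stray factor $\tfrac12$, so you do not need to sum over corners; the paper simply bounds each of the three conditional choices by $1/d$.
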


The universal estimate in Theorem \ref{thm: rapid mixing} settles polynomial mixing but is not expected to be sharp for \textsc{Snake}. Theorem \ref{thm:perm-mixing} shows how much stronger a margin-specific analysis can be: in the canonical permutation-matrix case it gives the sharp order $\Theta(\sqrt n\log n)$ for the raw chain. Obtaining sharp \snake mixing-time estimates for broader margin classes remains an important open problem.

\section{Numerical Experiments}\label{sec:numerical}
In this section, we examine the empirical performance of the \snake algorithm and compare it with existing algorithms. The first set of experiments illustrates the move-size results from Section~\ref{subsubsec:number of flipped entries}. Then, we compare empirical convergence diagnostics among existing algorithms via commonly used test statistics. We compare the \snake algorithm with conditional-Poisson sequential importance sampling (SIS) on both the finch null-model example and two exactly calibrated Rasch-model examples. Finally, we implement the diagonal-constrained \dsnake variant and compare it with a classical directed edge-swap algorithm on a fixed-degree directed-network null problem. All timings in this section use compiled Rcpp implementations on the same machine. 
\subsection{Number of flipped entries}
Besides being rejection-free, an important theoretical advantage of the \snake algorithm is that it flips a lot more entries per step than existing algorithms such as the swap and rectangle loop algorithm.   
When $m \asymp n$, the results in Section~\ref{subsec: moresize} show that the \snake algorithm flips $\Theta(\sqrt{n})$ entries per step when the margin sums are either sparse or relatively balanced. To numerically investigate this, we design two experiments. In the first experiment, we fix the matrix size and vary the filled proportion. In the second experiment, we consider the sparsest possible case ($\row = \col = (1,1,\ldots, 1)^\top)$ and vary the dimension. We compare the number of flipped entries per iteration as a move-size diagnostic and the number of flipped entries per second as an implementation-throughput diagnostic. In each setting, each algorithm is implemented for $10^5$ steps.

In the first experiment, we choose $m = n = 300$ so the matrix is moderately large. For each $p \in \{0.01,0.05,0.1,0.2,0.3,0.4,0.5\}$, we initialize the matrices by generating independent $\Bern(p)$ random variables for every entry and remove deterministic rows and columns before sampling. For fixed fill probabilities bounded away from $0$ and $1$, such Bernoulli-initialized matrices satisfy balanced-type margin bounds with high probability as the dimension grows; the lowest-fill cases in Table \ref{tab:comparison_fix_size} should instead be read as finite sparse examples. Our results are summarized in Table \ref{tab:comparison_fix_size}. The \snake algorithm flips about $18$ entries per iteration across all fill levels. In contrast, Rectangle Loop and Swap become much more local as the matrix becomes sparse. The optimized \snake implementation also has the largest flipped-entries-per-second rate in each setting in Table \ref{tab:comparison_fix_size}.

We also consider the sparse case which satisfies the assumptions in Theorem \ref{thm: flips-sparse}. We consider the sparsest case where $\row = \col = (1,1,\ldots, 1)^\top \in \mathbb N^{n}$, and choose $n$ in $\{100,150,\ldots, 1000\}$ to understand our algorithm's scalability. Our results are presented in Figure \ref{fig:flipped_entries_sparse}. In this ultrasparse case, both the swap and the Rectangle Loop algorithm work poorly when dimension grows. Across five independent runs with $10^5$ attempts per run, the swap algorithm averages no more than $8.8$ total flipped entries for any $n\geq 300$, and at most $8\times 10^{-6}$ flipped entries per attempt for $n\geq 700$. The rectangle loop flips at most $0.0198\pm 0.0004$ entries per attempt when $n\geq 400$, with the maximum in that range occurring at $n=400$, and its move size decays as $n$ increases. In contrast, the mean number of flipped entries per attempt for the \snake algorithm increases from $15.122\pm 0.013$ at $n=100$ to $42.272\pm 0.056$ at $n=1000$, consistent with the $\sqrt n$ scale in Theorem~\ref{thm: flips-sparse} and with the more precise permutation-cycle calculation in Lemma~\ref{lem:perm-birthday}. Thus, \snake maintains much larger moves than these local methods when the matrices are very sparse.  

\begin{table}[htbp]
\centering
\small
\begin{tabular}{c c c l r r}
\toprule
Target fill & Active size & Actual fill & Method & Flips/iteration & Flips/second \\
\midrule
\multirow{3}{*}{$1\%$} & \multirow{3}{*}{$291\times 284$} & \multirow{3}{*}{$0.0106$}
& \snake & $\mathbf{18.0315}$ & $\mathbf{2.10\times 10^7}$ \\
& & & Rectangle Loop & $0.0821$ & $1.37\times 10^6$ \\
& & & Swap & $0.0006$ & $2.13\times 10^4$ \\
\midrule
\multirow{3}{*}{$5\%$} & \multirow{3}{*}{$300\times 300$} & \multirow{3}{*}{$0.0507$}
& \snake & $\mathbf{18.1475}$ & $\mathbf{2.11\times 10^7}$ \\
& & & Rectangle Loop & $0.3784$ & $7.57\times 10^6$ \\
& & & Swap & $0.0191$ & $6.37\times 10^5$ \\
\midrule
\multirow{3}{*}{$10\%$} & \multirow{3}{*}{$300\times 300$} & \multirow{3}{*}{$0.0997$}
& \snake & $\mathbf{18.1524}$ & $\mathbf{2.06\times 10^7}$ \\
& & & Rectangle Loop & $0.7167$ & $1.19\times 10^7$ \\
& & & Swap & $0.0636$ & $2.12\times 10^6$ \\
\midrule
\multirow{3}{*}{$20\%$} & \multirow{3}{*}{$300\times 300$} & \multirow{3}{*}{$0.1991$}
& \snake & $\mathbf{18.1686}$ & $\mathbf{2.04\times 10^7}$ \\
& & & Rectangle Loop & $1.2815$ & $1.60\times 10^7$ \\
& & & Swap & $0.2007$ & $5.02\times 10^6$ \\
\midrule
\multirow{3}{*}{$30\%$} & \multirow{3}{*}{$300\times 300$} & \multirow{3}{*}{$0.3005$}
& \snake & $\mathbf{18.2089}$ & $\mathbf{2.05\times 10^7}$ \\
& & & Rectangle Loop & $1.6710$ & $1.86\times 10^7$ \\
& & & Swap & $0.3577$ & $8.94\times 10^6$ \\
\midrule
\multirow{3}{*}{$40\%$} & \multirow{3}{*}{$300\times 300$} & \multirow{3}{*}{$0.4010$}
& \snake & $\mathbf{18.1693}$ & $\mathbf{2.04\times 10^7}$ \\
& & & Rectangle Loop & $1.9233$ & $1.92\times 10^7$ \\
& & & Swap & $0.4608$ & $9.22\times 10^6$ \\
\midrule
\multirow{3}{*}{$50\%$} & \multirow{3}{*}{$300\times 300$} & \multirow{3}{*}{$0.4989$}
& \snake & $\mathbf{18.0941}$ & $\mathbf{1.95\times 10^7}$ \\
& & & Rectangle Loop & $1.9868$ & $1.81\times 10^7$ \\
& & & Swap & $0.5030$ & $1.01\times 10^7$ \\
\bottomrule
\end{tabular}
\caption{Comparison between the \snake, Rectangle Loop, and Swap algorithms on Bernoulli-initialized $300\times 300$ matrices. Rows and columns with deterministic margins are removed before running the chains; the active matrix dimensions and actual fill rates are shown. Each entry is based on $10^5$ iterations.}
\label{tab:comparison_fix_size}
\end{table}

\begin{figure}[htbp!]
\includegraphics[width = \textwidth]{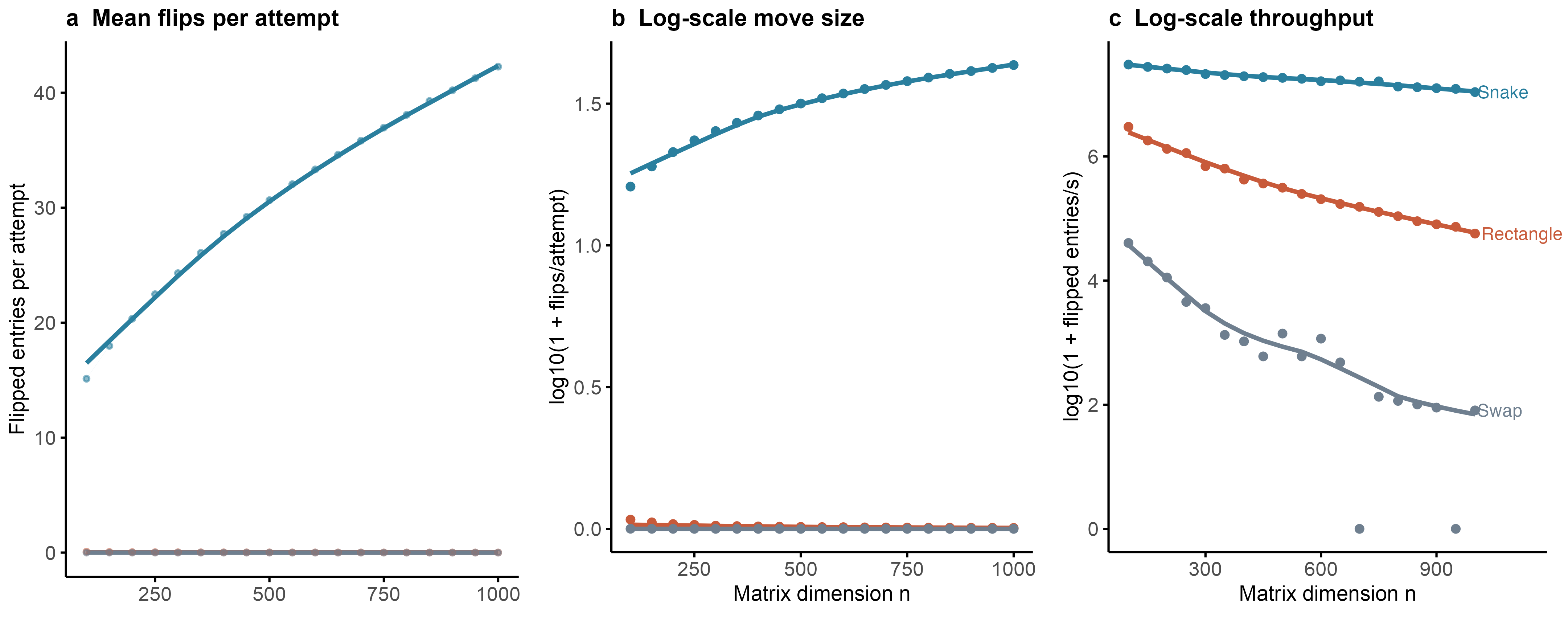}
\caption{Flipped entries in the sparsest square case $\row = \col = (1,1,\ldots, 1)^\top$, using five independent runs with $10^5$ attempts per method and dimension. Panel a gives the mean flipped entries per attempt; the shaded band is one run standard error. Panel b shows $\log_{10}(1+\text{flips/attempt})$ to separate the local methods near zero. Panel c shows implementation throughput as $\log_{10}(1+\text{flipped entries/second})$.}
\label{fig:flipped_entries_sparse}
\end{figure}

\subsection{Empirical convergence diagnostics on large, sparse matrices}
In practice, most large binary matrices are sparse. Therefore we take a closer look at the performance of our method on large, sparse matrices. To empirically compare convergence diagnostics among existing methods, we use a perturbation score $T_S$, a squared row-pair overlap score $T_1$ related to statistics used in \cite{ponocny2001nonparametric, verhelst2008efficient}, and a local adjacent-column overlap score $T_2$.

Given a matrix $M_1$ with the same margins as the initialization matrix $M_0$, let
\[
N_1:=\sum_{i=1}^m\sum_{j=1}^n M_0(i,j),
\]
which is also the number of ones in $M_1$. Following \citet{strona2014fast}, we define the perturbation score by
\[
T_S(M_1;M_0)
:=
\frac{\sum_{i=1}^m\sum_{j=1}^n
M_0(i,j)\bigl(1-M_1(i,j)\bigr)}
{N_1}.
\]
Thus, $T_S(M_1;M_0)$ is the fraction of one-entries in $M_0$ that are no longer occupied in $M_1$. Since $M_0$ and $M_1$ have the same number of ones, it can equivalently be written as the normalized Hamming distance
\[
T_S(M_1;M_0)
=
\frac{1}{2N_1}
\sum_{i=1}^m\sum_{j=1}^n
\left|M_1(i,j)-M_0(i,j)\right|.
\]
The row-pair score is
\[
T_1(M_1)=\binom{m}{2}^{-1}\sum_{1\le i<j\le m}\bigl((M_1M_1^\top)_{ij}\bigr)^2,
\]
the average squared overlap over all unordered pairs of distinct rows. In the square banded examples, the adjacent-column diagnostic is
\[
T_2(M_1)=n^{-1}\sum_{j=1}^n\sum_{i=1}^n (M_1)_{ij}(M_1)_{i,j+1},
\]
with the column index interpreted cyclically. This last statistic is order-sensitive and measures how quickly local column structure from the banded initialization is destroyed.

We choose five algorithms: the \snake algorithm, the $\snakep$ algorithm, the swap algorithm \cite{besag1989generalized}, the rectangle loop algorithm \cite{wang2020fast}, and the Curveball algorithm \cite{strona2014fast} for comparison. The initial matrices are $n\times n$ banded matrices with row and column sums equal to $10$. Figure \ref{fig:mixing_sparse} focuses on the replicated $n=1000$ setting, where diagnostics are plotted against elapsed sampling time on a logarithmic scale; $\snakep$ performs an equal-margin label shuffle every five \snake transition attempts. Because Swap and Rectangle Loop are much cheaper per attempt than the nonlocal methods, each method is run for enough transition attempts to cover a comparable wall-clock window rather than for an identical attempt count. Curveball is the strongest practical comparator in this experiment because it also makes nonlocal row trades. On this elapsed-time scale, $\snakep$ reaches the perturbation threshold much faster than the other methods, \snake reaches the displayed squared row-pair-overlap threshold faster than Curveball, and $\snakep$ reaches the adjacent-column-overlap threshold fastest. Swap does not reach any of the three displayed panel-a--c thresholds in this window.

At the displayed panel-a--c thresholds, the mean elapsed sampling times across five runs are as follows. For $T_S\ge 0.95$, $\snakep$ takes $0.0008\pm0.0001$ seconds, Curveball $0.0300\pm0.0011$ seconds, \snake $0.0413\pm0.0025$ seconds, and Rectangle Loop $0.1433\pm0.0172$ seconds. For $T_1\le0.10$, \snake takes $0.0153\pm0.0010$ seconds, $\snakep$ $0.0180\pm0.0012$ seconds, Curveball $0.0218\pm0.0011$ seconds, and Rectangle Loop $0.0505\pm0.0041$ seconds; Figure \ref{fig:mixing_sparse}d displays this comparison directly. For $T_2\le0.10$, $\snakep$ takes $0.0011\pm0.0002$ seconds, \snake $0.0456\pm0.0033$ seconds, Curveball $0.0643\pm0.0017$ seconds, and Rectangle Loop $0.1486\pm0.0204$ seconds.

\begin{figure}[htbp!]
\includegraphics[width = \textwidth]{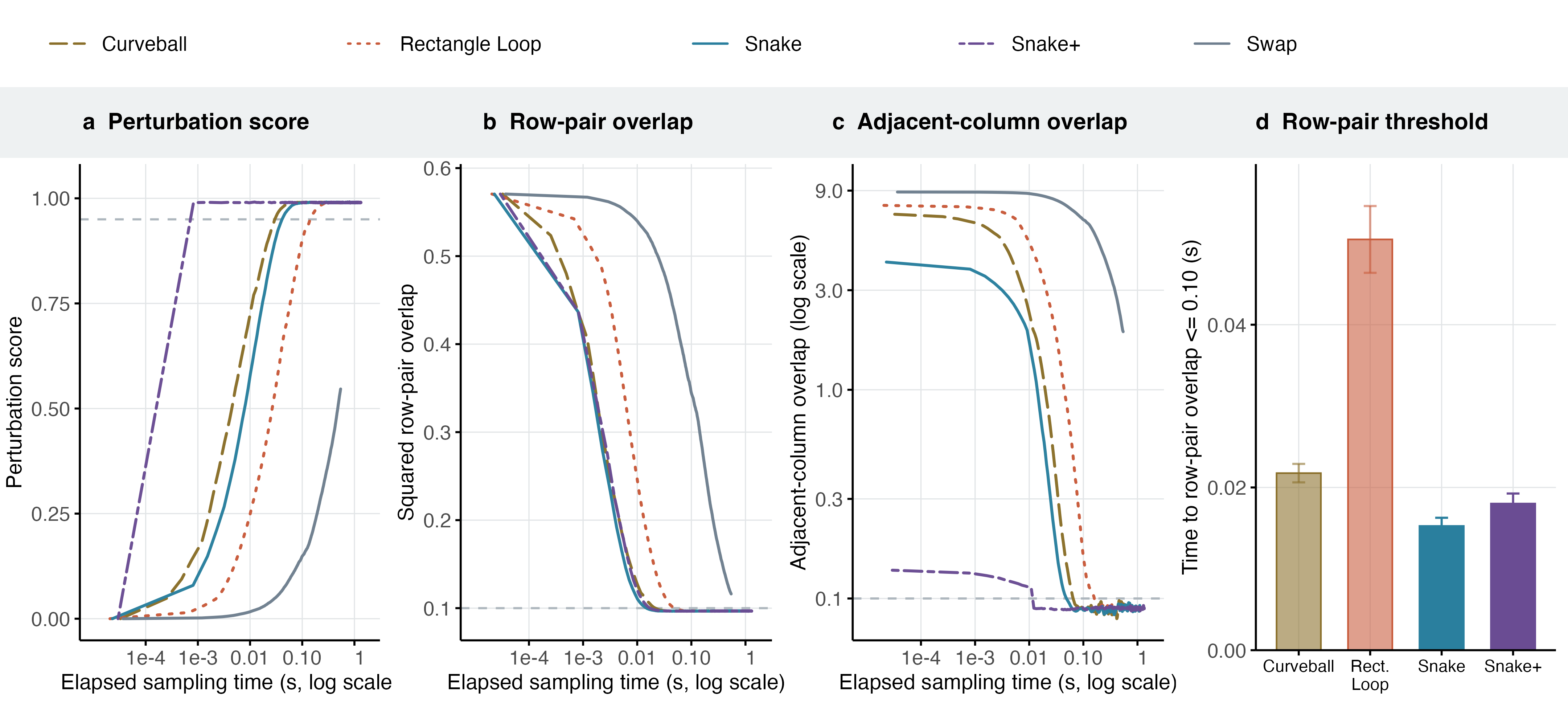}
\caption{Replicated convergence diagnostics on sparse $1000\times 1000$ banded matrices with row and column sums equal to $10$. Panels a and b show unsmoothed run-mean traces across five independent runs; panel c shows the corresponding smoothed run-mean trace for the adjacent-column diagnostic. Each method is drawn once. The horizontal axis in panels a--c is elapsed sampling time in seconds on a log scale. Dashed horizontal lines mark the displayed thresholds: perturbation score at least $0.95$ in panel a, squared row-pair overlap at most $0.10$ in panel b, and adjacent-column overlap at most $0.10$ in panel c. Panel c uses a logarithmic vertical scale. Panel d shows the mean elapsed sampling time to reach squared row-pair overlap at most $0.10$ for the four methods that reach the threshold; error bars are one run standard error. Swap is omitted from panel d because it does not reach this threshold in any run.}
\label{fig:mixing_sparse}
\end{figure}

\subsection{Comparison with sequential importance sampling}
Sequential Importance Sampling (SIS) is also routinely used to approximate statistics with respect to an intractable target distribution. In contrast to MCMC algorithms, SIS methods typically generate independent weighted samples whose proposal distribution is not the target distribution. In the context of binary matrices with fixed margins, a popular SIS algorithm is developed by \cite{chen2005sequential}. When the row and column sums are sparse, theoretical properties of this algorithm have been investigated by \cite{blanchet2009efficient}. We first compare \snake with a conditional-Poisson SIS implementation on the finch example. We then compare both methods with exact hypergeometric tail probabilities in two dense, regular Rasch examples, where the exact values provide ground truth and expose the computational cost of generating an entire matrix sequentially.

\subsubsection{Monte Carlo estimands and uncertainty}
For a statistic $h:\Sigma(\row,\col)\to\mathbb R$, the target mean is
\[
    \theta_h=\mathbb E_U(h(M)),
\]
where $U=U(\Sigma(\row,\col))$. For an observed value $x$, we write the strict and inclusive upper-tail probabilities as
\[
    p_h^{>}(x)=\mathbb P_U(h(M)>x), \qquad
    p_h^{\ge}(x)=\mathbb P_U(h(M)\ge x).
\]
The reported \snake estimates are time averages from the raw rejection-free chain. The raw chain has the uniform stationary distribution by Theorem \ref{thm: correctness}; for finite irreducible chains, such ergodic averages are consistent for stationary expectations, while the lazy chain is used in Section \ref{sec:theoretical-analysis} for the general fixed-margin total-variation comparison statement. Given draws $M_1,\ldots,M_N$ and burn-in $B<N$, with $K=N-B$, we estimate
\[
    \widehat\theta_h=\frac{1}{K}\sum_{t=B+1}^{N} h(M_t), \qquad
    \widehat p_h^{>}(x)=\frac{1}{K}\sum_{t=B+1}^{N}\indc{h(M_t)>x},
\]
and analogously for $\widehat p_h^{\ge}(x)$. All MCMC standard errors reported below are batch standard errors: the post-burn-in scalar sequence is divided into contiguous batches of size $1000$, any incomplete final batch is discarded, and the reported standard error is the sample standard deviation of the batch means divided by the square root of the number of batches. For the finch example, this leaves $400{,}000$ post-burn-in samples and about $147$ raw strict-tail hits, so the uncertainty in this small tail probability is not negligible.

For SIS, let $q_s$ be the sequential proposal probability of the $s$th sampled matrix and define the normalized importance weights
\[
    \bar w_s=\frac{q_s^{-1}}{\sum_{\ell=1}^N q_\ell^{-1}}.
\]
The corresponding inclusive-tail estimator is
\[
    \widehat p_{h,\mathrm{SIS}}^{\ge}(x)
    =\sum_{s=1}^N\bar w_s\indc{h(M_s)\ge x},
\]
with effective sample size $\mathrm{ESS}=(\sum_s\bar w_s^2)^{-1}$. We report the independent importance-sampling plug-in standard error
\[
    \widehat{\mathrm{se}}_{\mathrm{SIS}}
    =\left(\sum_{s=1}^N\bar w_s^2
    \left(\indc{h(M_s)\ge x}-\widehat p_{h,\mathrm{SIS}}^{\ge}(x)\right)^2\right)^{1/2}.
\]

\subsubsection{Testing null models}
We first consider Darwin's finch data collected during Darwin's visit to Gal\'apagos. The data represent the occurrence or absence of $13$ species of finches on $17$ islands. Given an observed matrix $M(0)$ with margin sums $\row,\col$, we use the standard statistic
\[
S^2(M) := \frac{1}{m(m-1)}\sum_{i\neq j} (M M^\top)_{i,j}^2
\]
suggested by \cite{roberts1990island} to test whether $M(0)$ is uniformly distributed over $\Sigma(\row, \col)$. The $S^2$ tail probabilities below use strict upper tails. The observed data have $S^2(M(0)) = 53.1154$. After removing one deterministic row for sampling and restoring it for scoring, we run the \snake algorithm for $5\times 10^5$ steps and collect samples after a $10^5$-step burn-in period. The run takes $4.559$ seconds. The estimated mean of $S^2$ under $U(\Sigma(\row,\col))$ is $50.6968$, with batch standard error $0.0045$. The estimated tail probability $\bP(S^2(M)>S^2(M(0)))$ is $3.675\times 10^{-4}$, with batch standard error $6.45\times 10^{-5}$.

The conditional-Poisson SIS implementation generates $5\times 10^5$ weighted samples in $252.133$ seconds. Its weighted mean estimate is $50.7071$, the weighted tail-probability estimate is $2.43\times 10^{-4}$, and the effective sample size is $99669.35$. Both methods therefore give similar estimates for $\bE(S^2)$ and indicate strong evidence against the fixed-margin null model for the observed finch data. The left panel of Figure \ref{fig:histogram} shows the \snake MCMC sample distribution and the observed statistic.

Next we consider a $250\times 250$ matrix where each entry is independently generated from $\Bern(0.7)$. The generated $M(0)$ is uniformly distributed over the set of matrices with its realized margins. The observed statistic is $S^2=14851.86$. Running the \snake algorithm for $5\times 10^5$ steps takes $12.865$ seconds. The estimated mean of $S^2$ is $14851.50$, with batch standard error $0.0098$, and the estimated tail probability above the observed statistic is $0.0783$, with batch standard error $0.0084$. This non-extreme tail probability is consistent with the fact that the matrix was generated from the null model. The right panel of Figure \ref{fig:histogram} shows the corresponding Bernoulli null distribution.

\begin{figure}[H]
    \centering
    \includegraphics[width = \textwidth]{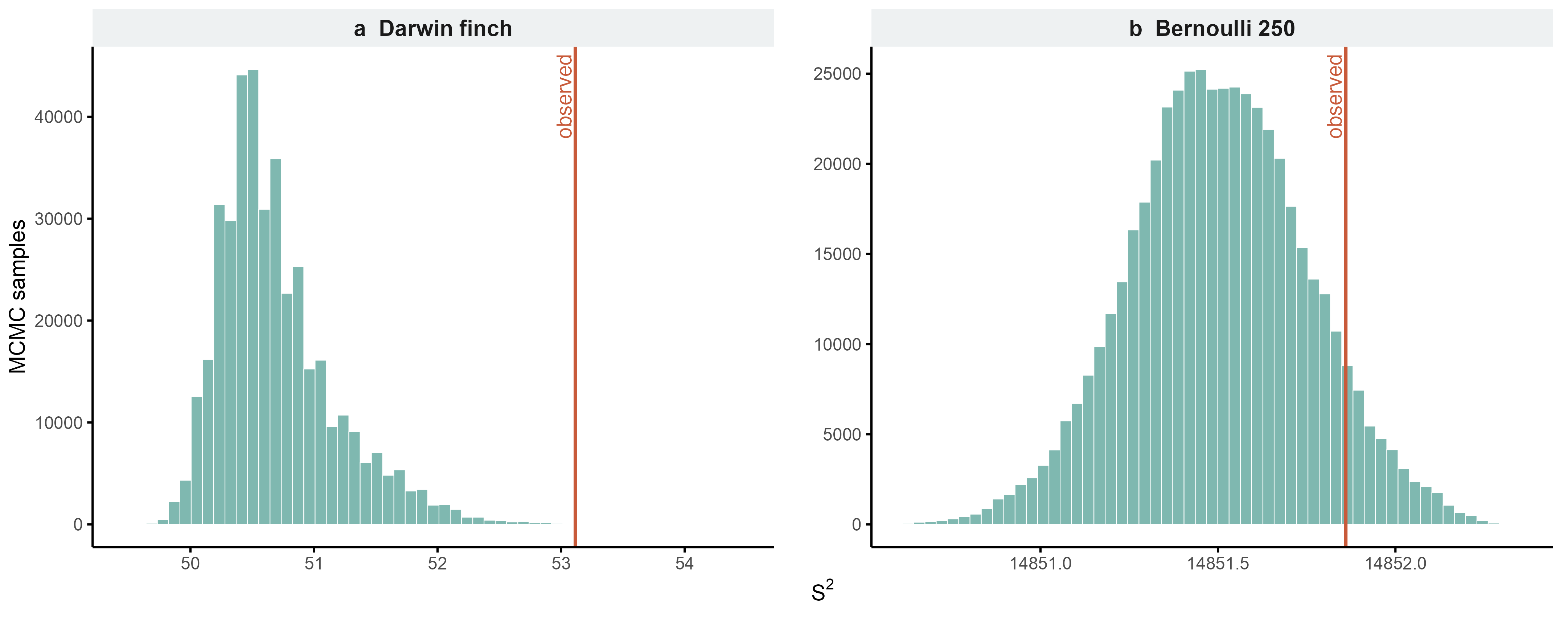}
    \caption{Histograms of the $S^2$ statistic from two \snake MCMC null-model runs. Left: Darwin finch data after a $10^5$-step burn-in, with the observed value $S^2(M(0))=53.1154$ in the far right tail. Right: a Bernoulli-initialized $250\times 250$ matrix with observed value $S^2(M(0))=14851.86$. Vertical lines mark the observed statistics.}
\label{fig:histogram}
\end{figure}

\subsubsection{Testing the Rasch model}
Here we use examples analyzed in \cite{ponocny2001nonparametric} and \cite{chen2005exact} to test the Rasch model described in the introduction. The Rasch examples use the inclusive upper-tail convention matching the exact hypergeometric calculation. In the first example, the binary matrix is of size $200\times 200$, representing the results of $200$ students answering $200$ questions. All student scores and item scores equal $100$. The students are divided into two groups of size $100$. The test statistic $f$ is the number of students in the first group who answer the first problem correctly. We estimate the tail probability for an observation with $f(M(0))=56$, namely $\bP(f(M)\geq 56)$. Since all row and column sums are the same, $f(M)$ has a hypergeometric distribution under the Rasch model; the exact tail probability is $0.0597903$.

We also consider a larger matrix of size $1000\times 1000$, where each row and each column has sum $500$. Again $f$ is the number of ones in the first half of the first column. For an observation with $f(M(0))=256$, the exact hypergeometric tail probability is $0.243318$. Thus both examples admit exact ground truth against which the \snake and SIS approximations can be checked.

The feasibility-filtered subset enumeration used by our small finch SIS implementation does not scale to these dense examples: a typical row proposal would have to consider subsets on the scale of $\binom{n}{n/2}$. We therefore implement a scalable conditional-Poisson row proposal through an elementary-symmetric-polynomial dynamic program. If $R$ rows remain and the residual sum of column $j$ is $c_j^{\mathrm{res}}$, the proposal odds for that column are $c_j^{\mathrm{res}}/(R-c_j^{\mathrm{res}})$. Computing the fixed-size normalizing constants and sampling a row with $k$ ones costs $O(nk)$, so constructing an entire $n\times n$ matrix costs $O(n^2k)$, or $O(n^3)$ when $k=n/2$. This avoids combinatorial enumeration but remains expensive because SIS constructs every entry of a new matrix for every independent draw. By contrast, a \snake transition only follows and updates one alternating path in the current matrix. The SIS implementation restarts an entire path if it reaches an infeasible residual state; no restart occurred in the reported regular dense runs.

For each size, we run \snake for $5.01\times10^7$ transitions, discard the first $10^5$, and retain $5\times10^7$ values of $f$. We use $5000$ SIS draws for the $200\times200$ case and $1000$ for the $1000\times1000$ case. Table \ref{tab:rasch-sis} gives the resulting three-way comparison.

\begin{table}[htbp]
\centering
\footnotesize
\setlength{\tabcolsep}{4pt}
\begin{tabular}{r r l r r r r r}
\toprule
$n$ & Exact $p$ & Method & Samples & Estimate & SE & ESS & Time (s) \\
\midrule
\multirow{2}{*}{$200$} & \multirow{2}{*}{$0.0597903$}
& \snake & $50{,}000{,}000$ & $0.0587820$ & $0.0016583$ & -- & $\mathbf{39.695}$ \\
& & Conditional-Poisson SIS & $5{,}000$ & $0.0613434$ & $0.0034303$ & $4927.0$ & $63.761$ \\
\midrule
\multirow{2}{*}{$1000$} & \multirow{2}{*}{$0.243318$}
& \snake & $50{,}000{,}000$ & $0.242400$ & $0.0118105$ & -- & $\mathbf{173.748}$ \\
& & Conditional-Poisson SIS & $1{,}000$ & $0.252201$ & $0.0137870$ & $995.2$ & $662.902$ \\
\bottomrule
\end{tabular}
\caption{Exact and Monte Carlo inclusive upper-tail probabilities in the two regular Rasch examples. To account conservatively for long-range autocorrelation, \snake standard errors use ten contiguous batches of size $5\times10^6$ after burn-in; SIS standard errors and effective sample sizes use the normalized importance weights. }
\label{tab:rasch-sis}
\end{table}

Both estimators agree with the exact tail probabilities to within their reported Monte Carlo uncertainty, so the comparison reduces to how much computation each needs to reach a given accuracy. We quantify the efficiency of each method using the variance--time product $\mathrm{SE}^2\times\text{time}$, which is smaller for the more efficient sampler. By this criterion, \snake is $6.9$ times as efficient for $n=200$ and $5.2$ times as efficient for $n=1000$. The advantage arises because the conditional-Poisson SIS draw must perform the row-level dynamic program repeatedly to construct a complete dense matrix, whereas \snake reuses the current state and changes only the loop found by one transition.

\subsection{Directed fixed-degree reciprocity}
We next test the directed-graph extension from Section \ref{subsec: extension}. The implemented \dsnake kernel excludes all label-diagonal cells from the row-zero candidate sets and from the initial-cell distribution, so no transition can create a self-loop. As a classical comparator, we use the directed edge-swap algorithm: two rows and two columns are proposed, and the corresponding checkerboard switch is accepted only when all four cells are admissible and the two proposed directed edges are absent.

The statistic is the number of reciprocal dyads, $\sum_{i<j} A_{ij}A_{ji}$, a standard directed-network measure that is not fixed by the in- and out-degree sequences. We use a $120$-node directed ring in which every node sends edges to its four nearest clockwise and four nearest counterclockwise neighbors. Thus all in- and out-degrees equal $8$, and the observed graph has $480$ reciprocal dyads. This is an intentionally high-reciprocity graph used to test the fixed-degree null distribution.

For the efficiency comparison, both samplers start from the same \dsnake-warmed fixed-degree matrix with $29$ reciprocal dyads. We then run $12$ independent runs per method, each with $30{,}000$ transition attempts, $5{,}000$ burn-in attempts, and spacing $5$, retaining $60{,}000$ post-burn-in reciprocity values per method in total. The two methods give similar fixed-degree null means: $32.01$ for \dsnake, with run-to-run standard error $0.11$, and $31.65$ for directed swap, with run-to-run standard error $0.79$. No retained sample from either method reaches the observed value $480$, giving the plus-one upper-tail estimate $(0+1)/(60000+1)=1.67\times 10^{-5}$ for each method. The observed reciprocal ring is therefore far outside the fixed-degree null distribution.

The efficiency diagnostics are substantially different. Using the initial-positive autocorrelation estimator for the retained reciprocity sequence, \dsnake averages $4965.7$ effective samples per second, with run-to-run standard error $236.4$, whereas directed swap averages $311.7$ effective samples per second, with run-to-run standard error $40.0$. The move-size diagnostic is even more separated: \dsnake flips $12.761$ admissible entries per attempt on average, with standard error $0.016$, while directed swap flips $0.0314$ entries per attempt, with standard error $0.0008$. Figure \ref{fig:dsnake_reciprocity} summarizes the null distribution and the replicated efficiency comparison. These results support \dsnake as a practical constrained sampler for this directed fixed-degree problem, while sharp mixing-time theory for diagonal-constrained margins remains open.

\begin{figure}[htbp!]
\includegraphics[width = \textwidth]{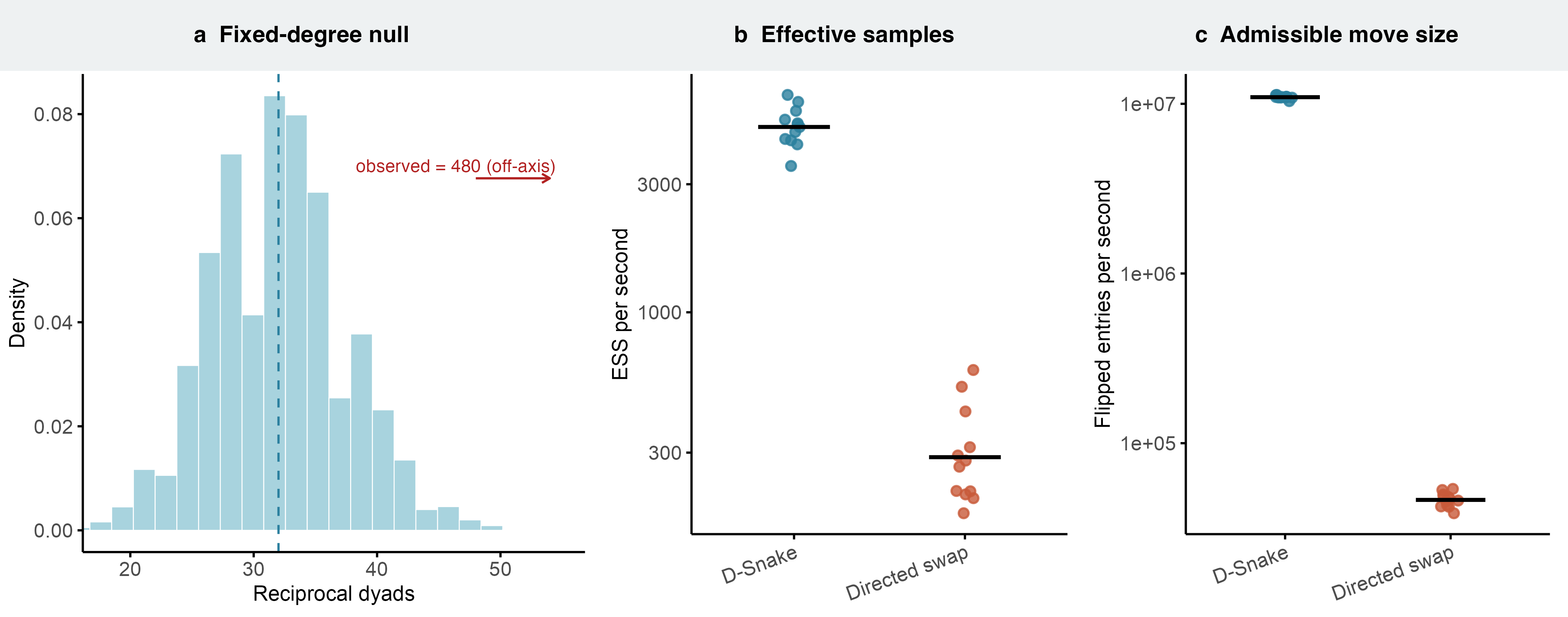}
\caption{Directed fixed-degree reciprocity comparison. Left: fixed-degree null distribution of reciprocal dyads from \dsnake runs; the observed reciprocal ring has $480$ reciprocal dyads and is off-axis. Middle: reciprocity effective samples per second across $12$ runs per method. Right: admissible flipped entries per second across the same runs. Horizontal bars show run means.}
\label{fig:dsnake_reciprocity}
\end{figure}

\section{Conclusion}
This paper develops the \snake algorithm, a rejection-free MCMC sampler for binary matrices with fixed margin sums. The raw chain is irreducible and reversible with respect to the uniform fixed-margin distribution. An important property is that \snake makes provably larger nonlocal moves than local swap-type methods in sparse and balanced regimes, while its work per flipped entry is rate optimal in those regimes and near-optimal up to a polylogarithmic factor under a one-sided half-balanced condition. Combining our Markov-chain comparison with the universal swap-chain spectral-gap theorem of \citet{fu2026spectral} shows that the lazy \snake chain is rapidly mixing for every feasible pair of margins, and in the permutation-matrix case the raw chain has the much sharper total-variation mixing time $\Theta(\sqrt n\log n)$. The numerical experiments support these theoretical conclusions and show favorable performance against the leading MCMC and SIS samplers in this literature.

Looking forward, we mention two computational problems and one theoretical problem. The first computational problem, motivated by applications in differential privacy, is non-uniform sampling of binary matrices; informed proposals in the spirit of \citet{zanella2020informed} may be useful when the target distribution is highly skewed. The second is sampling high-dimensional binary tensors under margin constraints, where swap-based algorithms are known to be very inefficient for $I\times J\times K$ tensors; we hope the loop-finding idea developed here can aid the design of more efficient tensor samplers. On the theoretical side, although Theorem \ref{thm: rapid mixing} gives a universal polynomial bound and Theorem \ref{thm:perm-mixing} gives the sharp order in the permutation-matrix case, sharp \snake mixing bounds for broader fixed-margin families remain open. Closing the gap between the universal comparison estimate and margin-specific behavior is an important direction for future work.

\section*{Acknowledgement}
Guanyang Wang was partially supported by NSF grants DMS-2210849
and CCF-2403007. Peng Zhang was partially supported by NSF CAREER
Award CCF-2238682.

\appendix
\section{Proofs of the theoretical results}\label{app:proofs}
\subsection{Path notation for one step of Snake}\label{app:path-notation}
A \snake trajectory starting from a zero entry in $M$ corresponds exactly to a Snake trajectory starting from a one entry in \((1-M)^\top\). Therefore, the path notation below is given for trajectories whose initial entry is a one. Whenever both the assumptions and the conclusion are invariant under the map $M \mapsto (1-M)^\top$, the other case follows by applying the same argument to $(1-M)^\top$. Otherwise, the case will be treated separately when needed.

Fix a current matrix $M\in \Sigma(\row,\col)$ and let
\[
E=\{(i,j):M(i,j)=1\}.
\]
We write one oriented version of the \snake trajectory in terms of alternating column and row indices. For a column $a\in[n]$, let $f(a)$ be chosen uniformly from $\{i:(i,a)\in E\}$. For a row $b\in[m]$, let $g(b)$ be chosen uniformly from $\{j:(b,j)\notin E\}$. Let $\{f_k\}_{k\ge1}$ and $\{g_k\}_{k\ge1}$ be independent copies of these random maps. Starting from an arbitrary column-valued random variable $a_1$, define
\[
 b_k=f_k(a_k),\qquad a_{k+1}=g_k(b_k),\qquad k\ge1.
\]
Then the entries $(b_k,a_k)$ and $(b_k,a_{k+1})$ are respectively a one and a zero of $M$, and form the same alternating trajectory as Algorithm \ref{alg:Snake}. The estimates below are uniform in the starting distribution of $a_1$. 

Let
\[
R=\inf\{t\ge2:a_t\in\{a_1,\ldots,a_{t-1}\}\},\qquad
R'=\inf\{t\ge2:b_t\in\{b_1,\ldots,b_{t-1}\}\}.
\]
Let $R_0<R$ and $R'_0<R'$ be the unique indices satisfying $a_{R_0}=a_R$ and $b_{R'_0}=b_{R'}$, respectively. The loop found by Algorithm \ref{alg:Snake} has length
\[
L=\begin{cases}
2(R-R_0),& R\le R',\\
2(R'-R'_0),& R>R'.
\end{cases}
\]
The number $C$ of iterations in the while loop satisfies $C\asymp \min(R,R')$, with absolute constants. This is the notation used in the proofs below.

\subsection{Proof of Theorem \ref{thm: correctness}}\label{app:proof-correctness}
\begin{proof}
The termination claim follows from the construction. Before the algorithm stops, no row coordinate can repeat at a vertical closing time and no column coordinate can repeat at a horizontal closing time. Since there are only $m$ rows and $n$ columns, a repeat must occur after at most $m+n+1$ visited entries. When the first repeat occurs, the segment between the earlier and later occurrence is an alternating loop, and flipping that loop leaves every row and column sum unchanged.

We next prove irreducibility. Let $A,B\in\Sigma(\row,\col)$. By Ryser's swap connectivity theorem \citep[Theorem 3.1]{ryser1957combinatorial}, $A$ can be transformed into $B$ by finitely many checkerboard swaps. A checkerboard swap is an alternating loop of length four. For any fixed valid swap, the \snake algorithm can realize that same swap by starting from one corner of the checkerboard and then choosing the other three corners in order. Each choice has positive probability. Hence every swap move has positive \snake probability. Concatenating the swap path gives $\bP_\snake^N(A,B)>0$ for some $N$, so the chain is irreducible.

	It remains to prove reversibility with respect to the uniform law. Since the target law is uniform, it is enough to show that $\bP_\snake(A,B)=\bP_\snake(B,A)$. If $A$ and $B$ do not differ by a single alternating loop, both probabilities are zero. Suppose instead that flipping an alternating loop $\ell$ in $A$ gives $B$. A one-step history from $A$ to $B$ can be written as
	\[
	h=(s_1,\ldots,s_q;p),\qquad 1\le p<q,
	\]
	where $s_t=(i_t,j_t)$ are the visited entries, no earlier segment closes an alternating loop, and the first closing segment $(s_p,\ldots,s_q)$ is exactly the loop that is flipped. Define $h^\leftarrow$ by keeping the pre-closing trajectory $(s_1,\ldots,s_{p-1})$ and replacing the closing segment by the unique reversed cyclic ordering of $\ell$ that can be appended after $s_{p-1}$; if $p=1$, take the reversed segment to start at $s_q$. This reversed segment is well-defined because every vertex of an alternating loop has exactly two loop-neighbors, one sharing its row and one sharing its column, and flipping the loop interchanges the zero/one status needed for traversing the loop in the opposite direction. The prefix entries are not flipped, so the prefix remains a legal \snake trajectory under $B$; at the junction, the first reversed loop entry is the unique loop-neighbor of the last prefix entry with the required shared coordinate and opposite value. Applying the same operation again restores the original cyclic ordering of $\ell$, so the map is an involution. Finally, if $h^\leftarrow$ had an earlier closing segment, reversing that segment by the same construction would give an earlier closing segment in $h$, contradicting the definition of $h$ as a first-closing history.

	For such a paired history, the initial entry has probability $1/(mn)$ in both directions. At each row move, the transition probability is the reciprocal of the number of zeros in the current row; after flipping $\ell$, the reversed move uses the same row, the same row sum, and hence the same number of zeros. At each column move, the transition probability is the reciprocal of the number of ones in the current column, and the same column sum gives the same denominator in the reversed move. The selected entries are the paired corners of the same loop, so the numerators are all one in both directions. Thus the probability weight of $h$ under $A$ equals the probability weight of $h^\leftarrow$ under $B$. Summing over all first-closing histories moving $A$ to $B$ proves symmetry of $\bP_\snake$.

Irreducibility and reversibility imply that $U(\Sigma(\row,\col))$ is the unique stationary distribution.
\end{proof}

\subsection{Proof of Proposition \ref{prop:dsnake-correctness}}\label{app:proof-dsnake}
\begin{proof}
All entries sampled by Algorithm \ref{alg:D-Snake} lie in $(U\times V)\setminus D$, so every realized trajectory avoids the structural diagonal. Since preprocessing removes rows and columns whose admissible entries are completely determined, a current one-entry always has at least one admissible zero in its row and a current zero-entry always has at least one one in its column. The same finite-coordinate argument used for Algorithm \ref{alg:Snake} therefore gives almost-sure termination at a first admissible alternating loop. Flipping such a loop changes one zero to one and one one to zero in every row and column touched by the loop and never touches $D$, so the chain remains in $\Sigma_D(\row,\col;U,V)$.

For reversibility, repeat the first-closing-history reversal from the proof of Theorem \ref{thm: correctness}, restricting all histories to admissible non-diagonal entries. The initial entry has probability $1/(|U||V|-|D|)$ in both directions. Along a row move, the denominator is the number of admissible zeros in the current row; after flipping the loop, the reversed row move sees the same row with the same number of admissible zeros. Along a column move, the denominator is the number of ones in the current column, which is also unchanged by flipping a balanced loop. Thus paired forward and reversed histories have equal probability, and the kernel is symmetric on $\Sigma_D(\row,\col;U,V)$. Symmetry is equivalent to reversibility for the uniform distribution.

It remains only to justify irreducibility under the stated connectivity condition. A directed $2$-switch is an admissible alternating loop of length four in the bipartite matrix representation, and a directed $3$-cycle reorientation is an admissible alternating loop of length six. Algorithm \ref{alg:D-Snake} can realize any fixed such loop with positive probability by starting at one of its entries and following the loop in order. Therefore every edge in the connected directed fixed-degree state graph has positive \dsnake transition probability. Concatenating these positive-probability moves connects any two states in $\Sigma_D(\row,\col;U,V)$, proving irreducibility. Irreducibility and reversibility on the finite state space give the uniform law as the unique stationary distribution.
\end{proof}

\subsection{Proof of Theorem \ref{thm: flips-sparse}}\label{app:proof-flips-sparse}

In the notation of Appendix~\ref{app:path-notation}, we first estimate the quantity $R - R_0$.

\begin{lemma}\label{lem: r}
    $$\bP\left(R-R_0\leq\frac{1}{32}\min\left\{\frac{n}{\rmax},\sqrt n\right\}\right)\leq\frac14.$$
\end{lemma}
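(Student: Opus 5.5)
The plan is a union bound over the closing time $R$. The set of "recent" columns is tiny, while every column visited earlier in the path is available as a closing target. Write $K:=\frac{1}{32}\min\{n/\rmax,\sqrt n\}$. We have $R-R_0\le K$ exactly when $a_R\in\{a_{R-K},\ldots,a_{R-1}\}$. If $\rmax>n/2$ then $K<1/16$, and since $R-R_0\ge1$ the event is empty. So I assume $\rmax\le n/2$, which gives $n-r_b\ge n/2$ for every row $b$.

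The key structural fact comes from the conditional law of the next column. Condition on the history $\mathcal F_{t-1}$ up to $b_{t-1}$ on $\{R\ge t\}$. Then $a_t=g_{t-1}(b_{t-1})$ is uniform over the $n-r_{b_{t-1}}\ge n/2$ zeros of row $b_{t-1}$. This yields two estimates.

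\emph{Estimate (i).} The probability that $a_t$ lands in the recent set $\{a_{t-K},\ldots,a_{t-1}\}$ is at most $K/(n-r_{b_{t-1}})\le 2K/n$.

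\emph{Estimate (ii).} Among the $t-1$ distinct earlier columns $a_1,\ldots,a_{t-1}$, at most $r_{b_{t-1}}\le\rmax$ carry a one in row $b_{t-1}$. Hence at least $t-1-\rmax$ of them are zeros of that row, and each is a legal closing target. Because the law is uniform over the zero set, the conditional ratio
\[
\frac{\bP(a_t\in\text{recent set}\mid\mathcal F_{t-1})}{\bP(a_t\in\{a_1,\ldots,a_{t-1}\}\mid\mathcal F_{t-1})}
\]
is at most $K/(t-1-\rmax)$. For $t-1\ge 2\rmax$ this is at most $2K/(t-1)$.

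Now split at a threshold $T:=\max\{16K,\,2\rmax+2\}$.

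\emph{Early times $t\le T$.} Summing estimate (i) over these times gives
\[
\bP(R-R_0\le K,\ R\le T)\le \frac{2KT}{n}.
\]
If $T=16K$, this is $32K^2/n\le 32/1024$, using $K\le\sqrt n/32$. If $T=2\rmax+2$, it is at most $8K\rmax/n\le 1/4\cdot\frac12$, using $K\le n/(32\rmax)$ and $\rmax\ge1$.

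\emph{Late times $t>T$.} Write the event $\{R=t,\ R-R_0\le K\}$ as $\{R\ge t\}\cap\{a_t\in\text{recent set}\}$. By estimate (ii), its probability is at most $\frac{2K}{t-1}\,\bP(R=t)\le\frac{2K}{T}\bP(R=t)$. Summing over $t>T$ gives at most $2K/T\le 1/8$.

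Adding the two ranges gives the bound $\le 1/4$; the numerical constants only need light tuning, and there is slack. The step I expect to need the most care is the ratio argument in estimate (ii). It requires two justifications. First, on $\{R\ge t\}$ the earlier columns are distinct, so the count $t-1-\rmax$ of available earlier zero-columns is legitimate. Second, the first return of the column sequence genuinely defines $R$ regardless of the row sequence $R'$. This second point holds because $R$ is defined purely through the $a$'s.
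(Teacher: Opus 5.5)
Your argument is correct in substance and is essentially the paper's proof. Both arguments bound the one-step chance of closing onto one of the last $K$ columns by $K/(n-\rmax)$ and lower-bound the closing hazard by $(t-1-\rmax)/(n-\rmax)$. Both then split time into an early window of length $O(\rmax+K)$, handled by a union bound, and a late window where the ratio of the two hazards gives at most $1/8$. The paper phrases the late step through $\bP(R>k)=\bP(R=k+1)/\bP(R=k+1\mid R>k)$ with the additive threshold $\rmax+8K$; your pointwise conditional ratio is a slightly cleaner form of the same step.

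There is one arithmetic slip: $8K\rmax/n\le 8/32=1/4$, not $1/8$, so the bounds you wrote down sum to $3/8$. Switching to the paper's threshold $\rmax+8K$ fixes this with room to spare, as you anticipated. The late ratio becomes $K/(t-1-\rmax)\le 1/8$, the early part is at most $2K(\rmax+8K)/n\le 5/64$, and the total is at most $13/64<1/4$.
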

\begin{proof}
    If $\rmax > \frac{n}{2}$, the lemma holds. From now on, we assume that $\rmax \le \frac{n}{2}$.
    
    Conditioned on $a_1,\ldots,a_k,b_1,\ldots,b_k$, the next column $a_{k+1}$ is chosen uniformly from the zeros in row $b_k$. Thus, we have
    \begin{align*}
        &\bP(R=k+1\mid R>k)\\
        =&\bE(\bP(R=k+1\mid a_1,\ldots,a_k,b_1,\ldots,b_k,R>k)\mid R>k)\\
        =&\bE\left(\frac{
    |\{1\leq l\leq k:M(b_k,a_l)=0\}}{n-r_{b_k}}\mid R>k)\right)\\
\ge& \bE\left(\frac{k-r_{b_k}}{n-r_{b_k}}
\mid R>k
\right)\\
\ge&\frac{k-\rmax}{n-\rmax}.
    \end{align*}

Therefore, for $s:= \frac{1}{32} \min\left\{\frac{n}{\rmax},\sqrt n\right\},$
we have \begin{align*}
    \bP(R-R_0\leq s)=&\sum_{k=1}^{n}\bP(R=k+1,\ R-R_0\leq s)\\
\leq& \sum_{k=1}^{n}\bP\left(a_{k+1}\in\{a_k,\ldots,a_{\max\left\{1,k-\lfloor s\rfloor+1\right\}}\}, R>k \right)\\
=&
\sum_{k=1}^{n} \bP\left(a_{k+1}\in\{a_k,\ldots,a_{\max\left\{1,k-\lfloor s\rfloor+1\right\}}\} \mid R>k \right)
\bP(R>k)\\
\leq& \frac{ s}{n-\rmax} \sum_{k=1}^{n}\bP(R>k)\\
\leq& \frac{ s}{n-\rmax} \left(\rmax + 8s+\sum_{k=\rmax +\lceil 8s\rceil}^{n}\bP(R>k)\right)\\
=& \frac{ s}{n-\rmax} \left(\rmax + 8s+\sum_{k=\rmax +\lceil 8s\rceil}^{n}\frac{\bP(R=k+1)}{\bP(R=k+1\mid R>k)}\right)\\
\le& \frac{ s}{n-\rmax} \left(\rmax + 8s+\frac{n-\rmax}{8s}\sum_{k=\rmax +\lceil 8s\rceil}^{n}\bP(R=k+1)\right).\\
\leq& \frac{ s(\rmax+8s)}{n-\rmax} +\frac{1}{8}\\
\leq& \frac{1}{4}.
\end{align*}

\end{proof}

Next, we estimate the quantity $R'-R_0'$.

\begin{lemma}\label{lem: r'}
\[
    \bP\left( R'-R'_0\leq\frac{1}{32}\min\left\{\frac{n}{\rmax}, \sqrt{\frac{n \cmin}{\rmax}}\right\}\right)\leq\frac14.
\]
\end{lemma}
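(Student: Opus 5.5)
We adapt the proof of Lemma \ref{lem: r} to the row sequence $b_1,b_2,\ldots$. The argument has two ingredients: a hazard lower bound, and the observation that a short loop requires landing in a window of only $s$ rows.

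\textbf{Hazard bound.} Condition on the history up to $a_{k+1}$ and on $R'>k$. Then $b_{k+1}=f_{k+1}(a_{k+1})$ is uniform among the $c_{a_{k+1}}$ ones of column $a_{k+1}$. The event $R'=k+1$ means that $b_{k+1}\in\{b_1,\ldots,b_k\}$, so
\[
\bP(R'=k+1\mid \text{history})=\frac{|\{l\le k: M(b_l,a_{k+1})=1\}|}{c_{a_{k+1}}}.
\]
Column $a_{k+1}$ has at least $\cmin$ ones among $m$ rows, but that alone does not force it to meet the $k$ previous rows, so a lower bound must use the row structure. The previous rows $b_1,\dots,b_k$ together carry at most $k\rmax$ ones. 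Counting only these, the next column $a_{k+1}=g_k(b_k)$ is drawn from the zeros of row $b_k$, which number at least $n-\rmax\ge n/2$. (We may assume $\rmax\le n/2$, since otherwise the claim is trivial as in Lemma \ref{lem: r}.)

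Direct lower bounds on the hazard seem hard, so we use the complementary strategy that mirrors Lemma \ref{lem: r} exactly. We bound the probability of a short loop by the probability of hitting a window, and then bound $\sum_k\bP(R'>k)$ by a separate tail estimate.

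\textbf{Short-loop bound.} Set $s=\frac1{32}\min\{n/\rmax,\sqrt{n\cmin/\rmax}\}$. The event $R'-R_0'\le s$ with $R'=k+1$ requires $b_{k+1}\in\{b_{k-\lfloor s\rfloor+1},\ldots,b_k\}$. For this to happen, column $a_{k+1}$ must contain a one in one of those $s$ rows, and then $b_{k+1}$ must pick that row. Those $s$ rows contain at most $s\rmax$ ones in total. Since $a_{k+1}$ is uniform over at least $n/2$ zeros of row $b_k$, the probability that $a_{k+1}$ lies in a column meeting the window is at most $2s\rmax/n$. Given that it does, it picks a window row with probability at most $s/\cmin$; more carefully, this probability is at most (number of window ones in the column)$/\cmin$, and summing over columns gives the total bound
\[
\frac{s\rmax}{(n-\rmax)\cmin}\le\frac{2s\rmax}{n\cmin}.
\]
Hence
\[
\bP(R'-R_0'\le s)\le \frac{2s\rmax}{n\cmin}\sum_k\bP(R'>k).
\]

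\textbf{Tail of $R'$, the main obstacle.} We need $\sum_k \bP(R'>k)\lesssim n\cmin/(s\rmax)$. We plan to prove it via a hazard lower bound for $R'$ of the form
\[
\bP(R'=k+1\mid R'>k)\ge c\,\frac{(k-k_0)\rmax}{n\cmin}.
\]
The intended route is to use $R$ as a proxy. Before $R'$, the rows are distinct, and Lemma \ref{lem: r}'s hazard analysis shows that the columns repeat on a scale of order $\sqrt n$ or $n/\rmax$. Once a column $a_{t}$ repeats an earlier $a_l$, the next row is drawn from the same column, so $b_t$ hits $b_l$ with probability at least $1/\cmax$. This may only give a weaker tail, in which case a different averaging is needed. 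The fallback is to split the sum as in Lemma \ref{lem: r}, at $k\le 8s$ plus the terms weighted by the inverse hazard, and to choose the constant $1/32$ so that the total is at most $1/8+1/8=1/4$. Establishing this hazard inequality, which controls how quickly a ones-column revisits already visited rows, is the step we expect to be hardest.
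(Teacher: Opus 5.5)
\textbf{There is a genuine gap, and the step you defer is false, not just hard.} Your window estimate is correct. It is exactly the paper's bound $\frac{2s\rmax}{n\cmin}$ on the probability that $b_{k+1}$ lands among the last $s$ rows. The problem is what you need next: neither the tail bound $\sum_k\bP(R'>k)\lesssim n\cmin/(s\rmax)$ nor the hazard bound $\bP(R'=k+1\mid R'>k)\ge c(k-k_0)\rmax/(n\cmin)$ holds for general margins. The quantity $\rmax/(n\cmin)$ is only an \emph{upper} bound on the hazard contribution of one visited row.

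\textbf{A counterexample.} Take $\rmax=1$, one column containing a single one, and $n-1$ columns with $n$ ones each. Then $m=1+(n-1)n$, $\cmin=1$ and $s\approx\sqrt n/32$.
\begin{itemize}
\item From any row, $b_{k+1}$ is nearly uniform over the $\approx n^2$ ordinary rows, and hits the special row with probability $\approx 1/n$.
\item So after $k$ distinct rows the hazard is of order $k/n^2$ (plus $O(1/n)$ once the special row has been seen). This is far below $k/n$.
\item Consequently $\bE(R')\asymp n$, and your bound $\frac{2s\rmax}{n\cmin}\sum_k\bP(R'>k)$ is of order $\sqrt n$, hence vacuous.
\end{itemize}
The lemma itself still holds here, since typical loop gaps are of order $n$. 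Neither fallback helps. The Lemma~\ref{lem: r}-style split at $k\approx 8s$ needs the same false hazard bound. The proxy through column repeats gives only return probability $1/\cmax$ per repeat. Any argument that charges the crude window rate to every step up to $R'$ must fail.

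\textbf{The missing idea: measure elapsed time by accumulated row weight, not by $k$.} Set $p_i=\sum_{j:M(i,j)=1}\frac{1}{nc_j}$. Then:
\begin{itemize}
\item $p_i\le\rmax/(n\cmin)$;
\item $\bP(b_{k+1}=i\mid b_k)\le 2p_i$, using $n-r_{b_k}\ge n/2$;
\item for distinct $b_1,\dots,b_k$, the hazard is at least $\sum_{l\le k}p_{b_l}-\rmax/n$.
\end{itemize}
Let $S$ be the largest integer with $\sum_{l\le S}p_{b_l}\le\frac{\rmax}{n}(1+\frac{16s}{\cmin})$. The event $\{j\le S\}$ depends only on $b_1,\dots,b_j$. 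Split according to where $R'_0$ falls.
\begin{itemize}
\item On $\{R'_0\le S\}$: take a union bound over returns $b_{j+k}=b_j$ with $j\le S$ and $1\le k\le s$. This costs at most $2s\,\bE\sum_{j\le S}p_{b_j}\le\frac{2s\rmax}{n}(1+\frac{16s}{\cmin})$.
\item On $\{R'_0>S\}$: every step $k>S$ has hazard exceeding $\frac{16s\rmax}{n\cmin}$, which is eight times your window bound. Hence $\bP(R'=k+1,\,R'-R'_0\le s,\,R'>k>S)\le\frac18\bP(R'=k+1,\,R'>k>S)$, and this sums over $k$ to at most $\frac18$.
\end{itemize}
With $s=\frac1{32}\min\{n/\rmax,\sqrt{n\cmin/\rmax}\}$ the two pieces total at most $\frac14$. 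No estimate on $\bE(R')$ is ever needed.
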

\begin{proof}
    If $\rmax > \frac{n}{2}$, the lemma holds. From now on, we assume that $\rmax \le \frac{n}{2}$.

    For each $1\le i\le m$, define $$p_i:= \sum_{M(i,j)=1}\frac{1}{nc_j}.$$
    Then for each $1\le i\le m$, $$p_i\le \frac{r_i}{n\cmin }\le \frac{\rmax}{n\cmin}.$$

    Conditioned on $b_k$, the next row $b_{k+1}$ has distribution
\begin{align*}
\bP(b_{k+1}=i\mid b_k)
&=\sum_{\substack{M(b_k,j)=0\\M(i,j)=1}}\frac{1}{(n-r_{b_k})c_j}\\
&\leq 2\sum_{M(i,j)=1}\frac{1}{nc_j}\\
&=2p_i.
\end{align*}

For a positive real number $s$, let $S$ be the largest nonnegative integer such that 
$$\sum_{k=1}^{S}p_{b_k}\le\frac{\rmax}{n}\left(1+\frac{16s}{\cmin}\right).$$
Since the event $\{j\leq S\}$ is determined by $b_1,\ldots,b_j$, we have 
\begin{align*}
    &\bP(R'-R'_0\le s, R'_0\le S)\\
    \leq&\sum_{i=1}^{m}\sum_{j=1}^{\infty}\sum_{k=1}^{\lfloor s\rfloor}
    \bP( b_j=i, b_{j+k}=i,j\leq S)\\
=&\sum_{i=1}^{m}\sum_{j=1}^{\infty}\sum_{k=1}^{\lfloor s\rfloor}
\bP(b_j=i,j\leq S)
\bP(b_{j+k}=i\mid b_j=i,j\le S)\\
\le&2s\sum_{i=1}^{m}\sum_{j=1}^{\infty}p_i\bP(b_j=i, j\leq S)\\
=&2s\bE\left(\sum_{j=1}^{S}p_{b_j}\right)\\
\le&
\frac{2s\rmax}{n}
\left(
    1+\frac{16s}{\cmin}
\right).
\end{align*}

Conditioned on $R'>k$, the rows $b_1,\ldots,b_k$ are distinct. So we have
\begin{align*}
&\bP(R'=k+1\mid b_1,\ldots,b_k,\ R'>k>S)\\
=&\sum_{i=1}^{k}\sum_{\substack{M(b_k,j)=0\\M(b_i,j)=1}}
\frac{1}{(n-r_{b_k})c_j}\\
\ge&
\sum_{i=1}^{k}\sum_{\substack{M(b_k,j)=0\\M(b_i,j)=1}}\frac{1}{nc_j}\\
=&
\sum_{i=1}^{k}p_{b_i}-\sum_{M(b_k,j)=1}
\frac{|\{1\leq i\leq k:M(b_i,j)=1\}|}{nc_j}\\
\ge&
\sum_{i=1}^{k}p_{b_i}-\frac{r_{b_k}}{n}\\
\ge&
\sum_{i=1}^{k}p_{b_i}-\frac{\rmax}{n}\\
>&
\frac{16s\rmax}{n\cmin}.
\end{align*}

On the other hand, conditioned on the same history, we have
\begin{align*}
&\bP(R'=k+1, R'-R'_0\le s\mid b_1,\ldots,b_k,\ R'>k>S)\\
=&\bP\left(b_{k+1}\in\{b_k,\ldots,b_{\max\left\{1,k-\lfloor s\rfloor+1\right\}}\}\mid b_1,\ldots,b_k, R'>k>S \right)\\
\le & 2\sum_{i=\max\left\{1,k-\lfloor s\rfloor+1\right\}}^{k}p_{b_i}\\
\le&
\frac{2s\rmax}{n\cmin}.
\end{align*}

Then we have 
\begin{align*}
    &\bP(R'-R_0'\le s,R_0'>S)\\
    =&\sum_{k=1}^m\bP(R'=k+1, R'-R_0'\le s,R_0'>S)\\
\leq& \sum_{k=1}^{m}\bP\left(R'=k+1 , R'-R_0'\le s, R'>k>S\right)\\
=&
\sum_{k=1}^{m} \bP\left(R'=k+1 , R'-R_0'\le s \mid R'>k>S\right)\bP(R'>k>S)\\
\le& \frac{2s\rmax}{n\cmin}\sum_{k=1}^{m}\bP(R'>k>S)\\
=& \frac{2s\rmax}{n\cmin}\sum_{k=1}^{m}\frac{\bP(R'=k+1,R'>k>S)}{\bP(R'=k+1\mid R'>k>S)}\\
\le & \frac{1}{8}\sum_{k=1}^{m}\bP(R'=k+1)\\
\le&\frac{1}{8}.
\end{align*}

Therefore, for $s:=\frac{1}{32}\min\left\{\frac{n}{\rmax}, \sqrt{\frac{n \cmin}{\rmax}}\right\}$, we have 
\begin{align*}
    &\bP(R'-R'_0\leq s)\\
=&\bP(R'-R_0'\le s,R_0'\le S)+\bP(R'-R_0'\le s, R_0'>S)\\
\le&
\frac{2s\rmax}{n}
\left(1+\frac{16s}{\cmin}\right)
+\frac18\\
\le &\frac14.
\end{align*}
\end{proof}

Now we are ready to prove Theorem~\ref{thm: flips-sparse}.

\begin{proof}[Proof of Theorem~\ref{thm: flips-sparse}]

We first prove $$\bE(L)\ge \frac{1}{64}\sqrt{\frac{n}{\rmax}}.$$

If the initial entry is a one, then $L\ge2\min\{R-R_0,R'-R_0'\}$ in the notation of Appendix~\ref{app:path-notation}. By Lemma~\ref{lem: r} and Lemma~\ref{lem: r'}, we have 
\begin{align*}
    \bE(L)\ge &\frac{1}{16}\sqrt{\frac{n}{\rmax}}\bP\left(L>\frac{1}{16}\sqrt{\frac{n}{\rmax}}\right)\\
    \ge &\frac{1}{16}\sqrt{\frac{n}{\rmax}}\bP\left(2\min\{R-R_0,R'-R_0'\}>\frac{1}{16}\sqrt{\frac{n}{\rmax}}\right)\\
    \ge &\frac{1}{16}\sqrt{\frac{n}{\rmax}}\left(1-\bP\left(R-R_0\le\frac{1}{32}\sqrt{\frac{n}{\rmax}}\right)-\bP\left(R'-R'_0\le\frac{1}{32}\sqrt{\frac{n}{\rmax}}\right)\right)\\
    \ge &\frac{1}{32}\sqrt{\frac{n}{\rmax}}.
\end{align*}

If the initial entry is zero, we denote it as $(b_0,a_1)$. Define the sequences $\{a_k\}_{k\in\mathbb{N}}$ and $\{b_k\}_{k\in\mathbb{N}}$ and $R,R',R_0,R_0'$ as in the notation of Appendix~\ref{app:path-notation}. If the loop length $L$ is not equal to $2(R-R_0)$ or $2(R'-R_0')$, then $b_{\frac{L}{2}}=b_0$. 

By the proof of Lemma~\ref{lem: r'}, we have
\begin{align*}
&\bP\left(b_k=b_0\text{ for some }1\leq k\le  \left\lfloor \frac{1}{32}\sqrt{\frac{n}{\rmax}}\right\rfloor\right)\\
\le&
2p_{b_0}\left\lfloor \frac{1}{32}\sqrt{\frac{n}{\rmax}}\right\rfloor\\
\le&
\frac{2\rmax}{n\cmin}\left\lfloor \frac{1}{32}\sqrt{\frac{n}{\rmax}}\right\rfloor\\
\le&\frac1{16}.
\end{align*}

Thus, by Lemma~\ref{lem: r} and Lemma~\ref{lem: r'}, we have 
\begin{align*}
    \bE(L)\ge &\frac{1}{16}\sqrt{\frac{n}{\rmax}}\bP\left(L>\frac{1}{16}\sqrt{\frac{n}{\rmax}}\right)\\
    \ge &\frac{1}{16}\sqrt{\frac{n}{\rmax}}\left(\bP\left(2\min\{R-R_0,R'-R_0'\}>\frac{1}{16}\sqrt{\frac{n}{\rmax}}\right)-\frac{1}{16}\right)\\
    \ge &\frac{1}{16}\sqrt{\frac{n}{\rmax}}\left(\frac{15}{16}-\bP\left(R-R_0\le\frac{1}{32}\sqrt{\frac{n}{\rmax}}\right)-\bP\left(R'-R'_0\le\frac{1}{32}\sqrt{\frac{n}{\rmax}}\right)\right)\\
    \ge &\frac{1}{64}\sqrt{\frac{n}{\rmax}}.
\end{align*}

By applying the previous bound on $\overline M=(1-M)^\top$, we have $$\bE(L)\ge \frac{1}{64}\sqrt{\frac{m}{m-\cmin}}.$$
\end{proof}

\subsection{Sparse case of Theorem \ref{thm: rate optimality}}\label{app:proof-sparse-rate}
\begin{proof}
Assume $\rmax$ and $\cmax$ are bounded by a fixed constant $K$. Since every row and every column has at least one one after removing degenerate rows and columns,
\[
 m\le \sum_{i=1}^m r_i=\sum_{j=1}^n c_j\le Kn,
 \qquad
 n\le \sum_{j=1}^n c_j=\sum_{i=1}^m r_i\le Km.
\]
Thus $m\asymp_K n$.

We now bound the cost. In the notation of Appendix \ref{app:path-notation}, suppose no column repeat has occurred among $a_1,\ldots,a_k$. Conditional on the history up to $b_k$, the next column $a_{k+1}=g_k(b_k)$ is uniform over the zeros in row $b_k$. This row has at most $K$ ones, so at least $(k-K)_+$ of the previously seen columns are available zeros in that row. Therefore
\[
\bP(R>k+1\mid R>k)\le 1-\frac{(k-K)_+}{n}.
\]
Iterating this bound gives, for $t>2K$,
\[
\bP(R>t)\le \exp\left(-\sum_{k=K+1}^{t-1}\frac{k-K}{n}\right)
\le \exp\left(-\frac{(t-K-1)^2}{4n}\right).
\]
Consequently,
\[
\bE(R)=\sum_{t\ge0}\bP(R>t)=O_K(\sqrt n).
\]
If the initial entry is zero, then after the first iteration, the trajectory takes the same form used above. Thus, the same estimate holds with at most one additional inner-loop iteration. Since the number $C$ of inner-loop iterations is bounded by an absolute constant times $\min(R,R')$, we have $\bE(C)=O_K(\sqrt n)$. Theorem \ref{thm: flips-sparse} gives $\bE(L)=\Omega_K(\sqrt n)$ because $m\asymp_K n$. Also $L\le C$ up to an absolute constant by construction. Hence $\bE(C)/\bE(L)=\Theta_K(1)$, which proves the sparse case of Theorem \ref{thm: rate optimality}.
\end{proof}

\subsection{Proof of Theorem \ref{thm: upper bound}}\label{app:proof-upper-bound}
\begin{proof}
	Let $t_1$ and $t_2$ be two positive integers to be determined later. Let $x_1$ be an arbitrary deterministic integer in $\{1,\ldots, n\}$. Let the sequences $\{x_k\}$ and $\{y_k\}$ be generated by $y_k=f_{t_1 t_2+k}(x_k)$ and $x_{k+1}= g_{t_1 t_2+k}(y_k)$. Then the random variables $x_k$ and $y_k$ ($1\le k\le t_1 t_2$) are independent with the random variables $a_k$ and $b_k$ ($1\le k\le t_1 t_2$). For integers $0\le k,l\le 2t_1 t_2$, let $A_{k,l}$ denote the event that
	\begin{enumerate}[label=(\roman*)]
	    \item there are no repeated values in $a_1, \ldots, a_{\left\lceil \frac{k}{2}\right\rceil}, x_1, \ldots, x_{\left\lceil \frac{l}{2}\right\rceil}$, and
	    \item there are no repeated values in $b_1, \ldots, b_{\left\lfloor \frac{k}{2}\right\rfloor}, y_1, \ldots, y_{\left\lfloor \frac{l}{2}\right\rfloor}$.
	\end{enumerate}
	
	Let $s$ be an integer with $1\le s\le t_1$. When $l$ is odd, we have
	\begin{align*}
	    &\mathbb{P}\left(A_{2st_2,l-1}\setminus A_{2st_2,l} \right) \\
	    \ge &\mathbb{P}\left(A_{2st_2,l-1} \mbox{ and } x_{\frac{l+1}{2}} \in \left\{a_1,\ldots, a_{st_2} \right\} \right)\\
	    =&\mathbb{E}\left(\mathbb{P}\left(A_{2st_2,l-1} \mbox{ and } x_{\frac{l+1}{2}} \in \left\{a_1,\ldots, a_{st_2} \right\}\;\middle| \; a_1,\ldots, a_{st_2},y_1,\ldots,y_{\frac{l-1}{2}}\right)\right)\\
	    =&\mathbb{E}\left(\mathbb{E}\left(\frac{\left|\left\{1\le i\le st_2:\left(y_{\frac{l-1}{2}},a_i\right)\not \in E\right\}\right|}{n-r_{y_{\frac{l-1}{2}}}}\;\mathbf{1}_{A_{2st_2,l-1}} \;\middle| \; a_1,\ldots, a_{st_2},y_1,\ldots,y_{\frac{l-1}{2}}\right)\right)\\
	    =&\mathbb{E}\left(\frac{\left|\left\{1\le i\le st_2:\left(y_{\frac{l-1}{2}},a_i\right)\not \in E\right\}\right|}{n-r_{y_{\frac{l-1}{2}}}}\;\mathbf{1}_{A_{2st_2,l-1}}\right)\\
	    \ge &\frac{1}{n}\;\mathbb{E}\left(\left|\left\{1\le i\le st_2:\left(y_{\frac{l-1}{2}},a_i\right)\not \in E\right\}\right|\;\mathbf{1}_{A_{2st_2,l-1}}\right)\\
	    = &\frac{1}{n}\sum_{i=1}^{st_2}\mathbb{P}\left(A_{2st_2,l-1}\mbox{ and }\left(y_{\frac{l-1}{2}},a_i\right)\not \in E\right).
	\end{align*}
	And when $k$ is even, we have
	\begin{align*}
	    &\mathbb{P}\left(A_{k-1,2st_2}\setminus A_{k,2st_2} \right)\\
	    \ge&\mathbb{P}\left(A_{k-1, 2st_2} \mbox{ and } b_{\frac{k}{2}} \in\left\{y_1,\ldots , y_{st_2}\right\}\right)\\
	    =&\mathbb{E}\left(\mathbb{P}\left(A_{k-1, 2st_2} \mbox{ and } b_{\frac{k}{2}} \in\left\{y_1,\ldots , y_{st_2}\right\}\;\middle|\; a_1,\ldots, a_{\frac{k}{2}}, y_1,\ldots, y_{st_2}\right)\right)\\
	    =&\mathbb{E}\left(\mathbb{E}\left(\frac{\left|\left\{1\le j\le st_2: \left(y_j,a_{\frac{k}{2}}\right)\in E\right\}\right|}{c_{a_{\frac{k}{2}}}}\;\mathbf{1}_{A_{k-1,2st_2}}\;\middle|\; a_1,\ldots, a_{\frac{k}{2}}, y_1,\ldots, y_{st_2} \right)\right)\\
	    =&\mathbb{E}\left(\frac{\left|\left\{1\le j\le st_2: \left(y_j,a_{\frac{k}{2}}\right)\in E\right\}\right|}{c_{a_{\frac{k}{2}}}}\;\mathbf{1}_{A_{k-1,2st_2}}\right)\\
	    \ge& \frac{1}{m}\;\mathbb{E}\left(\left|\left\{1\le j\le st_2: \left(y_j,a_{\frac{k}{2}}\right)\in E\right\}\right|\; \mathbf{1}_{A_{k-1,2st_2}}\right)\\
	    =&\frac{1}{m}\sum_{j=1}^{st_2} \mathbb{P}\left(A_{k-1,2st_2}\mbox{ and }\left(y_j, a_{\frac{k}{2}}\right) \in E\right).
	\end{align*}
	Then we have
	\begin{align*}
	    &(m+n) \mathbb{P}\left(A_{2(s-1)t_2,2 (s-1)t_2}\right)\\
	    \ge&n\; \mathbb{P}\left(A_{2 st_2,2(s-1) t_2+2}\setminus A_{2st_2,2st_2-1}\right)+m\;\mathbb{P}\left(A_{2(s-1)t_2+1,2st_2}\setminus A_{2st_2,2st_2}\right)\\
	    \ge&n\sum_{j=(s-1)t_2+1}^{st_2-1} \mathbb{P}\left(A_{2st_2,2j}\setminus A_{2st_2,2j+1}\right)+m\sum_{i=(s-1)t_2+1}^{st_2}\mathbb{P}\left(A_{2i-1,2st_2}\setminus A_{2i,2st_2}\right)\\
	    \ge &\sum_{j=(s-1)t_2+1}^{st_2-1}\sum_{i=1}^{st_2}\mathbb{P}\left(A_{2st_2,2j}\mbox{ and }\left(y_{j},a_i\right)\not \in E\right)\\&+ \sum_{i=(s-1)t_2+1}^{st_2}\sum_{j=1}^{st_2} \mathbb{P}\left(A_{2i-1,2st_2}\mbox{ and }\left(y_j, a_{i}\right) \in E\right)\\
	    \ge& t_2 (t_2-1)  \mathbb{P}\left(A_{2st_2,2st_2}\right).
	\end{align*}
	Take $t_2= \left\lfloor 3\sqrt{m+n}\right\rfloor$, then we have $$\mathbb{P}\left(A_{2t_1t_2,2t_1t_2}\right)\le \frac{1}{2^{t_1}}.$$
	Take $t_1= \left\lfloor 5\log\left( m+n\right)\right\rfloor$, then we have $$\mathbb{P}\left(A_{2t_1t_2,2t_1t_2}\right)\le \frac{1}{(m+n)^2}.$$
	
	Note that $x_k=a_{t_1t_2+k}$ and $y_k=b_{t_1t_2+k}$ when $x_1=a_{t_1t_2+1}$. Using the union bound we conclude that
	\begin{align*}
	&\mathbb{E}\left(\min\left(R,R'\right)\right)\\
	    \le&2t_1t_2+(m+n)\mathbb{P}\left(\min\left(R,R'\right)> 2t_1t_2\right)\\
	    \le &2t_1t_2+(m+n)\mathbb{P}\left(A_{2t_1t_2,2t_1t_2}\mbox{ for some } x_1\in\left\{1,\ldots, n\right\}\right)\\
	    \le &2t_1t_2+(m+n)\sum_{i=1}^n\mathbb{P}\left(A_{2t_1t_2,2t_1t_2}\mbox{ for } x_1= i\right)\\
	    \le &2t_1t_2+\frac{n}{m+n}\\
	    \le &32\sqrt{m+n}\log(m+n)
	\end{align*}
	Since $C\asymp \min\left(R,R'\right)$, we have $\mathbb{E}(C)=O\left(\sqrt{m+n}\log(m+n)\right)$, and Theorem \ref{thm: upper bound} holds.
\end{proof}
	
	\subsection{Proof of Theorem \ref{thm: flips-balance} and the balanced case of Theorem \ref{thm: rate optimality}}\label{app:proof-balanced}
\begin{proof}
	In the proof of Theorem \ref{thm: upper bound}, we no longer restrict $t_1$ to be $\left\lfloor 5\log\left( m+n\right)\right\rfloor$. Since $r_i\le \Delta n$ for all $1\le i\le m$, we have
	\begin{align*}
	    &\mathbb{P}\left(\min\left(R,R'\right)> 2t_1t_2\right)\\
	    =&\sum_{i=1}^n\mathbb{P}\left(\min\left(R,R'\right)> 2t_1t_2\mbox{ and } a_{t_1t_2+1}=i\right)\\
	    =&\sum_{i=1}^n \mathbb{P}\left( \left(A_{2t_1t_2, 2t_1t_2}\mbox{ for } x_1=i\right)\mbox{ and } a_{t_1t_2+1}=i\right)\\
	    =&\sum_{i=1}^n\mathbb{P}\left( A_{2t_1t_2, 2t_1t_2}\mbox{ for } x_1=i\right)\mathbb{P}\left(a_{t_1t_2+1}=i \;\middle|\;A_{2t_1t_2, 2t_1t_2}\mbox{ for } x_1=i\right)\\
	    \le & \sum_{i=1}^n\mathbb{P}\left( A_{2t_1t_2, 2t_1t_2}\mbox{ for } x_1=i\right)\mathbb{E}\left(\frac{1}{n-r_{b_{t_1t_2}}} \;\middle|\;A_{2t_1t_2, 2t_1t_2}\mbox{ for } x_1=i\right)\\
	    \le & \frac{1}{2^{t_1}(1-\Delta)}.
	\end{align*}
	Thus we have
	\begin{align*}
	    &\mathbb{E}\left(\min\left(R,R'\right)\right)\\
	    \le&2t_2\; \mathbb{E}\left(\left\lceil\frac{\min\left(R,R'\right)}{2t_2}\right\rceil\right)\\
	    =& 2 t_2 \sum_{t_1=0}^\infty \mathbb{P}\left(\left\lceil\frac{\min\left(R,R'\right)}{2t_2}\right\rceil > t_1 \right)\\
	    = & 2 t_2 \left(1+\sum_{t_1=1}^\infty \mathbb{P}\left(\min\left(R,R'\right) >2t_1t_2\right)\right)\\
	    \le&\frac{4t_2}{1-\Delta}\\
	    \le&\frac{12\sqrt{m+n}}{1-\Delta}
	\end{align*}
	Because $m\asymp n$ and $C\asymp \min\left(R,R'\right)$, we have $\mathbb{E}\left(\min\left(R,R'\right)\right)=O(\sqrt{n})$ and $\mathbb{E}(C)=O(\sqrt{n})$.
	
	By the union bound, for any positive integers $s_1$ and $s_2$, we have
	\begin{align*}
	   &\mathbb{P}\left(L\le s_2\mbox{ and } \min\left(R,R'\right)\le s_1\right)\\
	   \le &\mathbb{P}\left(R-R_0\le s_2 \mbox{ and } R\le s_1\right)+\mathbb{P}\left(R'-R_0'\le s_2 \mbox{ and } R'\le s_1\right)\\
	    \le &\sum_{i=1}^{s_1}\sum_{j=1}^{s_2}\mathbb{P}\left(a_i=a_{i+j}\right)+\sum_{i=1}^{s_1}\sum_{j=1}^{s_2}\mathbb{P}\left(b_i=b_{i+j}\right)\\
	    \le &\sum_{i=1}^{s_1}\sum_{j=1}^{s_2}\mathbb{E}\left(\frac{1}{n-r_{b_{i+j-1}}}\right)+\sum_{i=1}^{s_1}\sum_{j=1}^{s_2}\mathbb{E}\left(\frac{1}{c_{a_{i+j}}}\right)\\
	    \le &{s_1 s_2} \left(\frac{1}{(1-\Delta)n}+\frac{1}{\delta m}\right).
	\end{align*}
	By Markov's inequality, we have
	\begin{align*}
	    &\mathbb{E}\left(L\right)\\
	    \ge &s_2\; \mathbb{P}\left(L>s_2\right)\\
	    \ge &s_2 \left(1-\mathbb{P}\left(\min\left(R,R'\right)> s_1\right)-\mathbb{P}\left(L\le s_2\mbox{ and } \min\left(R,R'\right)\le s_1\right)\right)\\
	    \ge& s_2\left(1-\frac{\mathbb{E}\left(\min\left(R,R'\right)\right)}{s_1}-{s_1 s_2} \left(\frac{1}{(1-\Delta)n}+\frac{1}{\delta m}\right) \right).
	\end{align*}
	Take $s_1=\left\lceil\frac{36 \sqrt{m+n}}{1-\Delta}\right\rceil$ and $s_2= \left\lfloor \frac{1 }{3s_1}\left(\frac{1}{(1-\Delta)n}+\frac{1}{\delta m}\right)^{-1}\right\rfloor$, then the two error terms are at most $\frac{1}{3}$, and hence $\mathbb{E}(L) \ge \frac{s_2}{3}$. Since $m \asymp n$, we have $s_2 = \Omega(\sqrt{n})$. Therefore, $\mathbb{E}(L) = \Omega(\sqrt{n})$.
	
	Therefore,
	\[
	\mathbb{E}(C)\asymp \mathbb{E}(L)\asymp \sqrt{n},
	\qquad
	\frac{\mathbb{E}(C)}{\mathbb{E}(L)}=\Theta(1).
	\]
	The first relation proves Theorem \ref{thm: flips-balance}, and the second proves the balanced case of Theorem \ref{thm: rate optimality}.
\end{proof}
	
\subsection{Proof of Theorem \ref{thm:half-balanced-efficiency}}\label{app:proof-half-balanced}
\begin{proof}
Write $N=m+n$ and use the notation of Appendix \ref{app:path-notation}. We first prove the required lower bound for the column closing gap $R-R_0$; the row closing gap is handled at the end by applying the same argument to $(1-M)^\top$.

Let
\[
    \eta=\min\{2-2\delta,2\Delta\}>1,\qquad
    t_1=\left\lceil \frac{20\log N}{\log^2\eta}\right\rceil .
\]
If $\rmax\le \delta n$, then for any column-valued random variables $x$ and $y$,
\begin{align*}
    &\frac12\sum_{i=1}^n
    \left|\bP(g(f(x))=i)-\bP(g(f(y))=i)\right|  \\
    &\qquad\le
    \frac12\sum_{i=1}^n
    \max\{\bP(g(f(x))=i),\bP(g(f(y))=i)\}
    \le \frac{1}{2-2\delta}.
\end{align*}
If $\cmin\ge \Delta m$, the same calculation through the intermediate row gives
\[
    \frac12\sum_{i=1}^n
    \left|\bP(g(f(x))=i)-\bP(g(f(y))=i)\right|
    \le \frac{1}{2\Delta}.
\]
Thus the one-step column kernel has Dobrushin contraction coefficient at most $\eta^{-1}$. By the standard contraction argument for total variation distance \citep[Lemma~4.11]{LP}, if
\[
    q_i:=\bP(a_{t_1+1}=i),\qquad
    \varepsilon:=\frac{1}{100t_1^2N^2},
\]
then, after increasing the numerical constant in the definition of $t_1$ if necessary,
\begin{equation}\label{eq:half-mixing}
    \frac12\sum_{i=1}^n
    \left|\bP(a_{k+t_1}=i\mid a_k)-q_i\right|
    \le \varepsilon
    \qquad\text{for every }k\ge1 .
\end{equation}

We next use a block coupling to compare the chain at times separated by $t_1$ with independent $q$-samples. At the block times $t_1,2t_1,\ldots$, construct random variables
\[
    \tilde a_{t_1},\tilde a_{2t_1},\ldots
\]
such that they are mutually independent with common distribution $q$ and, conditionally on the past up to time $s t_1$,
\begin{equation}\label{eq:block-coupling}
    \bP\left(a_{(s+1)t_1}\ne \tilde a_{s t_1}\mid a_1,\ldots,a_{s t_1}\right)
    \le \varepsilon .
\end{equation}
This follows from maximal coupling and \eqref{eq:half-mixing}; the marginal law of each new $\tilde a_{s t_1}$ is $q$, independent of the past, so the block variables are independent while staying coupled to the corresponding chain endpoints.

Let $T_P$ be Poisson with mean $t_2$, independent of the block variables, and set
\[
    t_2=\left\lfloor
    \frac{10\log N}{\sqrt{\sum_iq_i^2}}
    \right\rfloor .
\]
The standard Poissonized birthday formula \citep{Camarri} gives
\[
    \bP(\tilde a_{t_1},\ldots,\tilde a_{t_1T_P}\text{ all distinct})
    =e^{-t_2}\prod_i(1+q_it_2)
    \le N^{-3}.
\]
Also $\bP(T_P\ge3t_2)\le e^{-t_2}\le N^{-8}$. The fixed-length distinct event is contained in the Poissonized distinct event together with $\{T_P\ge3t_2\}$. Therefore
\begin{align}
    \bP(R>3t_1t_2)
    &\le
    \bP(\tilde a_{t_1},\tilde a_{2t_1},\ldots,
       \tilde a_{(3t_2-1)t_1}\text{ all distinct})
       +\sum_{s=1}^{3t_2-1}\bP(a_{(s+1)t_1}\ne\tilde a_{s t_1}) \notag\\
    &\le N^{-3}+N^{-8}+3t_2\varepsilon
    \le \frac{2}{7N}. \label{eq:first-R-tail}
\end{align}

We next record a second upper bound on $R$. If $\cmin\ge\Delta m$, define
\[
    p_k=\frac1m\sum_{i=1}^m
    \frac{\mathbf 1\{(i,k)\notin E\}}{n-r_i},
    \qquad 1\le k\le n .
\]
Then $\sum_kp_k=1$, and for every column-valued random variable $x$,
\[
    \bP(g(f(x))=k)\le \Delta^{-1}p_k\le 2p_k.
\]
Let $k_0$ maximize $p_k$ and let
\[
    S=\left\{k:\frac1m\sum_{i=1}^m
    \frac{\mathbf 1\{(i,k)\notin E,\ (i,k_0)\notin E\}}{n-r_i}
    \ge \frac{p_{k_0}}4\right\}.
\]
A direct counting estimate gives $|S|\le 2/p_{k_0}$, and if $k\notin S$, then
\[
    \bP(g(f(k))=k_0)\ge \frac{p_{k_0}}2 .
\]
With
\[
    t_3=\left\lceil\frac{20\log N}{(\max_kp_k)^2}\right\rceil ,
\]
a Chernoff bound for variables with conditional success probability at least $p_{k_0}/2$ yields
\begin{equation}\label{eq:second-R-tail-dense}
    \bP(R>2t_3)\le N^{-2}.
\end{equation}
If instead $\rmax\le\delta n$, set $p_k=1/n$ for all $k$. Then $\bP(g(f(x))=k)\le 2p_k$, and with the same definition of $t_3$ we have $R\le n+1\le2t_3$ for all large $N$; finitely many small values of $N$ are absorbed into the constants. Thus \eqref{eq:second-R-tail-dense} holds in either half-balanced case.

Combining \eqref{eq:first-R-tail} and \eqref{eq:second-R-tail-dense}, and using $R\le n+1$, gives
\begin{equation}\label{eq:ER-upper-half}
    \bE (R)
    \le
    \min(3t_1t_2,2t_3)
    +(n+1)(\bP(R>3t_1t_2)+\bP(R>2t_3))
    \le 4\min(t_1t_2,t_3).
\end{equation}

It remains to rule out a very recent first repeat. Let
\[
    d=\left\lfloor\frac{t_2}{30t_1^3}\right\rfloor .
\]
The parameters just defined satisfy
\[
    \frac{t_1}{t_2}\ge 3\sqrt{\sum_iq_i^2},
    \qquad
    \sqrt{\frac{t_1}{t_3}}\ge \max_i p_i\ge \sum_i p_i^2 .
\]
Moreover, by the domination $\bP(g(f(x))=i)\le2p_i$ and \eqref{eq:half-mixing},
\[
    \sum_{j=1}^{3t_1t_2}\bP(a_j=i)
    \le \bP(a_1=i)+2t_1p_i+3t_1t_2(q_i+\varepsilon),
\]
and, uniformly in $j$,
\[
    \sum_{k=1}^{d}\bP(a_{j+k}=i\mid a_j=i)
    \le 2t_1p_i+d(q_i+\varepsilon).
\]
The term $\bP(a_1=i)$ accounts for the arbitrary starting law of $a_1$. Therefore,
\begin{align}
    \bP(R-R_0\le d)
    &\le \bP(R>3t_1t_2)
      +\sum_i\sum_{j=1}^{3t_1t_2}\sum_{k=1}^d
        \bP(a_j=i)\bP(a_{j+k}=i\mid a_j=i) \notag\\
    &\le
      \frac17+
      \sum_i(2t_1p_i+3t_1t_2(q_i+\varepsilon))
            (2t_1p_i+d(q_i+\varepsilon)) \notag\\
    &\hspace{0.35in}
      +\max_i\{2t_1p_i+d(q_i+\varepsilon)\}. \label{eq:recent-collision-expanded}
\end{align}
Expanding the sum in \eqref{eq:recent-collision-expanded} and using the two preceding parameter bounds gives
\[
    \frac17+
      \sum_i(2t_1p_i+3t_1t_2(q_i+\varepsilon))
            (2t_1p_i+d(q_i+\varepsilon))
    \le
    \frac{5}{11}+4t_1^{13/4}t_3^{-1/4}.
\]
For the final maximum term, note that $q$ is the law after at least one transition, and every one-step transition is dominated by $2p_i$; hence $q_i\le2p_i$. Also,
\[
    d\max_iq_i\le d\left(\sum_iq_i^2\right)^{1/2}=O(t_1^{-2}).
\]
The term $2t_1p_i$ is controlled by $\max_i p_i\le\sqrt{t_1/t_3}$ under the same condition. Hence, if $t_3>(A t_1)^{13}/4$ and $A=A(\delta,\Delta)$ is sufficiently large, then
\[
    \max_i\{2t_1p_i+d(q_i+\varepsilon)\}
    +4t_1^{13/4}t_3^{-1/4}
    \le \frac{1}{143},
\]
after also increasing $A$ to cover finitely many small $N$. Therefore
\begin{equation}\label{eq:recent-repeat-bound}
    \bP(R-R_0\le d)\le \frac{6}{13}.
\end{equation}
On the same event range, \eqref{eq:ER-upper-half} gives
\[
    \frac{\bE (R)}{(A t_1)^{13}}
    \le
    \frac{4t_1t_2}{(A t_1)^{13}}
    \le d
\]
after increasing $A$ if necessary. Thus
\begin{equation}\label{eq:half-column-gap}
    \bP\left(R-R_0<\frac{\bE (R)}{(A t_1)^{13}}\right)\le\frac{6}{13}
\end{equation}
when $t_3>(At_1)^{13}/4$. If $t_3\le(At_1)^{13}/4$, then
\eqref{eq:ER-upper-half} gives $\bE (R)/(At_1)^{13}\le1$, while $R-R_0\ge1$ almost surely, so \eqref{eq:half-column-gap} is trivial.

We now transfer the same bound to the row process without adding any assumption. Let $\overline M=(1-M)^\top$. The column process of $\overline M$ is the row process $(b_k)$ of $M$, shifted by one half-step: from a row of $M$ it first chooses a zero in that row and then a one in the resulting column. If $M$ satisfies $\rmax\le\delta n$, then every column of $\overline M$ has sum at least $(1-\delta)n$, so $\overline M$ satisfies the dense-column alternative with parameter $1-\delta>1/2$. If $M$ satisfies $\cmin\ge\Delta m$, then every row of $\overline M$ has sum at most $(1-\Delta)m$, so $\overline M$ satisfies the sparse-row alternative with parameter $1-\Delta<1/2$. In both cases the transformed chain has contraction parameter at least the $\eta$ used above, so the same block length $t_1$ is valid. Applying \eqref{eq:half-column-gap} to $\overline M$ gives
\begin{equation}\label{eq:half-row-gap}
    \bP\left(R'-R'_0<\frac{\bE (R')}{(A t_1)^{13}}\right)\le\frac{6}{13},
\end{equation}
with $A$ still depending only on $\delta,\Delta$.

Finally, $C\asymp\min(R,R')$ and
\[
    L\ge 2\min(R-R_0,R'-R'_0).
\]
Since $\bE(\min(R,R'))\le \bE(R)$ and $\bE(\min(R,R'))\le \bE (R')$, \eqref{eq:half-column-gap} and \eqref{eq:half-row-gap} imply
\[
\begin{aligned}
    \bE (L)
    &\ge
    \frac{2\,\bE\min(R,R')}{(A t_1)^{13}}
    \bP\left(
    \min(R-R_0,R'-R'_0)
    \ge \frac{\bE\min(R,R')}{(A t_1)^{13}}
    \right) \\
    &\ge
    \frac{2}{13(A t_1)^{13}}\bE(\min(R,R')).
\end{aligned}
\]
Because $t_1=O_{\delta,\Delta}(\log N)$ and $C\asymp\min(R,R')$,
\[
    \frac{\bE (C)}{\bE (L)}
    =O_{\delta,\Delta}(\log^{13}N).
\]
This proves the theorem.
\end{proof}
	
\subsection{Proof of Theorem \ref{thm:perm-mixing}}\label{app:proof-perm-mixing}
We identify $\Sigma(\mathbf{1}_n,\mathbf{1}_n)$ with the symmetric group $S_n$ by writing the unique one in row $i$ at column $\sigma(i)$. Under this identification, one step of the \snake algorithm is right multiplication by a random cycle with the following law. Let $X_1$ be uniform on $[n]:=\{1,\ldots,n\}$. Given $X_s$, sample $X_{s+1}$ uniformly from $[n]\setminus\{X_s\}$ until the first time $T$ at which $X_T\in\{X_1,\ldots,X_{T-2}\}$. If $X_T=X_J$, then the generated cycle is
\[
    (X_J,X_{J+1},\ldots,X_{T-1}).
\]
Let \(\Lambda:=T-J\) denote the length of the generated permutation cycle. The corresponding
\snake transition flips $2\Lambda$ matrix entries.

The distribution is invariant under relabeling of $[n]$. Hence, conditional on the cycle length being $\ell$, the generated cycle is uniform over the conjugacy class of all $\ell$-cycles. We write
\[
    P_n=\sum_{\ell=2}^n p_\ell\,\mu_\ell,
\]
where $p_\ell$ is the probability that the generated cycle has length $\ell$, and $\mu_\ell$ is the uniform law on $\ell$-cycles.

\begin{lemma}[Birthday scale of the snake cycle]\label{lem:perm-birthday}
Let $T$ be the number of raw labels sampled in one step, and let $\Lambda$ be the length of the generated cycle. For $2\le k\le n$,
\[
    \bP(T\ge k+1)=\frac{(n-1)(n-2)\cdots(n-k+1)}{(n-1)^{k-1}}.
\]
Moreover,
\[
    \bE (T)=\sqrt{\frac{\pi n}{2}}+O(1),\qquad
    \bE (\Lambda)=\sqrt{\frac{\pi n}{8}}+O(1),
\]
and there are constants $0<a_1<a_2<\infty$ and $p_0>0$, independent of $n$, such that
\[
    \bP(a_1\sqrt n\le \Lambda\le a_2\sqrt n)\ge p_0.
\]
\end{lemma}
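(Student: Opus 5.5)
We prove the three claims in order: the exact tail of $T$, the two expectations, and the anti-concentration of $\Lambda$.

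\textbf{Exact tail of $T$.} The event $\{T\ge k+1\}$ means that $X_1,\ldots,X_k$ contain no return to a label at distance at least two. Since each step avoids only the immediately preceding label, this is the same as $X_1,\ldots,X_k$ being pairwise distinct. Condition on $X_1,\ldots,X_s$ being distinct. Then $X_{s+1}$ is uniform on the $n-1$ labels different from $X_s$. Of these, exactly $s-1$ are forbidden, namely $X_1,\ldots,X_{s-1}$. So the conditional probability of staying distinct is $(n-s)/(n-1)$. Multiplying over $s=1,\ldots,k-1$ gives the product formula.

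\textbf{Expectation of $T$.} Write $\bE(T)=\sum_{k\ge0}\bP(T\ge k+1)$. The summands equal $\prod_{s=1}^{k-1}(1-\frac{s-1}{n-1})$, which is a classical birthday product with $n-1$ ``days'' and index shifted by one. Ramanujan's $Q$-function asymptotics (Knuth, or a direct Laplace/Stirling comparison with $\int_0^\infty e^{-x^2/(2n)}\,dx=\sqrt{\pi n/2}$) give $\bE(T)=\sqrt{\pi n/2}+O(1)$. To make the error explicit, bound $\log(1-u)$ between $-u-u^2$ and $-u$ for $u\le1/2$. Truncate at $k\le n^{0.6}$, where the remaining tail is super-polynomially small. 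Then compare the resulting Riemann sum with the Gaussian integral; the discretization and the quadratic corrections each contribute $O(1)$.

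\textbf{Expectation of $\Lambda$.} Condition on $T=k+1$ and on the distinct prefix $X_1,\ldots,X_k$. The label $X_{k+1}$ is then uniform over $\{X_1,\ldots,X_{k-1}\}$, since all $n-1$ non-$X_k$ labels were equally likely and we condition on hitting this set. So $J$ is uniform on $\{1,\ldots,k-1\}$, and $\Lambda=T-J=k+1-J$ is uniform on $\{2,\ldots,k\}$. Hence $\bE(\Lambda\mid T)=(T+1)/2-\tfrac12+O(1)$, which gives $\bE(\Lambda)=\tfrac12\bE(T)+O(1)=\sqrt{\pi n/8}+O(1)$. The one point to verify carefully here is the uniformity of $X_{k+1}$ over the earlier labels; it follows from the symmetry of the uniform step distribution.

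\textbf{Anti-concentration of $\Lambda$.} From the product formula, $\bP(T\ge x\sqrt n)\ge e^{-x^2}$ for fixed $x$ and large $n$, using $\log(1-u)\ge -2u$. Take $a_2=2$. By Markov's inequality and $\bE(\Lambda)\le\sqrt n$, we get $\bP(\Lambda>2\sqrt n)\le1/2$; alternatively, use the Gaussian tail $\bP(T\ge 3\sqrt n)\le e^{-4}$ together with $\Lambda<T$.

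For the lower side, use that $\Lambda$ is uniform on $\{2,\ldots,T-1\}$ given $T$. On the event $T\ge\sqrt n$, which has probability at least $e^{-1}-o(1)$, we have $\bP(\Lambda\ge\tfrac14\sqrt n\mid T)\ge1/2$. Combining this with the upper bound $\bP(T\le 3\sqrt n)\ge 1-e^{-2}$, obtained from $\log(1-u)\le -u$, yields a positive $p_0$ for the interval $[\tfrac14\sqrt n,3\sqrt n]$.

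Finitely many small $n$ are absorbed into the constants $a_1$, $a_2$, $p_0$.
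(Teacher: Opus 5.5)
Your proof is correct and follows essentially the same route as the paper: the product formula for $\bP(T\ge k+1)$, the birthday/Ramanujan $Q$-function asymptotics for $\bE(T)$, the uniformity of $\Lambda$ on $\{2,\ldots,T-1\}$ given $T$, and a two-sided birthday bound on $T$ transferred to $\Lambda$. Two small corrections: $\bE(\Lambda\mid T)=(T+1)/2$ exactly, and only your Gaussian-tail bound $\bP(T>3\sqrt n)\le e^{-2}$ actually combines with the lower-side mass $e^{-1}/2$. The Markov alternative $\bP(\Lambda>2\sqrt n)\le 1/2$ exceeds $e^{-1}/2$, so the interval you prove is $[\tfrac14\sqrt n,3\sqrt n]$, i.e.\ $a_2=3$ rather than $a_2=2$.
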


\begin{proof}
The event $\{T\ge k+1\}$ is the event that $X_1,\ldots,X_k$ are all distinct. The first label has $n$ choices, while every later label has $n-1$ possible choices because immediate repetition is forbidden. Thus
\[
    \bP(T\ge k+1)
    =\frac{n(n-1)\cdots(n-k+1)}{n(n-1)^{k-1}}
    =\frac{(n-1)(n-2)\cdots(n-k+1)}{(n-1)^{k-1}}.
\]
Therefore
\[
    \bP(T\ge \lfloor c\sqrt n\rfloor+1)\to \exp(-c^2/2)
\]
for each fixed $c>0$, by the standard birthday-problem estimate. Summing the tail probabilities, or equivalently using the Ramanujan $Q$-function asymptotic in \citep{Flajolet2}, gives $\bE (T)=\sqrt{\pi n/2}+O(1)$.

Conditional on $T=k+1$, the repeated label is uniform over the first $k-1$ labels that are not the immediately previous one, and the resulting cycle length is uniform on $\{2,\ldots,k\}$. Hence $\bE (\Lambda)=\bE (T)/2+O(1)$, which gives the displayed expectation. The same birthday estimate gives fixed constants $0<b_1<b_2$ such that $\bP(b_1\sqrt n\le T\le b_2\sqrt n)\ge 3/4$ for all large $n$. Conditional on $T\ge b_1\sqrt n$, the chance that $\Lambda\ge b_1\sqrt n/3$ is bounded below by a positive absolute constant. Taking $a_1=b_1/3$ and $a_2=b_2$ proves the last claim, after adjusting constants for small $n$ if needed.
\end{proof}

\begin{lemma}[Lower bound]\label{lem:perm-lower}
There is a constant $c>0$ such that, for all $t\le c\sqrt n\log n$,
\[
    \left\|P_n^t-U(S_n)\right\|_{\TV}\ge \frac14+o(1).
\]
\end{lemma}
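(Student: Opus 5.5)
The plan is to use the fixed-point statistic $F(\sigma)=|\{i:\sigma(i)=i\}|$ of the walk started at the identity. Under $U(S_n)$, $F$ is asymptotically Poisson$(1)$, so $\bP_U(F\ge K)\le 1/K!$ is tiny for a large fixed $K$. It therefore suffices to show that for $t\le c\sqrt n\log n$ the walk $P_n^t(\mathrm{id},\cdot)$ puts probability at least $1/4+o(1)$ (in fact close to $1$ for small $c$) on the event $\{F\ge K\}$. Then $\|P_n^t-U\|_\TV\ge \bP_t(F\ge K)-\bP_U(F\ge K)$ gives the claim after letting $K\to\infty$ slowly.

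\textbf{Coupon-collector step.} Call a label \emph{touched} if it lies in the support of one of the cycles used in the first $t$ steps. Any untouched label is a fixed point of the product, so $F\ge n-|\text{touched}|$. By Lemma~\ref{lem:perm-birthday}, each step touches $\Lambda$ labels, with $\bE(\Lambda)=\sqrt{\pi n/8}+O(1)$. Also, $\Lambda\le T$ and the birthday tail $\bP(T\ge k+1)\le e^{-(k-1)(k-2)/(2(n-1))}$ give $\bE(\Lambda^2)=O(n)$. Conditional on its length, each cycle has a uniformly random support that is independent across steps, by relabeling invariance. Hence, for a fixed label $i$,
\[
\bP(i\text{ untouched after }t\text{ steps})=\Bigl(1-\tfrac{\bE\Lambda}{n}\Bigr)^t\ge \exp\bigl(-(1+o(1))t\sqrt{\pi/(8n)}\bigr)\ge n^{-c'}
\]
with $c'=c\sqrt{\pi/8}(1+o(1))<1/2$ when $c$ is small. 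So the expected number $U_t$ of untouched labels is at least $n^{1-c'}\to\infty$.

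\textbf{Concentration.} To get $U_t\ge K$ with high probability, compute $\bE(U_t(U_t-1))$. For two distinct labels, the probability that both avoid a random support of size $\ell$ is $(n-\ell)(n-\ell-1)/(n(n-1))$, which is at most $(1-\ell/n)^2$. After averaging over $\ell$, the pair-untouched probability is at most $\bigl(\bE[(1-\Lambda/n)^2]\bigr)^t$. Expanding with $\bE\Lambda^2=O(n)$ gives $\bE[(1-\Lambda/n)^2]=(1-\bE\Lambda/n)^2+O(1/n)$, so the ratio to the square of the single-label probability is $(1+O(1/n))^t=1+o(1)$, since $t=O(\sqrt n\log n)$. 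Thus $\mathrm{Var}(U_t)=o((\bE U_t)^2)$, and Chebyshev gives $U_t\ge \bE U_t/2\ge K$ with probability $1-o(1)$. Consequently $\|P_n^t-U\|_\TV\ge 1-o(1)-1/K!\ge 1/4+o(1)$.

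\textbf{Main obstacle.} The delicate point is the second-moment computation: one must verify that the correlation between two labels' untouched events accumulates to only $1+o(1)$ over $t\asymp\sqrt n\log n$ steps. This requires the $\bE\Lambda^2=O(n)$ bound together with the exact per-step avoidance formula, and a careful expansion confirming that the $O(1/n)$ per-step error, multiplied by $t$, stays $o(1)$.
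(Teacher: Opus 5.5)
Your argument is correct, but it is carried out differently from the paper's. Both proofs use the same obstruction: a label never touched is still a fixed point, and fixed points are rare under $U(S_n)$.

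The paper works with the raw label stream.
\begin{itemize}
\item It bounds $N_t=T_1+\cdots+T_t$ by Markov's inequality, using only $\bE T=O(\sqrt n)$.
\item It applies the classical coupon-collector concentration $\tau/(n\log n)\to 1$ to the concatenated stream, noting that forbidding immediate repetition changes nothing.
\item It compares the event ``at least one fixed point'', which has probability $0.99+o(1)$ under the walk and $1-1/e+o(1)$ under $U(S_n)$. This gives $\|P_n^t-U(S_n)\|_{\TV}\ge 1/e-0.01$.
\end{itemize}

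You instead use the class-function form $P_n=\sum_\ell p_\ell\mu_\ell$, so the step supports are independent and uniform given their lengths. You then compute the first two moments of the number of untouched labels exactly, and compare with $\{F\ge K\}$, whose uniform probability is at most $1/K!$.

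The paper's route is lighter: it needs no second moment, no bound on $\bE\Lambda^2$, and no appeal to the support structure, and $1/e-0.01$ already clears the $1/4$ threshold.

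Your route buys two things.
\begin{itemize}
\item The distance you get is $1-o(1)$, not a constant. So $t_{\mathrm{mix}}^{\mathrm{perm}}(\epsilon)\ge c\sqrt n\log n$ holds directly for every fixed $\epsilon\in(0,1)$, which would remove the restriction $\epsilon<1/2$ in Theorem~\ref{thm:perm-mixing}.
\item You track cycle supports rather than all sampled labels; the prefix $X_1,\dots,X_{J-1}$ of each step is sampled but not permuted. As a result your argument works for every $c<\sqrt{8/\pi}$, i.e.\ up to $t\approx n\log n/\bE\Lambda$. Markov's inequality on raw samples instead forces a very small, unspecified $c$.
\end{itemize}
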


\begin{proof}
It is enough to consider the walk started from the identity. Let $N_t$ be the total number of raw labels sampled in the first $t$ steps, so $N_t=T_1+\cdots+T_t$ with $T_i$ distributed as in Lemma \ref{lem:perm-birthday}. If $t\le c\sqrt n\log n$, then
\[
    \bE (N_t)\le c\sqrt n\log n\left(\sqrt{\frac{\pi n}{2}}+O(1)\right).
\]
Choose $c>0$ so small that $\bP(N_t>n\log n/2)\le 0.01+o(1)$ by Markov's inequality.

Let $\tau$ be the first raw sampling time by which all labels in $[n]$ have appeared. The usual coupon-collector proof, unchanged by the rule forbidding immediate repetition, gives $\tau/(n\log n)\to 1$ in probability: after $i-1$ distinct labels have appeared, the next raw label has probability between $(n-i+1)/n$ and $(n-i+1)/(n-1)$ of being new. Hence $\bP(\tau\le n\log n/2)=o(1)$.

On the event $\{N_t\le n\log n/2<\tau\}$, some label has not appeared in any generated cycle, so that label is still a fixed point of the current permutation. Let $A$ be the event that a permutation has at least one fixed point. We have
\[
    \bP(\sigma_t\in A)\ge 0.99+o(1).
\]
Under the uniform distribution on $S_n$, the derangement formula gives $U(S_n)(A)=1-1/e+o(1)$. Therefore
\[
    \left\|P_n^t-U(S_n)\right\|_{\TV}
    \ge \bP(\sigma_t\in A)-U(S_n)(A)
    \ge \frac1e-0.01+o(1)>\frac14+o(1),
\]
which proves the lower bound.
\end{proof}

The upper bound uses standard Fourier analysis on $S_n$. We recall only the facts needed here. Irreducible representations of $S_n$ are indexed by partitions $\lambda\vdash n$. Let $d_\lambda$ be the dimension, and let
\[
    r_\ell(\lambda)=\frac{\chi^\lambda(C_\ell)}{d_\lambda}
\]
be the character ratio at the conjugacy class $C_\ell$ of $\ell$-cycles. If $Q$ is a probability measure on $S_n$ that is constant on conjugacy classes, then the Fourier transform of $Q$ in representation $\lambda$ is scalar:
\[
    \rho_\lambda(Q)=\left(\sum_\ell q_\ell r_\ell(\lambda)\right)I_{d_\lambda},
\]
where $q_\ell$ is the total $Q$-mass on $\ell$-cycles. The Plancherel bound used by \citet{diaconis1981generating} gives
\begin{equation}\label{eq:perm-plancherel}
    4\|Q^s-U(S_n)\|_{\TV}^2
    \le \sum_{\lambda\ne \mathrm{triv}} d_\lambda^2
    \left|\sum_\ell q_\ell r_\ell(\lambda)\right|^{2s}.
\end{equation}
The trivial representation is omitted. The sign representation has $r_\ell(\mathrm{sgn})=(-1)^{\ell-1}$.

\begin{lemma}[Parity smoothing]\label{lem:perm-sign}
Let $I_n=\{\ell:\lceil a_1\sqrt n\rceil\le \ell\le \lfloor a_2\sqrt n\rfloor\}$, where $a_1,a_2$ are fixed positive constants. If
\[
    p_I:=\sum_{\ell\in I_n}p_\ell
\]
is bounded below by a positive constant, then
\[
    \left|\sum_{\ell\in I_n}(-1)^{\ell-1}\frac{p_\ell}{p_I}\right|=O(n^{-1/2}).
\]
\end{lemma}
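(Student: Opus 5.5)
The plan is to compute the law of $\Lambda$ exactly from the description in Lemma~\ref{lem:perm-birthday}. The key point is that the sequence $\ell\mapsto p_\ell$ is nonincreasing, so the alternating sum over the interval $I_n$ is controlled by a single term. That term is of order $\bE(T)/n=O(n^{-1/2})$.

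First, I will write the law of $\Lambda$ explicitly. From the proof of Lemma~\ref{lem:perm-birthday}, conditional on $T=k+1$ the cycle length $\Lambda$ is uniform on $\{2,\ldots,k\}$. Hence
\[
    p_\ell=\sum_{k\ge \ell}\frac{\bP(T=k+1)}{k-1},\qquad 2\le\ell\le n .
\]
Next I will compute $\bP(T=k+1)$ directly. Given that $X_1,\ldots,X_k$ are distinct, the next label is uniform on the $n-1$ labels other than $X_k$. Exactly $k-1$ of these are earlier labels, so
\[
    \bP(T=k+1)=\bP(T\ge k+1)\,\frac{k-1}{n-1}.
\]
It follows that
\[
    p_\ell=\frac{1}{n-1}\sum_{k\ge\ell}\bP(T\ge k+1).
\]
In particular $p_\ell$ is nonincreasing in $\ell$. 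It is also uniformly small:
\[
    0\le p_\ell\le \frac{\bE(T)}{n-1}=O(n^{-1/2}),
\]
where the last step uses $\bE(T)=\sqrt{\pi n/2}+O(1)$ from Lemma~\ref{lem:perm-birthday}.

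Then I will apply the elementary bound for alternating sums of a monotone nonnegative sequence. Pair consecutive terms, or equivalently use Abel summation. For any interval $\{\ell_0,\ldots,\ell_1\}$ this gives
\[
    \Bigl|\sum_{\ell=\ell_0}^{\ell_1}(-1)^{\ell-1}p_\ell\Bigr|\le p_{\ell_0}\le \max_\ell p_\ell=O(n^{-1/2}).
\]
Taking the interval to be $I_n$ and dividing by $p_I$, which is bounded below by a positive constant by hypothesis, yields the claim. If $I_n$ is empty for small $n$, the statement is vacuous, or the constant can be adjusted.

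The main step needing care is the identity $\bP(T=k+1)=\bP(T\ge k+1)(k-1)/(n-1)$. It must account correctly for the forbidden immediate repetition: the excluded label $X_k$ is itself one of the previous labels, which leaves exactly $k-1$ eligible repeats among the $n-1$ choices. Once that is verified, monotonicity and the $O(n^{-1/2})$ bound follow immediately. No cancellation beyond the trivial alternating-series bound is needed.
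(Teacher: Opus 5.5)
Your proof is correct and follows essentially the paper's route. Both derive monotonicity of $p_\ell$ from $\Lambda$ being uniform on $\{2,\ldots,k\}$ given $T=k+1$ (the paper via $p_\ell-p_{\ell+1}=\bP(T=\ell+1)/(\ell-1)$), bound the alternating sum over $I_n$ by its leading term, and differ only in that your hazard identity gives the uniform bound $p_\ell\le \bE(T)/(n-1)$, whereas the paper's $p_\ell\le 1/(\ell-1)$ is $O(n^{-1/2})$ only because $\ell\ge a_1\sqrt n$ on $I_n$.
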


\begin{proof}
For $\ell\ge 2$, conditioning on the raw stopping time gives
\[
    p_\ell-p_{\ell+1}=\frac{\bP(T=\ell+1)}{\ell-1}.
\]
Indeed, for $T\ge \ell+2$, the conditional probabilities of producing an $\ell$-cycle and an $(\ell+1)$-cycle are equal; the only unmatched case is $T=\ell+1$. Also $p_\ell\le 1/(\ell-1)=O(n^{-1/2})$ uniformly over $\ell\in I_n$, and
\[
    \sum_{\ell\in I_n}|p_\ell-p_{\ell+1}|
    \le \sum_{\ell\in I_n}\frac{\bP(T=\ell+1)}{\ell-1}
    =O(n^{-1/2}).
\]
Pairing consecutive terms in the alternating sum over $I_n$ leaves only endpoint terms and the above total variation of the sequence $(p_\ell)$ over $I_n$. Hence $\left|\sum_{\ell\in I_n}(-1)^{\ell-1}p_\ell\right|=O(n^{-1/2})$. Dividing by $p_I$ proves the lemma.
\end{proof}

We also use two estimates of \citet{hough2016random}, whose second estimate relies on the dimension bounds of \citet{larsen2008characters}. There are constants $\delta>0,c_1>0,c_2>0$ such that, uniformly in $n$, every non-trivial, non-sign irreducible representation $\lambda$, and every $2\le \ell\le \delta n$,
\begin{equation}\label{eq:perm-hough-ratio}
    |r_\ell(\lambda)|^{(n/\ell)(\log n+c_1)}\le d_\lambda^{-1},
\end{equation}
and
\begin{equation}\label{eq:perm-hough-dim}
    \sum_{\lambda\vdash n}d_\lambda^{-c_2/\log n}=O(1).
\end{equation}
Apart from the trivial and sign representations, every irreducible representation of $S_n$ has dimension at least $n-1$.

\begin{lemma}[Upper bound for good cycle lengths]\label{lem:perm-good-upper}
Let $P_{n,I}$ be the law $P_n$ conditioned on $\Lambda\in I_n$, with $I_n$ chosen as in Lemma \ref{lem:perm-birthday}; in particular $p_I\ge p_0>0$. For every fixed $C>0$, there is a constant $A_C<\infty$ such that
\[
    \|P_{n,I}^s-U(S_n)\|_{\TV}=O(e^{-C/2})
\]
whenever $s\ge A_C\sqrt n\log n$.
\end{lemma}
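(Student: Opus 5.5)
We start from the Plancherel bound \eqref{eq:perm-plancherel} applied to $Q=P_{n,I}$. This measure is a mixture of the uniform laws $\mu_\ell$ with weights $q_\ell=p_\ell/p_I$ supported on $\ell\in I_n$, so it is a class function. Its Fourier transform at $\lambda$ is therefore the scalar $\hat Q(\lambda)=\sum_{\ell\in I_n}q_\ell r_\ell(\lambda)$, and it suffices to bound
\[
    \sum_{\lambda\ne\mathrm{triv}} d_\lambda^2|\hat Q(\lambda)|^{2s}.
\]
We split this sum into the sign representation and all remaining non-trivial representations.

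For the sign representation, $d_{\mathrm{sgn}}=1$. Lemma~\ref{lem:perm-sign} gives $|\hat Q(\mathrm{sgn})|=O(n^{-1/2})$, and $p_I\ge p_0$ by Lemma~\ref{lem:perm-birthday}. Since $s\ge 1$, the sign contribution is $O(n^{-1})$, which is negligible.

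For a non-trivial, non-sign $\lambda$, every $\ell\in I_n$ satisfies $\ell\le a_2\sqrt n\le\delta n$ for large $n$, so \eqref{eq:perm-hough-ratio} applies. It gives
\[
    |r_\ell(\lambda)|\le d_\lambda^{-\ell/(n(\log n+c_1))}\le d_\lambda^{-a_1/(\sqrt n(\log n+c_1))},
\]
using $\ell\ge a_1\sqrt n$. The exponent bound is uniform over $\ell\in I_n$. By convexity (the triangle inequality over the mixture), $|\hat Q(\lambda)|$ obeys the same bound. Hence
\[
    d_\lambda^2|\hat Q(\lambda)|^{2s}\le d_\lambda^{\,2-2sa_1/(\sqrt n(\log n+c_1))}.
\]
Take $s\ge A_C\sqrt n\log n$ with $A_C$ large. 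The exponent $2sa_1/(\sqrt n(\log n+c_1))$ is then at least $2+K$ for any prescribed constant $K$ (for $n$ large), so the summand is at most $d_\lambda^{-K}$.

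The remaining step is to show that $\sum_{\lambda\neq\mathrm{triv},\mathrm{sgn}}d_\lambda^{-K}$ is small, of order $e^{-C}$. This is the main obstacle: \eqref{eq:perm-hough-dim} only controls $\sum_\lambda d_\lambda^{-c_2/\log n}$, a much weaker exponent, so a fixed $K$ is not directly covered. I would resolve it by using the $\log n$ slack in $s$: write $K=K'\log n$, which is achievable by enlarging $A_C$ so that the exponent is at least $2+c_2/\log n+K'$ with $K'$ growing with $A_C$. Then factor
\[
    d_\lambda^{-(c_2/\log n+K')}=d_\lambda^{-c_2/\log n}\,d_\lambda^{-K'}.
\]
Since $d_\lambda\ge n-1$ for these $\lambda$, we get $d_\lambda^{-K'}\le (n-1)^{-K'}$, and \eqref{eq:perm-hough-dim} bounds the remaining sum by $O(1)$. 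Choosing $A_C$ so that $(n-1)^{-K'}\le e^{-C}$ works uniformly in large $n$ (in fact it tends to $0$). If this factorization is too lossy, the fallback is to follow Hough's argument more closely, choosing $s=(\sqrt n/a_1)(\log n+c_1)(1+C')$ so that the ratio bound yields $d_\lambda^{-2-c_2/\log n}\cdot n^{-\Omega(C')}$, which sums as above.

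Combining the two parts gives $4\|Q^s-U\|_{\TV}^2=O(e^{-C})+O(n^{-1})$, hence $\|P_{n,I}^s-U(S_n)\|_{\TV}=O(e^{-C/2})$. Finitely many small $n$ are absorbed into the implied constant.
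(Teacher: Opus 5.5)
Your proposal is correct and follows essentially the paper's argument: Plancherel for the class function $P_{n,I}$, Lemma~\ref{lem:perm-sign} for the sign representation, Hough's ratio bound applied with $\ell\ge a_1\sqrt n$, and then splitting off $d_\lambda^{-c_2/\log n}$ so that \eqref{eq:perm-hough-dim} applies, with the remaining power of $d_\lambda$ bounded using $d_\lambda\ge n-1$. The only real difference is the choice of $s$. The paper tunes $s_0=\frac{\sqrt n}{a_1}(\log n+c_1)\bigl(1+\frac{c_2+C}{2\log n}\bigr)$, so the extra factor is $(n-1)^{-C/\log n}\approx e^{-C}$ and the threshold is $\frac{\sqrt n}{a_1}\log n+O_C(\sqrt n)$; you take a larger constant multiple and get $(n-1)^{-K'}$, which is also fine for the lemma as stated. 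Two small remarks: the ``obstacle'' you describe is not one, since a larger exponent on $d_\lambda\ge1$ only makes the terms smaller. Also, the phrase $K=K'\log n$ should simply read ``exponent at least $2+c_2/\log n+K'$ with $K'$ a constant,'' which is what you actually use.
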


\begin{proof}
For $n$ large, every $\ell\in I_n$ is at most $\delta n$. By \eqref{eq:perm-hough-ratio}, for every non-trivial, non-sign $\lambda$,
\[
    \left|\sum_{\ell\in I_n}\frac{p_\ell}{p_I}r_\ell(\lambda)\right|
    \le d_\lambda^{-a_1/(\sqrt n(\log n+c_1))}.
\]
Take
\[
    s_0=\frac{\sqrt n}{a_1}(\log n+c_1)
    \left(1+\frac{c_2+C}{2\log n}\right).
\]
Then, for all $s\ge s_0$,
\begin{align*}
    \sum_{\lambda\notin\{\mathrm{triv},\mathrm{sgn}\}}
    d_\lambda^2
    \left|\sum_{\ell\in I_n}\frac{p_\ell}{p_I}r_\ell(\lambda)\right|^{2s}
    &\le \sum_{\lambda\notin\{\mathrm{triv},\mathrm{sgn}\}}
    d_\lambda^{-(c_2+C)/\log n} \\
    &\le (n-1)^{-C/\log n}
    \sum_{\lambda\vdash n}d_\lambda^{-c_2/\log n}
    =O(e^{-C}).
\end{align*}
By Lemma \ref{lem:perm-sign}, the sign-representation contribution in \eqref{eq:perm-plancherel} is at most $O(n^{-1/2})^{2s}=o(e^{-C})$. Applying \eqref{eq:perm-plancherel} proves the result.
\end{proof}

\begin{lemma}[Upper bound for the original walk]\label{lem:perm-upper}
For every fixed $C>0$, there is a constant $B_C<\infty$ such that
\[
    \|P_n^t-U(S_n)\|_{\TV}=O(e^{-C/2})
\]
whenever $t\ge B_C\sqrt n\log n$.
\end{lemma}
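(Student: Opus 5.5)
The plan is to decompose $P_n$ as a mixture of the good-length law $P_{n,I}$ and a remainder, and then transfer Lemma \ref{lem:perm-good-upper} to the full walk. Write
\[
    P_n=p_I P_{n,I}+(1-p_I)Q_n,
\]
where $Q_n$ is $P_n$ conditioned on $\Lambda\notin I_n$. All three measures are class functions on $S_n$, so they commute under convolution. Expanding $P_n^t$ binomially gives
\[
    P_n^t=\sum_{s=0}^t \binom{t}{s}p_I^s(1-p_I)^{t-s}\,P_{n,I}^s * Q_n^{t-s}.
\]
Let $S\sim\Binom(t,p_I)$. Convolving with the probability measure $Q_n^{t-s}$ does not increase the total-variation distance to the uniform law $U(S_n)$, because $U(S_n)*Q=U(S_n)$ and convolution is a contraction in total variation. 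Convexity of the total-variation distance then yields
\[
    \|P_n^t-U(S_n)\|_{\TV}\le \bP\bigl(S<A_C\sqrt n\log n\bigr)+\sup_{s\ge A_C\sqrt n\log n}\|P_{n,I}^s-U(S_n)\|_{\TV}.
\]

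The second term is $O(e^{-C/2})$ by Lemma \ref{lem:perm-good-upper}. For the first term, I will take $B_C=2A_C/p_0$, using $p_I\ge p_0$ from Lemma \ref{lem:perm-birthday}. Then $\bE(S)\ge p_0 t\ge 2A_C\sqrt n\log n$, and a Chernoff bound for the binomial gives
\[
    \bP\bigl(S<\bE(S)/2\bigr)\le \exp\bigl(-\bE(S)/8\bigr)\le \exp\bigl(-A_C\sqrt n\log n/4\bigr).
\]
This is $o(e^{-C/2})$, and for bounded $n$ it is absorbed into the implied constant.

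The step that needs care is the monotonicity claim. Specifically, one must check that the bound from Lemma \ref{lem:perm-good-upper} holds for every $s\ge A_C\sqrt n\log n$, not just at one threshold. The lemma is stated in that form, and in any case the distance to stationarity is nonincreasing in $s$ after further convolution. One must also check that $Q_n$ is well defined when $p_I=1$; in that case the mixture is trivial.

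No parity issue arises, because the sign representation was already handled inside Lemma \ref{lem:perm-good-upper} via Lemma \ref{lem:perm-sign}.
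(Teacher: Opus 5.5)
Your proof is correct and follows the paper's argument: you split $P_n=p_IP_{n,I}+(1-p_I)P_{n,I^c}$, count the good factors with a binomial variable, use convexity of total variation, and apply a Chernoff bound once $t$ is a constant multiple of the good-length threshold divided by $p_0$. The only difference is how you discard the bad factors: you use the contraction of total variation under convolution by a probability measure (with $U(S_n)*Q_n=U(S_n)$), whereas the paper notes that the Fourier coefficients of $P_{n,I^c}$ have modulus at most one in the Plancherel bound; both are valid, and yours has the minor advantage of using Lemma \ref{lem:perm-good-upper} purely as a black box.
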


\begin{proof}
Write
\[
    P_n=p_I P_{n,I}+(1-p_I)P_{n,I^c},
\]
where $p_I\ge p_0>0$. In $t$ steps, let $G_t$ be the number of times the first component $P_{n,I}$ is selected. Then $G_t\sim\Binom(t,p_I)$. Fix a word of length $t$ containing $s$ copies of $P_{n,I}$. Since both $P_{n,I}$ and $P_{n,I^c}$ are class functions, all Fourier transforms are scalar, and the same Plancherel bound as above gives a TV bound no larger than the corresponding bound for $P_{n,I}^s$; the factors from $P_{n,I^c}$ have absolute value at most one. Hence, by convexity of TV distance,
\[
    \|P_n^t-U(S_n)\|_{\TV}
    \le \bP(G_t<s_0)+O(e^{-C/2}),
\]
where $s_0$ is the threshold in the proof of Lemma \ref{lem:perm-good-upper}. If $t\ge 2s_0/p_0$, then $\bE (G_t)\ge 2s_0$, and Chernoff's bound gives $\bP(G_t<s_0)\le \exp(-s_0/4)=o(e^{-C/2})$ for fixed $C$. This proves the upper bound.
\end{proof}

\begin{proof}[Proof of Theorem \ref{thm:perm-mixing}]
Lemma \ref{lem:perm-lower} gives $t_{\mathrm{mix}}^{\mathrm{perm}}(1/4)\ge c\sqrt n\log n$ for a positive constant $c$. Lemma \ref{lem:perm-upper}, with $C$ chosen sufficiently large, gives $t_{\mathrm{mix}}^{\mathrm{perm}}(1/4)\le C'\sqrt n\log n$ for a finite constant $C'$. Therefore $t_{\mathrm{mix}}^{\mathrm{perm}}(1/4)=\Theta(\sqrt n\log n)$.

The extension from the threshold $1/4$ to any fixed $0<\epsilon<1/2$ follows from the standard equivalence of fixed total-variation thresholds for finite Markov chains; see, for example, the submultiplicativity inequalities in \cite[Chapter~4]{LP}. This completes the proof.
\end{proof}

\subsection{Proof of Theorem \ref{thm: rapid mixing}}\label{app:proof-rapid-mixing}
\begin{proof}
If $|\Sigma(\row,\col)|=1$, the result is immediate under the singleton spectral-gap convention. Hence assume $|\Sigma(\row,\col)|\ge2$.
Let $P_{\swap}$ and $P_\snake$ be the raw kernels, let $P_{\swap,L}$ and $P_{\snake,L}$ be their lazy versions, put
\[
\mathcal M=\binom{m}{2}\binom{n}{2},
\qquad
\eta_{m,n}=\frac{1}{mn\max\{m,n\}^3}.
\]
If $A$ and $B$ differ by one checkerboard swap, then the \snake chain can move from $A$ to $B$ by selecting one corner of the checkerboard as the starting entry and then selecting the remaining three corners in their alternating order. The starting entry has probability $1/(mn)$, and each of the three conditional choices has probability at least $1/\max\{m,n\}$. Hence
\[
P_{\snake,L}(A,B)\ge \frac{\eta_{m,n}}{2}.
\]
On the other hand, the lazy swap chain chooses this particular row pair and column pair and takes its non-lazy half-step, so
\[
P_{\swap,L}(A,B)=\frac{1}{2\mathcal M}.
\]
It follows that, for every off-diagonal pair,
\[
P_{\snake,L}(A,B)\ge
\eta_{m,n}\mathcal M\,P_{\swap,L}(A,B);
\]
when $A$ and $B$ are not swap-adjacent, the right-hand side is zero.

Both lazy chains are reversible with respect to the uniform distribution on $\Sigma(\row,\col)$. Hence their Dirichlet forms satisfy
\[
\mathcal E_{\snake,L}(h,h)
=\frac{1}{2}\sum_{A,B}U(A)P_{\snake,L}(A,B)(h(A)-h(B))^2
\ge \eta_{m,n}\mathcal M\,\mathcal E_{\swap,L}(h,h)
\]
for every real-valued function $h$ on $\Sigma(\row,\col)$. It follows from the variational formula for the spectral gap that
\[
\gap(P_{\snake,L})\ge
\eta_{m,n}\mathcal M\,\gap(P_{\swap,L}).
\]
By the universal swap-chain estimate of \citet{fu2026spectral},
\[
    \gap(P_{\swap,L})\ge \mathcal M^{-1}.
\]
Therefore $\gap(P_{\snake,L})\ge\eta_{m,n}$, proving the claimed spectral-gap bound. Finally, the standard reversible-chain inequality
\[
t_{\mix}^{\snake}(\epsilon)
\leq \gap(P_{\snake,L})^{-1}
\left(\log \frac{1}{\pi_{\min}}+\log\epsilon^{-1}\right)
\]
and the uniform stationary probabilities
$\pi_{\min}=|\Sigma(\row,\col)|^{-1}$ give the first mixing-time bound. The second follows from
$|\Sigma(\row,\col)|\leq 2^{mn}$.
\end{proof}

	\newpage
\setstretch{1.24}

\end{document}